\newcommand{\kommentar}[1]{}
\newcommand{\Oh}{\ensuremath{\mathcal{O}}}
\newcommand{\proofpara}[1]{\medskip\noindent\textit{#1.}}
\DeclareMathOperator{\off}{off}
\DeclareMathOperator{\desc}{desc}
\DeclareMathOperator{\anc}{anc}
\DeclareMathOperator{\DP}{DP}
\DeclareMathOperator{\var}{var}
\newcommand{\Dbar}{{\ensuremath{\overline{D}}}\xspace}
\newcommand{\w}{{\ensuremath{\omega}}\xspace}
\newcommand{\yes}{{\normalfont\texttt{yes}}\xspace}
\newcommand{\no}{{\normalfont\texttt{no}}\xspace}
\newcommand{\Wh}[1]{{\normalfont\texttt{{W[#1]}}}\xspace}
\newcommand{\NP}{{\normalfont\texttt{NP}}\xspace}
\newcommand{\FPT}{{\normalfont\texttt{FPT}}\xspace}
\newcommand{\SETH}{{\normalfont\texttt{SETH}}\xspace}
\newcommand{\ETH}{{\normalfont\texttt{ETH}}\xspace}
\newcommand{\XP}{{\normalfont\texttt{XP}}\xspace}
\newcommand{\Instance}{{\ensuremath{\mathcal{I}}}\xspace}
\newcommand{\Tree}{{\ensuremath{\mathcal{T}}}\xspace}
\newcommand{\PDgiveTree}[1]{{\ensuremath{PD_{#1}}}\xspace}
\newcommand{\PD}{\PDgiveTree{\Tree}}
\newcommand{\problemdef}[3]{
	\smallskip
	
	\noindent \normalsize\textsc{#1} \smallskip \\ \nopagebreak[4]
	\begin{tabularx}{\textwidth}{@{}l@{\hspace{3pt}}X}
		\normalsize\textbf{Input:}    & \normalsize#2 \\
		\normalsize\textbf{Question:} & \normalsize#3
	\end{tabularx}
	\smallskip
}
\newcommand{\PROB}[1]{{{\normalfont\textsc{#1}}}\xspace}
\newcommand{\KP}{\PROB{Knapsack}}
\newcommand{\SubSum}{\PROB{Subset Sum}}
\newcommand{\MPDlong}{\PROB{Maximize Phylogenetic Diversity}}
\newcommand{\MPD}{\PROB{MPD}}
\newcommand{\Hbar}[1]{{\ensuremath{\overline{H_{#1}}}}\xspace}
\newcommand{\Pbar}{{\ensuremath{\overline{P}}}\xspace}
\newcommand{\Cbar}{{\ensuremath{\overline{C}}}\xspace}
\newcommand{\tPDwslong}{\PROB{Strict Time Sensitive Maximization of Phylogenetic Diversity}}
\newcommand{\tPDslong}{\PROB{Time Sensitive Maximization of Phylogenetic Diversity}}
\newcommand{\tPDws}{\PROB{\mbox{s-Time-PD}}}
\newcommand{\tPDs}{\PROB{\mbox{Time-PD}}}
\newcommand{\ctPDws}{\PROB{\mbox{c-\tPDws}}}
\newcommand{\ctPDs}{\PROB{\mbox{c-\tPDs}}}
\newcommand{\cBartPDslong}[1]{\PROB{extinction $#1$-colored \tPDslong}}
\newcommand{\cBartPDs}[1]{\PROB{\mbox{ex-$#1$-c-\tPDs}}}
\DeclareMathOperator{\ex}{ex}
\newcommand{\Pvx}{\ensuremath{P_{v,x}}\xspace}
\newcommand{\mcA}{\ensuremath{\mathcal A}\xspace}
\newtheorem{lemma}{Lemma}
\newtheorem{theorem}{Theorem}
\newtheorem{proposition}{Proposition}
\newtheorem{definition}{Definition}
\newenvironment{proofm}[1]
  {\noindent \textbf{Proof ({\normalfont #1}).}}
  {\hfill$\square$ \\ \goodbreak}
\renewenvironment{proof}
  {\noindent \textbf{Proof.}}
  {\hfill$\square$ \\ \goodbreak}
\newcommand\proofpart[1]{

\smallskip #1\\ \noindent}
\newcommand\mails{Electronic Adresses: \texttt{m.e.l.jones(at)tudelft.nl} and \texttt{j.t.schestag(at)uni-jena.de}}
\newcommand\orcidID[1]{[#1]}
\definecolor{orcidlogocol}{HTML}{A6CE39}
\begin{document}

\title{Maximizing Phylogenetic Diversity under Time Pressure: Planning with Extinctions Ahead}

\author[1]{Mark Jones
	\footnote{Partially supported by Netherlands Organisation for Scientific Research~(NWO) grant OCENW.KLEIN.125.
		Orcid: \href{https://orcid.org/0000-0002-4091-7089}{\orcidID{0000-0002-4091-7089}}}${}^,$}

\author[1,2]{Jannik Schestag
	\footnote{The research was partially carried out during an extended research visit of Jannik Schestag at TU Delft, The Netherlands. We thank the German Academic Exchange Service (DAAD), project no. 57556279, for the funding.
		Orcid: \href{https://orcid.org/0000-0001-7767-2970}{\orcidID{0000-0001-7767-2970}}\\
	\mails}${}^,$}

\affil[1]{TU Delft, Faculteit Elektrotechniek, Wiskunde en Informatica, Mekelweg 4, 2628 CD Delft, The Netherlands}

\affil[2]{Friedrich-Schiller-Universität Jena, Fachbereich Mathematik und Informatik, Ernst-Abbe-Platz 2, 07743 Jena, Germany}

\date{\today}

\maketitle

\begin{abstract}
	Phylogenetic Diversity (PD) is a measure of the overall biodiversity of a set of present-day species~(taxa) within a phylogenetic tree.
	In \MPDlong (\MPD) one is asked to find a set of taxa (of bounded size/cost) for which this measure is maximized.
	\MPD is a relevant problem in conservation planning, where there are not enough resources to preserve all taxa and minimizing the overall loss of biodiversity is critical.
	We consider an extension of this problem, motivated by real-world concerns, in which each taxon  not only requires a certain amount of time to save, but also has an extinction time after which it can no longer be saved. 
	In addition there may be multiple teams available to work on preservation efforts in parallel;
	we consider two variants of the problem based on whether teams are allowed to collaborate on the same taxa.
	These problems have much in common with machine scheduling problems, (with taxa corresponding to tasks and teams corresponding to machines), but with the objective function (the phylogenetic diversity) inspired by biological considerations.
	Our extensions are, in contrast to the original \MPD, \NP-hard, even in very restricted cases.
	We provide several algorithms and hardness-results and thereby show that the problems are fixed-parameter tractable (\FPT) when parameterized the target phylogenetic diversity, and that the problem where teams are allowed to collaborate is \FPT when parameterized the acceptable loss of diversity.
\end{abstract}

\newpage
\section{Introduction}
The 2023 report of the Intergovernmental Panel on Climate Change (IPCC)~\cite[Page~73]{ipcc2023} warns that as the average global temperature increases, thousands of plant and animal species around the world face increasing risk of extinction.
With a ``climate breakdown''~\cite{climatebreakdown} on the horizon, the remaining time for intervention is rapidly running out. 
In order to maximize the effects of conservation efforts and preserve as much biodiversity as possible, efficient schedules for operating teams are needed.

In this paper we consider an extension of the classic \MPDlong (\MPD) problem, in which species~(taxa) have differing \emph{extinction times}, after which they will die out if they have not been saved.
Phylogenetic diversity, first introduced in 1992 by Faith~\cite{FAITH1992} is a measure of the amount of biodiversity within a set of taxa. 
Given a (rooted) phylogenetic tree whose leaves are labeled with elements from a set of taxa $X$, and whose edges are weighted with an integer value representing the evolutionary distance between two taxa, the \emph{phylogenetic diversity} of any subset $A \subseteq X$ is the total weight of all edges on a path from the root to any leaf in $A$.
Intuitively, the phylogenetic diversity measure captures the expected range of biological features of a given set of taxa.
This measure forms the basis of the Fair Proportion Index and the Shapley Value~\cite{haake2008shapley,Hartmann2013TheEO,redding2006incorporating}, which are used to evaluate the individual contribution of individual taxa to overall biodiversity. 

In~\MPD, we are given a phylogenetic tree $\Tree$ and integers $k$, and~$D$ as input, and we are asked if there exists a subset of $k$ taxa whose phylogenetic diversity is at least $D$.
\MPD is polynomial-time solvable by a greedy algorithm~\cite{steel,Pardi2005}.
However if each taxon has an associated integer cost,
the problem becomes \NP-hard~\cite{ResourceAwarePD}.

In this extension of \MPD, the cost of a taxon may, just as well as financial or space capacities, be used to represent the fact that different taxa may take different amounts of time to save from extinction.
However, to the best of our knowledge, previously studied versions of \MPD do not take into account that different taxa may have different amounts of \emph{remaining} time before extinction.
Thus, to ensure that a set of taxa can be saved with the available resources, it is not enough to guarantee that their total cost is below a certain threshold.
One also needs to ensure that there is a schedule under which each taxon is saved before its moment of extinction.

In this paper we take a first step to addressing this issue. 
We introduce two extensions of~\MPD, denoted \tPDslong(\tPDs) and \tPDwslong (\tPDws), in which each taxon has an associated \emph{rescue length} (the amount of time it takes to save them) and also an \emph{extinction time} (the time after which the taxon can not be saved anymore). In each problem there is a set of available teams who can work towards saving the taxa; under \tPDs different teams may collaborate on saving a taxon while in \tPDws they may not. (See~\Cref{sec:prelim} for formal definitions).

These problems have much in common with machine scheduling problems, insofar as we may think of the taxa as corresponding to jobs with a certain due date and the teams as corresponding to machines. One may think of \tPDs / \tPDws as machine scheduling problems, in which the objective to be maximized is the phylogenetic diversity (as determined by the input tree $\Tree$) of the set of completed tasks (which are the saved taxa).

Both problems turn out to be \NP-hard; this result is perhaps unsurprising given their close relation to scheduling problems, but stands in contrast to the classic \MPD which can be solved greedily in polynomial time.
Given the \NP-hardness of our versions of \MPD, we turn to parameterized complexity and consider how the complexity of \tPDs and \tPDws vary with respect to various parameters of the input.
In particular we show that both problems are fixed-parameter tractable (\FPT) with respect to the desired phylogenetic diversity $D$, and with respect to the acceptable amount of diversity loss (\Dbar).
Both problems are also \FPT with respect to the available person-hours $H_{\max_{\ex}}$.
We note that the available person-hours are an upper bounded for the number of teams $|T|$ and the maximum extinction time $\max_{\ex}$ of any taxa. Further, $H_{\max_{\ex}}$ itself is bounded in $|T| \cdot \max_{\ex}$.
(We note that the problems remain \NP-hard even if $|T|=1$; for $\max_{\ex}$ the parameterized complexity remains open).
As well as our \FPT results, 
we prove a number of \XP and pseudo-polynomial time results.
See \Cref{tab:results} for a full list of results.

\paragraph{Structure of the paper}
In \Cref{sec:prelim} we formally define the problems \tPDs and \tPDws, give an overview of previous results and our contribution, and prove some simple intial results.
In \Cref{sec:D} we introduce an \FPT algorithm for \tPDs and \tPDws parameterized by the target  $D$, and in \Cref{sec:Dbar} we give an \FPT algorithm for \tPDs parameterized by the acceptable loss of diversity $\Dbar$.
In \Cref{sec:other} we prove a number of other parameterized algorithms.
Finally, in \Cref{sec:discussion} we discuss open problems and future research.

\section{Preliminaries}
\label{sec:prelim}

\subsection{Definitions}
For a positive integer $a$
we denote by $[a]$ the set $\{1,2,\dots,a\}$, and
by $[a]_0$ the set $\{0\}\cup [a]$.
We generalize functions $f:A\to B$ to handle subsets $A'\subseteq A$ by $f(A') = \bigcup_{a\in A'} f(a)$ or $f(A') = \sum_{a\in A'} f(a)$ depending on the context.

Throughout the paper we use $\mathcal O^*$ to describe the running times of algorithms in which we omit factors polynomial in the input size.

\paragraph*{Phylogenetic Trees and Phylogenetic Diversity.}
For a given set $X$, a \textit{phylogenetic $X$-tree~$\Tree=(V,E,\w)$} is a connected cycle-free directed graph with \textit{edge-weight} function~$\w: E\to \mathbb{N}_{>0}$ and a single vertex of no incoming edges and out-degree two (the \textit{root}, usually denoted with $\rho$), in which the \textit{leaves}, that are the vertices with no outgoing edges, have in-degree one and are bijectively labeled with elements from a set $X$, and such that all other vertices have in-degree two and out-degree at least two.
In biological applications, the set $X$ is a set of \textit{taxa} (species), the internal vertices of~$\Tree$ correspond to biological ancestors of these taxa and~$\w(e)$ describes the phylogenetic distance between the endpoints of~$e$ (as these endpoints correspond to distinct taxa, we may assume this distance is greater than zero).

We refer to the non-leaf vertices of a tree as the \emph{internal vertices}.
For a directed edge $uv \in E$, we say $u$ is the \emph{parent} of $v$ and $v$ is a \emph{child} of $u$.
If there is a directed path from $u$ to $v$ in $\Tree$ (including when $u=v$ or when $u$ is a child of $v$), we say that $u$ is an \emph{ancestor} of $v$ and $v$ is a \emph{descendant}. If in addition $u \neq v$ we say $u$ is a \emph{strict ancestor} of $v$ and $v$ a \emph{strict descendant} of $u$.
The sets of ancestors and descendants of $v$ are denoted $\anc(v)$ and $\desc(v)$, respectively.
The \emph{offspring} $\off(v)$ of a vertex $v$ is the set of all descendants of $v$ in $X$.
Generalized we denote $\off(e) = \off(v)$ for an edge $e = uv \in E$.
For two edges $e$ and $\hat e$ incident to the same vertex $v$, we say $\hat e$ is the \emph{parent-edge} of $e$ if $\hat e = uv$ and $e = vw$. If $e = vu$ and $\hat e = vw$, we say $\hat e$ is a \emph{sibling-edge} of $e$.

Given a subset of taxa $A \subseteq X$,
let $E_{\Tree}(A)$ denote the set of edges in $e \in E$ with $\off(e)\cap A \neq \emptyset$.
The \emph{phylogenetic diversity} $\PD(A)$ of $A$ is defined by 

	\begin{equation}
		\label{eqn:PDdef}
		\PD(A) := \sum_{e \in E_{\Tree}(A)}\w(e).
	\end{equation}


\paragraph*{Problem Definitions and Parameterizations.}
In the following,
we are given a set of taxa $X$ and a phylogenetic $X$-tree $\Tree$, which will be used to calculate the phylogenetic diversity of any subset of taxa $A \subseteq X$.
In addition, for each taxon $x \in X$ we are given an integer \emph{extinction time $\ex(x)$} representing the amount of time remaining to save that taxon before it goes extinct, and an integer \emph{rescue length} $\ell(x)$ representing how much time needs to be spent in order to save that taxon. 
Thus, we need to spend $\ell(x)$ units of time before $\ex(x)$ units of time have elapsed, if we wish to save $x$ from extinction.

In addition, we are given a set of teams  $T = \{t_1,\dots, t_{|T|}\}$ that are available to work on saving taxa, where each team $t_i$ is represented by a pair of integers $(s_i,e_i)$ with $0 \leq s_i < e_i$.
Here $s_i$ and $e_i$ represent the starting time and the ending time of team $t_i$, respectively.
Thus, team $t_i$ is available for $(e_i-s_i)$~units of time, starting after $s_i$ time steps.
For convenience, we refer to the units of time in this paper as \emph{person-hours}, although of course in practice the units of time may be days, weeks or months.
We define $\mathcal{H}_T : = \{(i,j) \in T\times \mathbb{N} \mid t_i \in T, s_i < j \leq e_i\}$.
That is, $\mathcal{H}_T$ is the set of all pairs $(i,j)$ where team $t_i$ is available to work in \emph{timeslot} $j$.
For $j \in [\var_{\ex}]$, we define $H_j: = |\{(i,j')\in \mathcal{H}_T \mid j' \le \ex_j\}|$. That is, $H_j$ is the number of person-hours available until time $\ex_j$. 

Then for any subset of taxa $A \subseteq X$, a \emph{$T$-schedule} is a function $f:\mathcal{H}_T\rightarrow A\cup\{\textsc{none}\}$, that is, a function where each available timeslot for each team is mapped to a taxon in $A$ (or $\textsc{none}$).
Intuitively, $f$ shows which teams will work to save which taxa in $A$, and at which times. See \Cref{fig:example-scheduling} for an example.
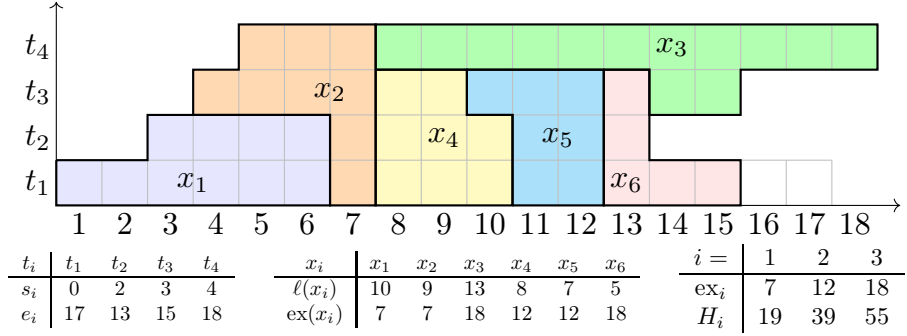
\begin{figure}[t]
	\centering
	\begin{tikzpicture}[scale=0.6,every node/.style={scale=1.2}]
		\tikzstyle{txt}=[circle,fill=none,draw=none,inner sep=0pt]
		
		\fill[blue!10] (0,0) rectangle (6,1);
		\fill[blue!10] (2,1) rectangle (6,2);
		\node[txt] at (3,0.5) {$x_1$};
		
		\fill[orange!30] (3,2) rectangle (8,3);
		\fill[orange!30] (4,3) rectangle (7,4);
		\fill[orange!30] (6,0) rectangle (8,2);
		\node[txt] at (6,2.5) {$x_2$};
		
		\fill[green!30] (7,4) rectangle (18,3);
		\fill[green!30] (13,2) rectangle (15,3);
		\node[txt] at (13.5,3.5) {$x_3$};
		
		\fill[yellow!30] (7,0) rectangle (9,3);
		\fill[yellow!30] (9,0) rectangle (10,2);
		\node[txt] at (8.5,1.5) {$x_4$};
		
		\fill[cyan!30] (9,2) rectangle (12,3);
		\fill[cyan!30] (10,0) rectangle (12,2);
		\node[txt] at (11,1.5) {$x_5$};
		
		\fill[red!10] (12,0) rectangle (15,1);
		\fill[red!10] (12,1) rectangle (13,3);
		\node[txt] at (12.5,0.5) {$x_6$};

		\draw[gray!50] (0,1) -- (17,1);
		\draw[gray!50] (2,2) -- (15,2);
		\draw[gray!50] (3,3) -- (18,3);
		\draw[gray!50] (4,4) -- (18,4);
		
		\foreach \i in {1,...,3}
		\draw[gray!50] (\i,0) -- (\i,\i);
		
		\foreach \i in {4,...,13}
		\draw[gray!50] (\i,0) -- (\i,4);
		
		\foreach \i in {14,...,17}
		\draw[gray!50] (\i,0) -- (\i,1);

		\foreach \i in {14,15}
		\draw[gray!50] (\i,2) -- (\i,3);

		\foreach \i in {14,...,18}
		\draw[gray!50] (\i,3) -- (\i,4);
		
		\foreach \i in {1,...,18}
		\node[txt] at (\i-.5,-.4) {\i};
		
		\foreach \i in {1,...,4}
		\node[txt] at (-.4,\i-.5) {$t_\i$};
		
		\draw[->] (0,0) -- (0,4.5);
		\draw[->] (0,0) -- (18.5,0);
		
		\draw[thick] (0,1) -- (2,1) -- (2,2) -- (6,2) -- (6,0) -- (0,0) -- (0,1);
		\draw[thick] (3,2) -- (3,3) -- (4,3) -- (4,4) -- (7,4) -- (7,0) -- (6,0) -- (6,2) -- (3,2);
		\draw[thick] (7,3) -- (7,4) -- (18,4) -- (18,3) -- (15,3) -- (15,2) -- (13,2) -- (13,3) -- (7,3);
		\draw[thick] (7,3) -- (7,0) -- (10,0) -- (10,2) -- (9,2) -- (9,3) -- (7,3);
		\draw[thick] (10,0) -- (10,2) -- (9,2) -- (9,3) -- (12,3) -- (12,0) -- (10,0);
		\draw[thick] (12,3) -- (12,0) -- (15,0) -- (15,1) -- (13,1) -- (13,3) -- (12,3);
	\end{tikzpicture}
	\resizebox{.25\columnwidth}{!}{%
	\begin{tabular}{c|cccc}
	$t_i$ & $t_1$ & $t_2$ & $t_3$ & $t_4$\\
	\hline
	$s_i$ & 0 & 2 & 3 & 4\\
	$e_i$ & 17 & 13 & 15 & 18
	\end{tabular}
	}
	\;\;
	\resizebox{.4\columnwidth}{!}{%
	\begin{tabular}{c|cccccc}
	$x_i$ & $x_1$ & $x_2$ & $x_3$ & $x_4$ & $x_5$ & $x_6$\\
	\hline
	$\ell(x_i)$ & 10 & 9 & 13 & 8 & 7 & 5\\
	$\ex(x_i)$ & 7 & 7 & 18 & 12 & 12 & 18
	\end{tabular}
	}
	\;\;
	\resizebox{.25\columnwidth}{!}{%
	\begin{tabular}{c|ccc}
	$i=$ & $1$ & $2$ & $3$\\
	\hline
	$\ex_i$ & 7 & 12 & 18\\
	$H_i$ & 19 & 39 & 55
	\end{tabular}
	}
	\caption{This is a hypothetical valid schedule saving the set of taxa $\{x_1,\dots,x_6\}$.
		Each cube marks a tuple $(i,j) \in \mathcal{H}_T$.
		}
	\label{fig:example-scheduling}
\end{figure}%

We say $f$ is \emph{valid} if $\ex(x) \geq j$ for each $x \in A$, and $(i,j) \in \mathcal{H}$ with $f((i,j)) = x$.
That is, $f$ does not assign a team to work on taxon $x$ after its extinction time.
We say that $f$ \emph{saves $A$} if $f$ is valid and $|f^{-1}(x)| \geq \ell(x)$ for all $x \in A$.
That is, a schedule $f$ saves $A \subseteq X$ if every taxon $x \in A$ has at least $\ell(x)$ person-hours assigned to $x$ by its extinction time $\ex(x)$.

The definition of a valid schedule allows for several teams to be assigned to the same taxon at the same time, so that for example the task of preserving a taxon can be done in half the time by using twice the number of teams. Whether this is realistic or not depends on the nature of the tasks involved in preservation. For instance the task of preparing a new enclosure for animals might be completed faster by several teams working in parallel, whilst the task of rearing infants to adulthood cannot be sped up in the same way. 
Due to these concerns, we will consider two variations of the problem, differentiating in whether a schedule must be \emph{strict}.


We say a $T$-schedule $f$ saving $A$ is \emph{strict} if $|\{ i \mid f((i,j)) = x \}| = 1$ for each $x \in A$.
That is, there is only one team $t_i$ assigned to save each taxon $x$.
We note without proof that we may assume $f^{-1}(x) = \{(i,j+1), (i,j+2), \dots, (i,j+\ell(x)\}$ for some $j \ge s_i$---that is, once started, team $t_i$ continuously works on $x$.
We note that in a non-strict schedule $f$, multiple teams may work on the same taxa $x$ at once---it is even possible that $f(i,j) = f(i',j) = x$ for $i\neq i$.

Formally, the problems we regard in this paper are as follows.
\problemdef{\tPDslong (\tPDs)}{
A directed $X$-tree $\Tree = (V,E,\w)$, integers $\ex(x)$ and $\ell(x)$ for each $x\in X$, a set of teams $T$, and a target diversity $D\in \mathbb{N}$.
}{
Is there a valid $T$-schedule saving $A$, for some $A\subseteq X$ such that $PD_\Tree(A)\ge D$?
}

The problem \tPDws is the same as \tPDs, except for the restriction that the valid $T$-schedule should be strict.

\problemdef{\tPDwslong\\ (\tPDws)}{
A directed $X$-tree $\Tree = (V,E,\w)$, integers $\ex(x)$ and $\ell(x)$ for each $x\in X$, a set of teams $T$, and a target diversity $D\in \mathbb{N}$.
}{
Is there a strict valid $T$-schedule saving $A$, for some $A\subseteq X$ such that $PD_\Tree(A)\ge D$?
}

Observe that if there is only one team every valid schedule is also strict. Thus, an instance $\Instance = (\Tree, \ell, \ex, T, D)$ with $|T|=1$ is a \yes-instance of \tPDs if and only if it is a \yes-instance of \tPDws.
\Cref{lem:scheduleCondition} and \ref{lem:strictScheduleCondition} elaborate on the conditions in these questions in more detail.

\paragraph{Parameterized Complexity.}

Throughout this paper we consider a number of parameterizations of \tPDs and \tPDws. 
For a detailed introduction to parameterized complexity refer to the standard monographs~\cite{cygan,downeybook}; we only give a brief overview here.

A parameterization of a problem $\Pi$ associates with each input instance $\Instance$ of $\Pi$ the size $k$ of a specific \emph{parameter}.
A parameterized problem $\Pi$ is \emph{fixed-parameter tractable} (\FPT), or \emph{slice-wise polynomial} (\XP), with respect to some parameter if there exists an algorithm solving every instance $(\Instance,k)$ of $\Pi$ in time $f(k)\cdot |\Instance|^{\Oh(1)}$, or in time $|\Instance|^{f(k)}$, respectively.
Here in both cases, $f$ is some computable function only depending on $k$.
Parameterized problems that are \emph{\Wh{1}-hard} are believed not to be \FPT.
%
For convenience reasons, the word parameterized is usually omitted.



\paragraph{Additional definitions.}
We now introduce some additional definitions that will be helpful for the parameterizations and proofs that follow.

Given an instance $\Instance$ of \tPDs or \tPDws with target diversity $D$, we define $\Dbar : = \PD(X)-D = \sum_{e \in E}\w(e) - D$. Thus $\Dbar$ is the acceptable loss of diversity---if we save a set of taxa $A \subseteq X$ with $\PD(A)\geq D$, then the amount of diversity we lose from $\Tree$ as a whole is at most $\Dbar$.

Let $\var_{\ex} := |\{\ex(x) \mid x \in X\}|$ and $\max_{\ex} := \max\{\ex(x) \mid x\in X\}$. That is, $\var_{\ex}$ is the number of different extinction times for taxa in $X$, and $\max_{\ex}$ is the latest extinction time. 
Let $\ex_1 < \ex_2 < \dots < \ex_{\var_{\ex}} = \max_{\ex}$ denote the elements of~$\ex(X)$.
For each $j\in[\var_{\ex}]$ the \emph{class $Y_j\subseteq X$} is the set of taxa $x$ with $\ex(x)=\ex_j$ and we define $Z_j = Y_1 \cup \dots \cup Y_j$.
Further, we define $\ex^*(x) = j$ for each $x\in Y_j$.

We define $\Hbar j$ to be $\sum_{x\in Z_j} \ell(x) - H_j$ for each $j\in [\var_{\ex}]$. That is,  $\Hbar j$ is the difference between the number of person-hours needed to save all taxa in $Z_j$, and the number of person-hours available to save them.

Along similar lines to $\var_{\ex}$ and $\max_{\ex}$, we define $\var_\ell:= |\{\ell(x) \mid x\in X\}|$ and $\max_\ell:= |\{\ell(x) \mid x\in X\}|$. That is, $\var_\ell$ is the number of different rescue lengths for taxa in $X$, and $\max_\ell$ is the largest rescue length.  We let $\ell_1 < \ell_2 < \dots < \ell_{\var_\ell} = \max_\ell$ denote the elements of $\{\ell(x) \mid x \in X\}$.


\subsection{Related Work}

\paragraph*{Phylogenetic Diversity}
In the classic \MPD, we are given a directed $X$-tree $\Tree = (V,E,\w)$ and two positive integers $k$ and $D$ as input.
Here, $k$ stands for a solution size and $D$ for a target budget.
It is asked if there exists a set $|A|\subseteq X$ of size at most $k$ such that $PD_\Tree(A)\ge D$.
\MPD is polynomial-time solvable---a greedy algorithm was developed in~\cite{steel}, and shown to always return the optimal solution in \cite{Pardi2005}.
Even in practice, instances of \MPD with thousands of taxa can be solved within a few seconds~\cite{MinhPDinSeconds}.

A natural extension is to assign each taxon an integer cost, and replace the requirement that $|A|\leq k$ with the requirement that the total cost of a set of taxa $A$ is at most some budget $B$.
This version of the problem is \NP-hard, though there exists a pseudo-polynomial time algorithm~\cite{ResourceAwarePD}.
In some models this is further extended. Here, one can even choose between several projects for a taxon that vary in cost and success-probability~\cite{billionnet13,billionnet17,GNAP}.

\MPD has also been studied in cases where all taxa associated with an area are saved together~\cite{rodrigues2002,MoultonSempleSteel,rodrigues_brooks_gaston_2005}, and in cases when one must consider the phylogenetic diversity measure from several trees~\cite{Bordewich2Trees,SNM2009}.
Many of these variations of \MPD have a polynomial time $(1 + 1/e)$ approximation algorithm~\cite{BudgetedNatureReserveSelection}.
Extensions of phylogenetic diversity to phylogenetic networks have also been considered, both for splits systems~\cite{SNM2009,PDSplitSystems,CircularSplitSystems}
and more recently for explicit phylogenetic networks~\cite{WickeFischer2018,bordewich,MAPPD}.

%
%
%


\paragraph*{Scheduling}
Scheduling problems are denoted by a three-field notation introduced by Graham~et~al.~\cite{graham1979}.
Herein, many problems are written as a triple $\alpha|\beta|\gamma$, where $\alpha$ is the machine environment, $\beta$ are job characteristics and scheduling constraints, and $\gamma$ is the objective function.
For a more detailed view, we refer to~\cite{graham1979,mnich}.

The scheduling problem most closely related to the problems studied in this paper is $Pt||\sum w_j (1-U_j)$. In this problem we are given a set of jobs, each with an integer weight, processing time, and due date. We are also given $t$ identical machines that can each process one job at a time. The task is to schedule jobs on the available machines in such a way that we maximize the total weight of jobs completed before the due date. (Here $w_j$ denotes the weight of job $j$, and $U_j = 1$ if job $j$ is not completed on time, and $U_j =0$ otherwise).
This is similar to \tPDws, in the case that the $t$ available teams have identical starting and ending times. We may think of taxa as analogous to jobs, with the extinction times corresponding to due dates.
The key difference is that in \tPDws, rather than maximizing the total weight of the completed jobs, we aim to maximize their phylogenetic diversity, as determined by the input tree $\Tree$. 


Although approximation algorithms for scheduling problems are common, parameterized algorithms for scheduling problems are rare~\cite{mnich}.
The most commonly investigated special case of \tPDws is $1||\sum w_j (1-U_j)$---that\linebreak is~$Pt||\sum w_j (1-U_j)$ with only one machine.
This problem is weakly \NP-hard and is solvable in pseudo-polynomial time~\cite{lawler1969,sahni1976},
while the unweighted version~$1||\sum (1-U_j)$ is solvable in polynomial time~\cite{moore1968,maxwell1970,sidney1973}.
Parameters studied for~$1||\sum w_j (1-U_j)$ include the number of different due dates, the number of different processing times, and the number of different weights.
The problem  is \FPT when parameterized by the sum of any two of these parameters~\cite{hermelin}.
When parameterized by one of the latter two parameters, it is \Wh{1}-hard~\cite{heeger2024} but \XP~\cite{hermelin}.
Also a version has been studied in the light of parameterized algorithms in which there are also few distinct deadlines~\cite{heeger2023}.

\subsection{Our contribution}
With \tPDs and \tPDws we introduce problems on maximizing phylogentic diversity in which extinction times of taxa may vary.

We provide a connection to the well-regarded field of scheduling and
we analyze \tPDs and \tPDws within the framework of parameterized complexity.
Our most important results are $\Oh^*(2^{2.443\cdot D + o(D)})$-time algorithms for \tPDs and \tPDws (\Cref{sec:D}),
and a $\Oh^*(2^{6.056\cdot \Dbar + o(\Dbar)})$-time algorithm for \tPDs (\Cref{sec:Dbar}).
A detailed list of known results for \tPDs and \tPDws is given in \Cref{tab:results}.
In this table and throughout the paper, we adopt the common convention that $n:= |X|$, the number of taxa.

A key challenge for the design of \FPT algorithms for \tPDs and \tPDws is the combination of the tree structure and extinction time constraints. 
Tree-based problems often suit a dynamic programming (DP) approach, where partial solutions are constructed for subtrees of the input tree, (starting with the individual leaves and proceeding to larger subtrees). Scheduling problems with due dates (such as our extinction times) may also suit a DP approach, where partial solutions are constructed for subsets of tasks with due date below some bound. 
These two DP approaches have a conflicting structure for the desired processing order, which makes designing a DP algorithm for \tPDs and \tPDws more difficult than it may first appear.
Our solution involves the careful use of color-coding to reconcile the two approaches.
We believe this approach may also be applicable to other extensions of \MPD.

\begin{table}[t]
	\centering
	\footnotesize
	\caption{Parameterized complexity results for \tPDs and \tPDws.}
	\label{tab:results}
	\resizebox{\columnwidth}{!}{%
	\begin{tabular}{lllll}
		\hline
		Parameter & \multicolumn{2}{c}{\tPDs} & \multicolumn{2}{c}{\tPDws}\\
		\hline
		Diversity $D$ & \FPT $\Oh^*(2^{2.443\cdot D + o(D)})$ & Thm.~\ref{thm:D} & \FPT $\Oh^*(2^{2.443\cdot D + o(D)})$ & Thm.~\ref{thm:D}\\
		Diversity loss $\Dbar$ \; & \FPT $\Oh^*(2^{6.056\cdot \Dbar + o(\Dbar)})$ & Thm.~\ref{thm:DBar} & \NP-hard even if $\Dbar = 0$ & \cite{garey1978}\\
		\# Taxa $n$ & \FPT $\Oh^*(2^{n})$  & Prop.~\ref{prop:X}\ref{prop:X-s} & \FPT $\Oh^*(2^{n} \cdot n!)$  & Prop.~\ref{prop:X}\ref{prop:X-ws}\\
		\# Teams $|T|$ & \NP-hard even if $|T| = 1$  & Prop.~\ref{prop:Knapsack}\ref{thm:KP-NP} \; & \NP-hard even if $|T| = 1$  & Prop.~\ref{prop:Knapsack}\ref{thm:KP-NP}\\
		Solution size $k$ & \Wh{1}-hard & Prop.~\ref{prop:Knapsack}\ref{thm:KP-W1} & \Wh{1}-hard & Prop.~\ref{prop:Knapsack}\ref{thm:KP-W1}\\
		\hline
		Max time $\max_{\ex}$ \; & \textit{open} & & \textit{open} & \\
		Unique times $\var_{\ex}$ \; & \NP-hard even if 1 & Prop.~\ref{prop:Knapsack}\ref{thm:KP-NP} & \NP-hard even if 1 & Prop.~\ref{prop:Knapsack}\ref{thm:KP-NP}\\
		Unique lengths $\var_\ell$ \; & \textit{open} & & \Wh{1}-hard & \cite{heeger2024}\\
		\hline
		Person-hours $H_{\max}$ & \FPT $\Oh^*((|T|+1)^{2\max_{\ex}})$ & Prop.~\ref{prop:T+maxr}\ref{prop:s-T^maxr} & \FPT $\Oh^*(3^{|T|+\max_{\ex}})$ & Prop.~\ref{prop:T+maxr}\ref{prop:ws-T+maxr}\\
		& \FPT $\Oh^*((H_{\max_{\ex}})^{2\var_{\ex}})$ & Prop.~\ref{prop:T+maxr}\ref{prop:s-T*maxr} & & \\
		$\var_\ell+\var_{\ex}$ & \XP $\Oh(n^{2\var_\ell\cdot\var_{\ex}+1})$ & Prop.~\ref{prop:varl+varr} & \textit{open} & \\
		\hline
	\end{tabular}
	}
	
\end{table}

\subsection{Observations}
In the following, we start with some easy observations.

\begin{lemma}
	\label{lem:scheduleCondition}
	There exists a valid $T$-schedule saving $A \subseteq X$ if and only\linebreak if $\sum_{x\in A\cap Z_j} \ell(x) \le H_j$ for each $j\in [\var_{\ex}]$.
\end{lemma}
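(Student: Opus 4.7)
The plan is to prove the two directions separately. The forward ($\Rightarrow$) direction is a straightforward counting argument, while the backward ($\Leftarrow$) direction is the substantive one and calls for either a Hall-type argument or an explicit greedy construction. I would prefer the Hall's-theorem route because it makes the structural role of the inequalities transparent.

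For the forward direction, suppose $f$ is a valid $T$-schedule saving $A$. Fix $j\in[\var_{\ex}]$. For every $x\in A\cap Z_j$ we have $\ex(x)\le \ex_j$, so validity of $f$ forces $f^{-1}(x)\subseteq\{(i,j')\in\mathcal{H}_T : j'\le\ex(x)\le\ex_j\}$. Since the sets $f^{-1}(x)$ for distinct $x$ are disjoint and $|f^{-1}(x)|\ge\ell(x)$, summing yields
\[
\sum_{x\in A\cap Z_j}\ell(x)\;\le\;\sum_{x\in A\cap Z_j}|f^{-1}(x)|\;\le\;\big|\{(i,j')\in\mathcal{H}_T : j'\le\ex_j\}\big|\;=\;H_j.
\]

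For the backward direction, I would build an auxiliary bipartite graph $G$ whose left side consists of $\ell(x)$ ``copies'' of each taxon $x\in A$, whose right side is $\mathcal{H}_T$, and in which a copy of $x$ is adjacent to $(i,j')$ precisely when $j'\le\ex(x)$. A perfect matching of the left side into the right side is exactly the data of a valid $T$-schedule saving $A$ (read off by setting $f((i,j'))=x$ whenever a copy of $x$ is matched to $(i,j')$, and $f((i,j'))=\textsc{none}$ otherwise). By Hall's theorem, such a matching exists iff $|N(S)|\ge|S|$ for every set $S$ of copies on the left. Given any such $S$, let $m:=\max\{\ex(x) : \text{a copy of }x\text{ lies in }S\}$; this value is some $\ex_j$. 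Every copy in $S$ belongs to a taxon of $A\cap Z_j$, so $|S|\le\sum_{x\in A\cap Z_j}\ell(x)$, while the neighborhood $N(S)$ contains all pairs $(i,j')\in\mathcal{H}_T$ with $j'\le \ex_j$, giving $|N(S)|\ge H_j$. The hypothesis then yields $|S|\le H_j\le |N(S)|$, verifying Hall's condition.

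The main obstacle, and the only step where one can go wrong, is the reduction from arbitrary subsets $S$ of copies to the distinguished subsets $A\cap Z_j$: one has to make sure that taking the maximum extinction time among copies in $S$ really captures the worst Hall inequality, so that no additional constraints (e.g., coming from the team-availability intervals $(s_i,e_i]$) are missed. Once that reduction is made, the rest is routine. If one prefers to avoid Hall's theorem, the same argument can be carried out by an earliest-deadline-first greedy: process timeslots in increasing order of $j'$, and at each step assign the still-unfinished taxon of smallest $\ex(x)$ (with $\ex(x)\ge j'$); an exchange argument shows that a failure to complete some $x\in Y_j$ on time forces $\sum_{y\in A\cap Z_j}\ell(y)>H_j$, contradicting the hypothesis.
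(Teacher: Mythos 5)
Your proof is correct, and while your forward direction coincides with the paper's (the same disjointness-and-counting argument), your backward direction takes a genuinely different route. The paper constructs the schedule explicitly: it orders $\mathcal{H}_T$ by increasing timeslot and the taxa of $A$ by increasing extinction time, assigns each taxon greedily to the next $\ell(x)$ free slots, and observes that the first $\sum_{x\in A\cap Z_j}\ell(x)\le H_j$ slots consumed are all of the form $(i,j')$ with $j'\le \ex_j$ --- essentially the earliest-deadline-first exchange argument you sketch as your fallback. You instead invoke Hall's theorem on the copy-expanded bipartite graph, and your verification of Hall's condition is complete and airtight: because a copy of $x$ is adjacent to \emph{all} of $\{(i,j')\in\mathcal{H}_T : j'\le \ex(x)\}$, the neighbourhoods are nested, so for an arbitrary $S$ the single copy realizing $m=\ex_j$ already forces $|N(S)|\ge H_j$, while $|S|\le\sum_{x\in A\cap Z_j}\ell(x)$; the team intervals you worry about are fully absorbed into $\mathcal{H}_T$ and hence into $H_j$, so no constraint is missed. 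The trade-off is mild: the paper's greedy is constructive and directly yields the polynomial-time schedulability test it relies on later (your bipartite graph has $\sum_{x\in A}\ell(x)$ left vertices, so extracting an algorithm from the matching formulation would require a little extra care with pseudo-polynomial blow-up), whereas your argument makes it conceptually transparent that the $\var_{\ex}$ inequalities are \emph{exactly} the Hall conditions that survive the nested-neighbourhood collapse.
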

\begin{proof}
	Recall that $\ex_j$ is the $j$th extinction time and $\ex^*(x) = j$ for each $x\in Y_j$
	
	Suppose first that $f$ is a valid schedule saving $A$.
	Then for each $x\in A$, the set $f^{-1}(x)$ contains at least $\ell(x)$ pairs $(i,j')$ with $j' \leq \ex(x)$. It follows that $f^{-1}(Z_j)$ contains at least $\sum_{x \in A\cap Z_j} \ell(x)$ pairs $(i,j')$ with $j' \leq \ex_j$, and so  $\sum_{x\in A\cap Z_j} \ell(x) \le |\{(i,j')\in \mathcal{H}_T \mid j' \le \ex_j\}| = H_j$ for each $j\in [\var_{\ex}]$.
	
	Conversely, suppose $\sum_{x\in A\cap Z_i} \ell(x) \le H_j$ for each $j\in [\var_{\ex}]$. 
	Then order the elements of $\mathcal{H}_T$ such that $(i',j')$ appears before $(i'',j'')$ if $j' < j''$, and order the elements of $A$ such that $x$ appears before $y$ if $\ex^*(x) < \ex^*(y)$ (thus, all elements of $A\cap Y_j$ appear before all elements of $A\cap Y_{j+1}$).
	Now define $f:\mathcal{H}_T \rightarrow A\cup\{\textsc{none}\}$ by repeatedly choosing the first available taxon $x$ of $A$, and assigning it to the first available $\ell(x)$ elements of $\mathcal{H}_T$.
	Then the first $\sum_{x\in A\cap Z_j} \ell(x) \le H_j$ elements of $\mathcal{H}_T$ are used to save taxa in $Z_j$, and these elements of $\mathcal{H}_T$ are all of the form $(i,j')$ for $j' \leq \ex_j$.
	It follows that $f$ is a valid schedule saving $A$.
\end{proof}

By \Cref{lem:scheduleCondition}, we have that $\Instance$ is a \yes-instance of \tPDs (or \tPDws with $|T|=1$) if and only if there exists a set $S\subseteq X$ with $\PD(S)\geq D$ such that $\sum_{x\in A\cap Z_i} \ell(x) \le H_i$ for each $i\in [\var_{\ex}]$.
We will refer to such a set $S$ as a \emph{solution} for $\Instance$.
The following lemma follows from the definitions of valid and strict schedules.

\begin{lemma}
	\label{lem:strictScheduleCondition}
	There exists a strict valid $T$-schedule saving $A \subseteq X$ if and only if there is a partition of $A$ into sets $A_1\dots, A_{|T|}$ such that for each $i \in [|T|]$, there is a valid $\{t_i\}$-schedule saving $A_i$. 
\end{lemma}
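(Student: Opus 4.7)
The plan is to prove both directions of the equivalence directly by constructing the required schedules, exploiting the fact that the strictness condition ensures that each saved taxon is ``owned'' by a single team.

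For the forward direction, suppose $f:\mathcal{H}_T \to A \cup \{\textsc{none}\}$ is a strict valid $T$-schedule saving $A$. By the definition of strictness, for every $x \in A$ there is a unique team index $\pi(x) \in [|T|]$ such that $f^{-1}(x) \subseteq \{(\pi(x),j) : (\pi(x),j) \in \mathcal{H}_T\}$. I would define $A_i := \pi^{-1}(i)$ for each $i \in [|T|]$; these sets clearly partition $A$. I would then define $f_i : \mathcal{H}_{\{t_i\}} \to A_i \cup \{\textsc{none}\}$ as the restriction of $f$ to timeslots of team $t_i$ (replacing the value with \textsc{none} whenever $f$ assigns a taxon outside $A_i$, though this case cannot occur by construction). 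Validity and the property that $f_i$ saves $A_i$ then follow immediately from the corresponding properties of $f$, since all $\ell(x)$ person-hours assigned to any $x \in A_i$ already came from team $t_i$'s timeslots and all occurred before $\ex(x)$.

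For the backward direction, suppose $A = A_1 \cup \dots \cup A_{|T|}$ is a partition and $f_i$ is a valid $\{t_i\}$-schedule saving $A_i$ for each $i \in [|T|]$. I would define the combined schedule $f : \mathcal{H}_T \to A \cup \{\textsc{none}\}$ by $f(i,j) := f_i(i,j)$ for every $(i,j)\in \mathcal{H}_T$. Because $A_1,\dots,A_{|T|}$ is a partition of $A$, no taxon $x \in A$ appears in the image of $f_{i'}$ for any $i' \neq \pi(x)$, so $f^{-1}(x) \subseteq \{(\pi(x),j) : (\pi(x),j)\in \mathcal{H}_T\}$, making $f$ strict. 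Validity is inherited from each $f_i$, and $|f^{-1}(x)| = |f_{\pi(x)}^{-1}(x)| \ge \ell(x)$ shows that $f$ saves $A$.

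I do not expect any real obstacle here: the statement is essentially a bookkeeping identity expressing that a strict schedule is exactly a disjoint union of single-team schedules. The only point to be careful about is ensuring that the restriction/combination operations preserve the validity conditions (extinction times and rescue lengths), which follows directly from the definitions since all timing constraints are local to individual taxa.
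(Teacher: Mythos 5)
Your proof is correct and matches the paper's intent: the paper states this lemma without proof, remarking only that it ``follows from the definitions of valid and strict schedules,'' and your argument is exactly the spelled-out version of that bookkeeping. The decomposition via the unique team $\pi(x)$ owning each taxon and the disjoint recombination are the right (and only) ideas needed.
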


For the paramterized side of view,
we first show that \tPDs and \tPDws are \NP-hard, even for quite restricted instances. 
Our proof adapts the reduction of Richard Karp used to show that the scheduling problem $1||\sum w_j(1-U_j)$ is \NP-hard~\cite{karp}.
While Karp gives a reduction from \KP, we use a reduction from $k$-\SubSum.
This result was also observed by~\cite{ResourceAwarePD,HartmannSteel2006} for \MPD with integer cost on taxa, which may be viewed as a special case of \tPDs or \tPDws.

\begin{proposition}[\cite{karp,ResourceAwarePD,HartmannSteel2006}]
	\label{prop:Knapsack}
	\begin{inparaenum}[(a)]
		\itemsep-.35em
		\item\label{thm:KP-NP}\tPDs and \tPDws are \NP-hard.
		\item\label{thm:KP-W1}It is \Wh{1}-hard  with respect to $k$ to decide whether an instance of \tPDs or \tPDws has a solution in which $k$ taxa are saved.
	\end{inparaenum}
	 Both statements hold even if the tree in the input is a star and $|T|=\var_{\ex}=1$.
\end{proposition}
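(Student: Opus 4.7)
The plan is to give a single reduction that simultaneously establishes both parts, by reducing from \SubSum for (a) and from $k$-\SubSum for (b); both variants produce an instance in which the phylogenetic tree is a star, there is exactly one team, and there is exactly one extinction time, which forces \tPDs and \tPDws to coincide on these instances.

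For part (a), given a \SubSum instance $(a_1,\dots,a_n,S)$ with positive integers $a_i$, I build the following instance $\Instance$ of \tPDs. Take $X = \{x_1,\dots,x_n\}$ and let $\Tree$ be the star with root $\rho$ and leaves $x_1,\dots,x_n$, where the edge $\rho x_i$ has weight $\w(\rho x_i) = a_i$. For every $x_i$ set $\ell(x_i) = a_i$ and $\ex(x_i) = S$; take a single team $t_1$ with $s_1 = 0$ and $e_1 = S$; set the target diversity $D = S$. Since $\Tree$ is a star, for any $A \subseteq X$ we have $\PD(A) = \sum_{x_i \in A} a_i$. By \Cref{lem:scheduleCondition}, a valid $T$-schedule saving $A$ exists iff $\sum_{x_i \in A} \ell(x_i) \le H_1 = S$, i.e.\ iff $\sum_{x_i \in A} a_i \le S$. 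Combined with the requirement $\PD(A) \ge D = S$, this forces $\sum_{x_i \in A} a_i = S$, so $\Instance$ is a \yes-instance iff the original \SubSum instance is.

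For part (b), I reduce analogously from $k$-\SubSum, which asks whether some $k$ of the $a_i$ sum to exactly $S$, and which is \Wh{1}-hard in~$k$. The obstacle is that the construction in (a) does not constrain $|A|$ at all. To enforce $|A| = k$, I use a padding trick: let $M$ be any integer with $M > \sum_{i=1}^{n} a_i$, set $\w(\rho x_i) = \ell(x_i) = M + a_i$, $\ex(x_i) = kM + S$, team $t_1$ with $s_1 = 0$ and $e_1 = kM+S$, and $D = kM + S$. Again $\PD(A) = |A| \cdot M + \sum_{x_i \in A} a_i$ and the scheduling constraint reads $|A|\cdot M + \sum_{x_i \in A} a_i \le kM + S$, so together one needs $|A|\cdot M + \sum_{x_i \in A} a_i = kM + S$. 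Because $0 \le \sum_{x_i \in A} a_i < M$ and $0 \le S < M$, the $M$-coefficients must match, forcing $|A| = k$ and $\sum_{x_i \in A} a_i = S$; conversely any such $A$ is a solution in which exactly $k$ taxa are saved. Since the new parameter $k$ equals the parameter of the $k$-\SubSum instance, this is a parameterized reduction and yields \Wh{1}-hardness.

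Finally, I would verify the promised restrictions: the constructed $\Tree$ is by definition a star, all taxa share the single extinction time, so $\var_{\ex} = 1$, and we use only $|T|=1$ team. Because $|T|=1$, every valid schedule is trivially strict, so the same construction also witnesses hardness of \tPDws. No real obstacle beyond the padding step, whose only subtlety is choosing $M$ large enough that the ``digit'' in base~$M$ of a candidate solution uniquely encodes the cardinality; picking $M > \sum a_i$ suffices and keeps all numerical quantities polynomial in the input.
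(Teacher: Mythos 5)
Your proposal is correct and follows essentially the same route as the paper: a reduction from (\mbox{$k$-})\SubSum to a star with a single team and a single extinction time, setting $\w(\rho x_i)=\ell(x_i)$ and adding a large padding constant ($M$ in your notation, $Q$ in the paper's) to force any solution to save exactly $k$ taxa. The only cosmetic difference is that you split the argument into two reductions (unpadded \SubSum for \NP-hardness, padded $k$-\SubSum for \Wh{1}-hardness), whereas the paper uses the single padded reduction for both parts.
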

\begin{proof}
	We reduce from $k$-\SubSum, 
	which is \NP-hard and \Wh{1}-hard when parameterized by the size $k$ of the solution~\cite{downey}.
	In $k$-\SubSum we are given a multiset of integers $\mathcal{Z} = \{z_1,\dots,z_{|\mathcal{Z}|}\}$ and integers $k,G\in \mathbb{N}$.
	It is asked whether there is a multiset $S\subseteq \mathcal{Z}$ of size $k$ such that $\sum_{z\in S} z = G$.
	
	\proofpara{Reduction}
	Let $(\mathcal{Z},G)$ be an instance $k$-\SubSum.
	Let $Q\in \mathbb{N}$ be a sufficiently large integer, i.e. bigger than the sum of all elements in $\mathcal{Z}$.
	
	We define an instance $\Instance = (\Tree, \ell, \ex, T, D)$ with
	a star $\Tree$ with root $\rho$ and leaves $X = \{ x_1,\dots,x_{|\mathcal{Z}|} \}$.
	For each $x_i\in X$, we set edge-weights $\w(\rho x_i)$ and 
	the rescue length $\ell(x_i)$ to be $z_i + Q$.
	Further, each taxon has extinction time $\ex(x_i) := G + kQ$ and
	the only team operates from $s_1 := 0$ to $e_1 := G + kQ$.
	Finally, we set $D := G + kQ$.
	
	\proofpara{Correctness}
	The reduction is clearly computed in polynomial time.
	Since $|T| = 1$, we have that $\Instance$ is a \yes-instance of \tPDs if and only if it is a \yes-instance of \tPDws.
	It remains to show that any solution for $\Instance$ saves exactly $k$ taxa, and that $\Instance$ is a \yes-instance of \tPDs if and only if $(\mathcal{Z},G)$ is a \yes-instance $k$-\SubSum.
	
	So let $(\mathcal{Z},G)$ be a \yes-instance of $k$-\SubSum with solution $S$.
	Define $S' := \{ x_i\in X \mid z_i \in S \}$. Clearly, $|S'| = |S| = k$.
	Then $\PD(S') = \sum_{x\in S'} \w(\rho x_i) = \sum_{x\in S'} (z_i + Q) = (\sum_{x\in S'} z_i) + kQ = G + kQ = D$, and
	analogously $\sum_{x\in S'} \ell(x_i) = \sum_{x\in S'} (z_i + Q) = G + kQ = H_{1}$.
	Subsequently, $S'$ is a solution for 
	\Instance.
	
	Let \Instance be a \yes-instance with solution $S'$.
	Because $G+kQ = D \le \PD(S') = \sum_{x_i\in S'} \w(\rho x_i) = \sum_{x_i\in S'} (z_i + Q) = \sum_{x_i\in S'} \ell(x_i) \le G+kQ$, we conclude that $G + kQ  = \sum_{x_i\in S'} (z_i + Q)$, from which it follows that $|S'| = k$ and $\sum_{x_i\in S'} z_i = G$. 
	Therefore $S := \{ z_i \mid x_i \in S' \}$ is a solution of $(\mathcal{Z},G)$.
\end{proof}

The scheduling variant \textsc{P3||C${}_{\max}$} in which it is asked if the given jobs can be scheduled on three parallel machines such that the \emph{make span} (the maximum time taken on any machine) is at most C${}_{\max}$, is strongly \NP-hard \cite{garey1978}.
\textsc{P3||C${}_{\max}$} can be seen as a special case of \textsc{P3||$\sum w_j(1-U_j)$} and therefore \tPDws in which there are three teams, the tree is a star on which all weights are 1, and we have to save every taxon in the time defined by the make span.

\begin{proposition}[\cite{garey1978}]
	\label{prop:ws-Dbar}
	\tPDws is strongly \NP-hard even if the tree is a star, every taxon has to be saved ($\Dbar=0$), there are three teams, $\max_\w=1$ and $\var_{\ex}=1$.
\end{proposition}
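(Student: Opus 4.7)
The plan is to formalize the reduction sketched in the paragraph preceding the proposition, namely a polynomial-time reduction from \textsc{P3$\|$C${}_{\max}$} to \tPDws that respects all of the stated structural restrictions. Recall that in \textsc{P3$\|$C${}_{\max}$} we are given jobs $j_1,\dots,j_m$ with processing times $p_1,\dots,p_m \in \mathbb{N}$ and an integer $C_{\max}$, and we ask whether the jobs can be distributed among three identical machines so that the total processing time assigned to any one machine is at most $C_{\max}$.

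Given such an instance, I would construct an instance $\Instance = (\Tree, \ell, \ex, T, D)$ of \tPDws as follows. Let $\Tree$ be the star with root $\rho$ and leaves $X = \{x_1,\dots,x_m\}$, and set $\w(\rho x_i) = 1$ for every $i \in [m]$; so $\max_\w = 1$ and $\PD(X) = m$. For each $i$, set $\ell(x_i) = p_i$ and $\ex(x_i) = C_{\max}$, giving $\var_{\ex} = 1$. Take $T = \{t_1,t_2,t_3\}$ with $s_i = 0$ and $e_i = C_{\max}$ for $i\in[3]$, and set $D = m$. Then $\Dbar = \PD(X) - D = 0$, so any solution must save every taxon in $X$. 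The reduction is clearly computable in polynomial time and preserves the claimed restrictions.

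For correctness, I need to argue that $\Instance$ is a \yes-instance of \tPDws iff the original instance admits a feasible schedule. Since $D = \PD(X)$ and $\PD$ is monotone, any set $A\subseteq X$ with $\PD(A) \ge D$ must satisfy $A = X$, so the question reduces to whether there is a strict valid $T$-schedule saving all of $X$. By \Cref{lem:strictScheduleCondition} this is equivalent to the existence of a partition $X = A_1 \cup A_2 \cup A_3$ such that, for each $i\in[3]$, a valid $\{t_i\}$-schedule saves $A_i$. Because team $t_i$ supplies exactly $C_{\max}$ person-hours and all taxa share extinction time $C_{\max}$, \Cref{lem:scheduleCondition} tells us that such a $\{t_i\}$-schedule exists iff $\sum_{x\in A_i}\ell(x) \le C_{\max}$. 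Translating back via $\ell(x_i) = p_i$, this is precisely the statement that the corresponding job assignment has makespan at most $C_{\max}$, completing the equivalence.

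There is essentially no conceptual obstacle here; the only subtlety to double-check is that the strict schedule model, combined with only one extinction time and identical team windows, really does collapse to exactly the \textsc{P3$\|$C${}_{\max}$} condition (i.e.\ that one cannot gain anything by using a non-assignment-style strict schedule). This follows immediately from \Cref{lem:strictScheduleCondition}, which forces each taxon to be processed entirely by a single team. Finally, since \textsc{P3$\|$C${}_{\max}$} is strongly \NP-hard \cite{garey1978} and the numerical parameters $\ell(x_i)=p_i$ and $C_{\max}$ in the constructed instance are the same as those in the source instance (no blow-up), the reduction transfers strong \NP-hardness to \tPDws under the stated restrictions.
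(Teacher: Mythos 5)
Your proposal is correct and follows essentially the same route as the paper, which gives no separate proof but only the preceding paragraph observing that \textsc{P3||C${}_{\max}$} is a special case of \tPDws obtained by taking a unit-weight star, $\ell(x_i)=p_i$, a common extinction time $C_{\max}$, three identical teams, and $D=\PD(X)$; your write-up merely makes the appeal to \Cref{lem:scheduleCondition} and \Cref{lem:strictScheduleCondition} explicit. Like the paper, you inherit the strong \NP-hardness claim entirely from the citation to \cite{garey1978}, so nothing further is needed.
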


Our final observation in this section is that for an instance of \tPDs and a set of taxa $S\subseteq X$,
we can compute the diversity $\PD(S)$ and check whether there is a valid $T$-schedule saving $S$ in polynomial time (by \Cref{lem:scheduleCondition}).
Therefore, checking each subset of $X$ yields an $\Oh^*(2^{n})$-time-algorithm for \tPDs.
For an instance of \tPDws , we can check whether there is a strict valid $T$-schedule saving $S$ in $\Oh^*(|S|!)$ time, as follows.
For any ordering $x_1,\dots,x_{|S|}$ of the elements in $S$,
we can assign taxa $x_1,\dots,x_{i_1}$ to team $t_1\in T$ with $i_1$ being the biggest integer such that there is a valid $\{t_1\}$-schedule saving $\{x_1,\dots, x_{i_1}\}$.
For each $j \in \{2,\dots, |T|\}$, we assign the next $i_j$ available taxa in a similar way, choosing $i_j$ to be the largest integer such that there is a valid $\{t_j\}$-schedule saving $\{x_{i_{j-1}+1},\dots, x_{i_j}\}$. If all taxa are assigned to a team by this process, then there is a strict valid $T$-schedule saving $S$ (by \Cref{lem:strictScheduleCondition}).

\begin{proposition}
	\label{prop:X}
	\begin{inparaenum}[(a)]
		\itemsep-.35em
		\item\label{prop:X-s}\tPDs can be solved in $\Oh^*(2^{n})$ time and
		\item\label{prop:X-ws}\tPDws can be solved in $\Oh^*(2^{n} \cdot n!)$ time.
	\end{inparaenum}
\end{proposition}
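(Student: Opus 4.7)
Both parts follow a brute-force enumeration strategy that leverages the feasibility criteria of \Cref{lem:scheduleCondition} and \Cref{lem:strictScheduleCondition}.

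For part~(\ref{prop:X-s}), I would iterate over all $2^n$ subsets $S \subseteq X$. For each $S$, I would compute $\PD(S)$ in polynomial time by a single traversal of $\Tree$ that sums $\w(e)$ over the edges in $E_{\Tree}(S)$, and discard $S$ if $\PD(S) < D$. Otherwise, I would use \Cref{lem:scheduleCondition} to check in polynomial time whether $\sum_{x \in S \cap Z_j} \ell(x) \le H_j$ holds for every $j \in [\var_{\ex}]$. Aggregating these polynomial-time checks over the $2^n$ subsets yields $\Oh^*(2^n)$.

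For part~(\ref{prop:X-ws}), I would additionally iterate over all $|S|! \le n!$ orderings of each chosen $S$. For a fixed ordering $(x_1,\dots,x_{|S|})$, the plan is to run the greedy procedure from the paragraph preceding the proposition: pick $i_1$ maximal so that $\{t_1\}$ admits a valid schedule saving $\{x_1,\dots,x_{i_1}\}$ (checkable in polynomial time by \Cref{lem:scheduleCondition} applied to a single-team instance), then proceed analogously with $t_2$ on the remaining suffix, and so on through $t_{|T|}$. If some ordering results in every taxon being assigned, \Cref{lem:strictScheduleCondition} immediately certifies a strict valid $T$-schedule saving $S$. The running time is $\Oh^*(2^n \cdot n!)$.

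The forward direction of correctness---if the greedy succeeds for some ordering then a strict schedule exists---is immediate from \Cref{lem:strictScheduleCondition}. The main obstacle is the converse: if $S$ is strictly saveable, then some ordering must cause the greedy to succeed. I would establish this by induction on $|T|$. Given any strict schedule with partition $A_1,\dots,A_{|T|}$ where team $t_i$ handles $A_i$, the witness ordering is the concatenation $A_1 A_2 \cdots A_{|T|}$ with arbitrary internal orderings inside each block. The greedy's first step takes a prefix $B_1 \supseteq A_1$, and the leftover taxa $S \setminus B_1$ still admit a strict schedule on $\{t_2,\dots,t_{|T|}\}$ via the refined partition $A'_j := A_j \setminus B_1$ (each $A'_j \subseteq A_j$ is still saveable by $t_j$); crucially, the inherited suffix ordering coincides with $A'_2 A'_3 \cdots A'_{|T|}$, so the inductive hypothesis applies. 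The delicate case is when $B_1$ absorbs several entire blocks and truncates a later one, so that some $A'_j$ become empty; this is accommodated smoothly, since the greedy may simply assign zero taxa to the corresponding teams while the non-empty blocks slide into place, and the induction continues.
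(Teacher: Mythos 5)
Your proposal is correct and follows essentially the same approach as the paper: enumerate all $2^n$ subsets, use \Cref{lem:scheduleCondition} for the non-strict check, and for \tPDws additionally enumerate orderings and run the prefix-greedy assignment of taxa to teams $t_1,\dots,t_{|T|}$. Your induction argument for the converse (that some ordering makes the greedy succeed whenever a strict schedule exists) is a valid and welcome elaboration of a step the paper leaves implicit, relying as it does on the downward-closure of single-team saveability guaranteed by \Cref{lem:scheduleCondition}.
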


\section{The Diversity $D$}
\label{sec:D}
In this section, we show that \tPDs and \tPDws are \FPT when parameterized by the threshold of diversity $D$.
When one tries to use the standard approach with a dynamic programming algorithm over the vertices of the tree, one will struggle with the question of how to handle the different extinction times of the taxa.
While it is straightforward how to handle taxa of a specific extinction time, already with a second, it is not trivial how to combine sub-solutions.
In order to overcome these issues, we use the technique of color-coding additionally to dynamic programming.

In the following, we consider colored versions of the problems, \ctPDs and \ctPDws.
In these, additionally in the input of the respective uncolored variant we are given a coloring~$c$ which assigns each edge $e\in E$ a set of colors $c(e)$: A subset of $[D]$ of size $\w(e)$.
For a taxon $x\in X$, we define $c(x)$ to be the union of colors on the edges between the root to~$x$. Further, for a set $S$ of taxa we define $c(S)$ to be $\bigcup_{x\in S} c(x)$.
The questions of \ctPDs (respectively, \ctPDws) is whether there is a subset $S$ of taxa and a (strictly) valid $T$-schedule saving $S$ such that $c(S) = [D]$.

Next, we show that \ctPDs and \ctPDws are \FPT with respect to~$D$.
The difficulty of combining sub-solutions for different extinction times also arises in these colored versions of the problem.
However, the color enables us to consider the tree as a whole.
We select taxa with ordered extinction time such that we are able to check whether there is a (strictly) valid $T$-schedule which can indeed be save the selected set of taxa.

\begin{lemma}
	\label{lem:cs-D}
	\ctPDs can be solved in $\Oh(2^D\cdot D\cdot \var_{\ex}^2 \cdot n \cdot \log(H_{\max_{\ex}}))$~time.
\end{lemma}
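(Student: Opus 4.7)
The plan is to design a knapsack-style dynamic program over color subsets of $[D]$, processed class by class in order of increasing extinction time. The starting observation is that, by Lemma \ref{lem:scheduleCondition}, a set $A\subseteq X$ admits a valid $T$-schedule if and only if $\sum_{x\in A\cap Z_j}\ell(x)\le H_j$ for every $j\in[\var_{\ex}]$, so it suffices to find a minimum-rescue-length subset of $X$ whose colors cover $[D]$ and which meets each of these prefix constraints.

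For $j\in[\var_{\ex}]_0$ and $C\subseteq[D]$, I define $DP[j][C]$ to be the minimum of $\sum_{x\in A}\ell(x)$ over sets $A\subseteq Z_j$ with $c(A)=C$ and $\sum_{x\in A\cap Z_{j'}}\ell(x)\le H_{j'}$ for every $j'\le j$, and $DP[j][C]=\infty$ if no such $A$ exists. The base case is $DP[0][\emptyset]=0$ and $DP[0][C]=\infty$ for $C\neq\emptyset$ (interpreting $Z_0=\emptyset$), and the instance will be a \yes-instance iff $DP[\var_{\ex}][[D]]<\infty$.

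To transition from $j-1$ to $j$, I initialize $DP[j][C]\leftarrow DP[j-1][C]$ for all $C$, process each $x\in Y_j$ with the knapsack update $DP[j][C]\leftarrow\min(DP[j][C],\,DP[j][C\setminus c(x)]+\ell(x))$ iterated in decreasing order of $|C|$ so that $x$ is inserted at most once, and finally enforce the class-$j$ feasibility bound by setting $DP[j][C]\leftarrow\infty$ whenever $DP[j][C]>H_j$. A straightforward induction on $j$ shows each entry attains its defined optimum, and Lemma \ref{lem:scheduleCondition} then guarantees that a solution for \ctPDs exists iff $DP[\var_{\ex}][[D]]<\infty$. The claimed bound $\Oh(2^D\cdot D\cdot \var_{\ex}^2\cdot n\cdot \log(H_{\max_{\ex}}))$ is obtained from the $\var_{\ex}$ classes, the $n$ taxa triggering updates (summed across classes), the $2^D$ color subsets touched by each update, and $\Oh(D\cdot \var_{\ex}\cdot \log H_{\max_{\ex}})$ bookkeeping per entry (bit-set operations on $[D]$, maintenance of the $\var_{\ex}$ running constraints, and arithmetic on length values up to $H_{\max_{\ex}}$).

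The main obstacle, and the reason color-coding is needed at all, is the mismatch between the two natural DP directions: a standard phylogenetic DP proceeds from the leaves of $\Tree$ to the root, while scheduling feasibility wants a left-to-right sweep over extinction classes, and partial solutions on subtrees do not combine cleanly across these two orderings. Replacing ``which edges of $\Tree$ are covered'' by ``which elements of $[D]$ have been collected'' decouples the phylogenetic contribution from the tree structure and reduces the problem to a single clean knapsack over color subsets along the sequence $Y_1,\dots,Y_{\var_{\ex}}$.
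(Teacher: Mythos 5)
Your algorithm is in substance the same as the paper's: a dynamic program over pairs (color subset of $[D]$, extinction class) storing the minimum rescue length of a partial solution, pruned via the prefix constraints $\ell(A\cap Z_j)\le H_j$ from \Cref{lem:scheduleCondition}; the paper merely phrases your class-by-class sweep as a recurrence that peels off the taxon of latest extinction time and checks the bound $H_{\ex^*(x)}$ via the function $\psi$. The one real flaw is your table invariant: you require $c(A)=C$ \emph{exactly}, and with that definition the ``straightforward induction'' does not go through in either direction. The update $DP[j][C]\leftarrow DP[j][C\setminus c(x)]+\ell(x)$ may record a value witnessed by a set whose color set is $(C\setminus c(x))\cup c(x)\supsetneq C$ whenever $c(x)\not\subseteq C$, so the stored value need not be attained by any set with colors exactly $C$; conversely, if $A$ is optimal for $C$ and $x\in A$ shares colors with other members of $A$, then $c(A\setminus\{x\})$ is a strict superset of $C\setminus c(x)$, so under exact semantics $DP[j][C\setminus c(x)]$ gives no upper bound on $\ell(A\setminus\{x\})$ and the optimum can be missed. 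Both problems disappear if you define $DP[j][C]$ as the minimum length over sets $A$ with $c(A)\supseteq C$ --- precisely the ``$C$ is a subset of $c(S)$'' clause in the paper's notion of $(C,p)$-feasibility --- after which your update, the capping step at the end of each class, and the claimed running time are all correct.
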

\begin{proof}
	Let \Instance be an instance of \ctPDs.
	For $p\in[\var_{\ex}]$, we call a set $S$ of taxa \emph{$p$-compatible} if $\ell(S\cap Z_q) \le H_i$ for each $q\in [p]$.
	For a set of colors $C\subseteq [D]$ and $p\in[\var_{\ex}]$, we call a set $S$ of taxa \emph{$(C,p)$-feasible} if
	\begin{enumerate}[F a)]
		\itemsep-.35em
		\item $C$ is a subset of $c(S)$,
		\item $S$ is a subset of $Z_p$, and
		\item $S$ is $p$-compatible.
	\end{enumerate}
	
	\proofpara{Table definition}
	In the following dynamic programming algorithm with table $\DP$, for each $C\subseteq [D]$ and $p\in [\var_{\ex}]$ we want the entry $\DP[C,p]$ to store the minimum length $\ell(S)$ of a $(C,p)$-feasible set of taxa $S\subseteq X$, with $\DP[C,p] = \infty$ if no $(C,p)$-feasible set exists.
	(The value $\infty$ could be replaced with a big integer. We calculate $\infty \pm z = \infty$ for each integer $z$.)

	\proofpara{Algorithm}
	As a base case we store $\DP[\emptyset, p] = 0$ for each $p\in [\var_{\ex}]$.
	
	To compute further values, we use the recurrence
	\begin{equation}
		\label{eqn:split-D}
		\DP[C,p] = 
		\min_{\substack{
				x\in Z_p,
				c(x)\cap C\ne \emptyset}}
		\psi(\DP[C\setminus c(x),\ex^*(x)] + \ell(x), H_{\ex^*(x)}).
	\end{equation}
	Recall $\ex^*(x) = q$ for $x\in Y_q$ and $\psi(a,b) = a$ if $a\le b$ and otherwise $\psi(a,b) = \infty$.
	
	We return \yes if $\DP[[D],\var_{\ex}] \le \max_{\ex}$ and otherwise we return \no.
	
	\proofpara{Correctness}
	Let \Instance be an instance of \ctPDs.
	From the definition of $(C,p)$-feasible sets it follows that there exists a $([D],\var_{\ex})$-feasible set $S$ of taxa if and only if $\Instance$ is a \yes-instance of \ctPDs.
	It remains to show that $\DP[C,p]$ stores the smallest length of a $(C,p)$-feasible set $S$.
	
	As the set $S=\emptyset$ is $(\emptyset,p)$-feasible for each $p\in [\var_{\ex}]$, the basic case is correct.
	
	As an induction hypothesis assume that for a fixed non-empty set of colors~$C\subseteq [D]$ and a fixed integer $p\in [\var_{\ex}]$, the entry $\DP[K,q]$ stores the correct value for each $K\subsetneq C$ and $q \le p$.
	We prove that $\DP[C,p]$ stores the correct value by showing 
	that if a $(C,p)$-feasible set of taxa $S$ exists then $\DP[C,p]\le \ell(S)$ and 
	that if $\DP[C,p] = a < \infty$ then there is a \emph{$(C,p)$-feasible} set $S$ with $\ell(S) = a$.
	
	Let $S\subseteq X$ be a $(C,p)$-feasible set of taxa.
	Observe that if $c(x)\cap C = \emptyset$ for a taxon $x\in S$ then also $S\setminus \{x\}$ is $(C,p)$-feasible.
	So we assume that $c(x)\cap C \ne \emptyset$ for each $x\in S$.
	Let $y\in S$ be a taxon such that $\ex(y) \ge \ex(x)$ for each $x\in S$.
	Observe that the set $S\setminus \{y\}$ is $(C\setminus c(y),\ex^*(y))$-feasible.
	Thus, because $y\in S\subseteq Z_p$ and $c(y)\cap C \ne \emptyset$, $\DP[C\setminus c(y),\ex^*(y)] \le \ell(S\setminus \{y\}) = \ell(S) - \ell(y)$ by the induction hypothesis.
	Because $S$ is $\ex^*(y)$-compatible, we conclude $\DP[C\setminus c(y),\ex^*(y)] + \ell(y) \le \ell(S) \le H_{\ex^*(y)}$.
	Hence, $\DP[C,p] \le \DP[C\setminus c(y),\ex^*(y)] + \ell(y) \le \ell(S)$.
	
	Secondly, suppose  $\DP[C,p] < \infty$.
	Then by the Recurrence~(\ref{eqn:split-D}) there is a taxon $x\in Z_p$ such that  $c(x)\cap C \ne \emptyset$ and $\DP[C,p] = \DP[C\setminus c(x),\ex^*(x)] + \ell(x)$.
	By the induction hypothesis there is a $(C\setminus c(x),\ex^*(x))$-feasible set $S'$ with $\DP[C\setminus c(x),\ex^*(x)] = \ell(S')$.
	Because $c(x)\cap(C\setminus c(x)) = \emptyset$, we may assume that $x\not\in S'$.
	Further, we conclude that $\DP[C,p] = \DP[C\setminus c(x),\ex^*(x)] + \ell(x) = \ell(S') + \ell(x) = \ell(S'\cup\{x\})$ and $S'\cup\{x\}$ is the desired $(C,p)$-feasible set.

	\proofpara{Running time}
	The table has $\Oh(2^D \cdot \var_{\ex})$ entries.
	For each $x\in X$ and every set of colors $C$ we can compute whether $c(x)$ and $C$ have a non-empty intersection in $\Oh(D)$ time.
	Then, we can compute the set $C\setminus c(x)$ in $\Oh(D)$ time.
	Each entry stores an integer that is at most $H_{\max_{\ex}}$ (or $\infty$).
	Thus the addition and the comparison in $\psi$ can be done in $\Oh(\log(H_{\max_{\ex}}))$ time, 
	and so the right side of Recurrence~(\ref{eqn:split-D}) can be computed in $\Oh(D \cdot \var_{\ex} \cdot n \cdot \log(H_{\max_{\ex}}))$ time.
	Altogether, we can compute a solution in $\Oh(2^D\cdot D\cdot \var_{\ex}^2 \cdot n \cdot \log(H_{\max_{\ex}}))$ time.
\end{proof}

For the algorithm in \Cref{lem:cs-D} it is crucial that 
the $(C,j)$-feasibility of a set $S$ depends only on $\ell(S)$ and the values $H_1,\dots, H_j$.
This is not the case for \ctPDws, as the available times of the teams impact a possible schedule.
Our solution for this issue is that we divide the colors (representing the diversity) that are supposed to be saved and delegate the colors to specific teams.
The problem then can be solved individually for the teams.
The division and delegation happens in the Recurrence~(\ref{eqn:ws-D}).

\begin{lemma}
	\label{lem:cws-D}
	\ctPDws can be solved in $\Oh^*(2^D\cdot D^3)$~time.
\end{lemma}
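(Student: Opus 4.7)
The plan is to solve \ctPDws by reducing to the single-team case of \Cref{lem:cs-D} and then combining across teams by a subset (OR-)convolution over colour subsets. The main subtlety is that a strict schedule assigns each taxon to exactly one team, so we must argue that a colour-based decomposition always yields a disjoint assignment of taxa to teams.

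In the first phase, for each team $t_i\in T$ I would apply the algorithm of \Cref{lem:cs-D} to the single-team sub-instance $(\Tree,\ell,\ex,\{t_i\},\cdot)$, obtaining for every $C\subseteq[D]$ the minimum length $\ell(A)$ of a set $A\subseteq X$ with $c(A)\supseteq C$ that team $t_i$ alone can save. From this table I read off the Boolean function $g_i : 2^{[D]}\to\{0,1\}$ with $g_i(C)=1$ iff team $t_i$ alone can save some $A_i$ with $c(A_i)\supseteq C$; each such computation takes $\Oh^*(2^D \cdot D)$ time. In the second phase I combine the $g_i$'s by a DP over colour subsets, initialising $\DP^0[\emptyset]=1$ and $\DP^0[C]=0$ for $C\ne\emptyset$, and using the recurrence
\[\DP^i[C] \;=\; \bigvee_{C'\subseteq C} \DP^{i-1}[C\setminus C'] \;\wedge\; g_i(C'),\]
where each layer is computed from the previous one via subset OR-convolution (standard zeta/M\"obius transforms) in $\Oh^*(2^D\cdot D^2)$ time. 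I return \yes iff $\DP^{|T|}[[D]]=1$; totalling both phases over the $|T|$ teams gives the claimed $\Oh^*(2^D\cdot D^3)$ running time.

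The crux, and the main obstacle, is showing that $\DP^{|T|}[[D]]=1$ exactly captures the existence of a strict valid $T$-schedule covering $[D]$. The forward direction is immediate: in a strict schedule with team $t_i$ saving $A_i$, the colour sets $C_i := c(A_i)$ witness the DP. For the backward direction, suppose the DP answers \yes with per-team witnesses $A_1,\dots,A_{|T|}$ satisfying $g_i(C_i)=1$ and $\bigcup_i C_i \supseteq [D]$. The $A_i$'s need not be pairwise disjoint, but I would repair this by a monotonicity argument: set $U := \bigcup_i A_i$, pick for each $x\in U$ any index $\pi(x) \in \{i : x \in A_i\}$, and let $A_i' := \{x\in U : \pi(x)=i\}$. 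Each $A_i' \subseteq A_i$ is still savable by $t_i$, since dropping taxa only frees timeslots in the single-team schedule; so by \Cref{lem:strictScheduleCondition} the disjoint family $(A_i')_i$ yields a strict valid $T$-schedule saving $U$. Finally $c(U) = \bigcup_i c(A_i) \supseteq \bigcup_i C_i \supseteq [D]$, so $U$ is a solution. This monotonicity step is the only non-routine part of the argument; everything else is standard dynamic programming over colour subsets.
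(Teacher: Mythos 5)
Your proposal is correct and follows essentially the same route as the paper: per-team single-team tables obtained via the machinery of \Cref{lem:cs-D}, then a Boolean combination across teams by convolution over colour subsets. The one noteworthy difference is your handling of possibly overlapping per-team witness sets: you repair them by assigning each taxon to one team and invoking monotonicity (a subset of a savable set is savable, by \Cref{lem:scheduleCondition}) together with \Cref{lem:strictScheduleCondition}, which is cleaner than the paper's exchange/optimality argument for the same issue.
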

\begin{proof}
	\proofpara{Table definition}
	We define a dynamic programming algorithm with three tables $\DP_0, \DP_1$ and $\DP_2$.
	Entries of table $\DP_0$ are computed with the idea of \Cref{lem:cs-D}.
	$\DP_1$ is an auxiliary table.
	A final solution can be found in table $\DP_2$ then.
	For this lemma, similar to $H_i$, we define $H_i^{(q)} := |\{(i,j) \in \mathcal{H}_{T} \mid j\le \ex_q \}|$.
	That is, $H_i^{(q)}$ are the person-hours of team $t_i$ until $\ex_q$.
	
	Let \Instance be an instance of \ctPDws.
	We define terms similar to \Cref{lem:cs-D}.
	For a team $t_i\in T$ (and an integer $p\in[\var_{\ex}]$), a set $S\subseteq X$ is \emph{$t_i$-compatible} (respectively, \emph{$(p,t_i)$-compatible}) if $\ell(S\cap Z_q) \le H_i^{(q)}$ for each $q\in [\var_{\ex}]$ ($q\in [p]$).
	Inspired by \Cref{lem:strictScheduleCondition}, for a $i\in [|T|]$ we call a set $S$ of taxa \emph{$[i]$-compatible} if there is a partition $S_1,\dots,S_i$ of $S$ such that $S_j$ is $t_j$-compatible for each $j\in [i]$.
	For a set of colors $C\subseteq [D]$, an integer $p\in[\var_{\ex}]$ and a team $t_i \in T$, we call a set $S$ of taxa \emph{$(C,p,t_i)$-feasible} if
	\begin{enumerate}[F a)]
		\itemsep-.35em
		\item $S$ is a subset of $Z_p$,
		\item $S$ is $(p,t_i)$-compatible,
		\item $C$ is a subset of $c(S)$, and
		\item ($S=\emptyset$ or $S\cap Y_p \ne \emptyset$).
	\end{enumerate}
	
	Formally, for a set of colors $C\subseteq [D]$, an integer $p\in [\var_{\ex}]$ and a team $t_i\in T$ we want that the table entries store
	\begin{itemize}
		\itemsep-.35em
		\item $\DP_0[C,p,i]$: The minimum length $\ell(S)$ of a $(C,p,t_i)$-feasible set of taxa $S\subseteq X$, and
		\item $\DP_1[C,i]$ (respectively $\DP_2[C,i]$): 1 if there is a $t_i$-compatible ($[i]$-compatible) set of taxa $S\subseteq X$ with $c(S) \supseteq C$ and 0 otherwise.
	\end{itemize}
	
	\proofpara{Algorithm}
	We store $\DP_0[\emptyset, p, i] = 0$ for each $p\in [\var_{\ex}]$ and $i\in [|T|]$.
	
	To compute further values of $\DP_0$, we use the recurrence
	\begin{equation}
	\DP_0[C,p,i] = \min_{x\in Y_p, c(x)\cap C\ne \emptyset, q\le p}
	\psi(\DP[C\setminus c(x),q,i] + \ell(x), H_i^{(q)}).
	\end{equation}
	Recall $\psi(a,b) = a$ if $a\le b$ and otherwise $\psi(a,b) = \infty$.
	Once all values in $\DP_0$ are computed, we compute $z_{C,i} := \min_{q\in [\var_{\ex}]} \DP[C,q,i]$
	and set $\DP_1[C,i] = 1$ if $z_{C,i} \le H_i^{(\max_{\ex})}$ and $\DP_1[C,i] = 0$ otherwise.
	
	Finally, we define $\DP_2[C,1] := \DP_1[C,1]$ and compute further values by
	\begin{equation}
	\label{eqn:ws-D}
	\DP_2[C,i+1] = \min_{C' \subseteq C} \DP_2[C',i] \cdot \DP_1[C\setminus C',i+1].
	\end{equation}
	
	We return \yes if $\DP_2[[D],|T|] = 1$ and otherwise we return \no.
	
	\proofpara{Correctness}
	Analogously as in \Cref{lem:cs-D}, one can prove that the entries of $\DP_0$ store the correct value.
	It directly follows by the definition that the entries of $\DP_1$ store the correct value.
	Likewise $\DP_2[C,1]$ stores the correct value for colors $C\subseteq [D]$ be the definition.
	
	Fix a set of colors $C\subseteq [D]$ and an integer $i\in [|T|-1]$.
	We assume as induction hypothesis that for each $K\subseteq C$ and $j\le i$ the entry $\DP_2[K,j]$ stores the correct value.
	To conclude that $\DP_2[C,j+1]$ stores the correct value, we show that 
		$\DP_2[C,i+1] \le \ell(S)$ for each $[i+1]$-compatible set $S\subseteq X$ with $c(S) \supseteq C$ and
		if $\DP_2[C,i+1] = a < \infty$ then there is an $[i+1]$-compatible set $S\subseteq X$ with $c(S) \supseteq C$ and $\ell(S) = a$.
	
	Let $S\subseteq X$ be an $[i+1]$-compatible set with $c(S) \supseteq C$.
	Let $S_1,\dots,S_{i+1}$ be a partition of $S$ such that $S_j$ is $t_j$-compatible for each $j\in [i+1]$.
	Define $\hat S := \bigcup_{j=1}^i S_j$ and $\hat C := C \cap c(\hat S)$.
	Then $\hat S$ is $[i]$-compatible with $c(\hat S) \supseteq \hat C$.
	Therefore $\DP_2[\hat C,i] \le \ell(\hat S)$ by the induction hypothesis.
	Because $c(S_{i+1}) \supseteq C \setminus \hat C$
	we conclude that $\DP_1[C\setminus \hat C,i+1] \le \ell(S_{i+1})$.
	Thus, $\DP_2[C,i+1] \le \DP_2[\hat C,i] + \DP_1[C\setminus \hat C,i+1] \le \ell(\hat S) + \ell(S_{i+1}) = \ell(S)$.
	
	Let $\DP_2[C,i+1]$ store the value $a<\infty$.
	By Recurrence~(\ref{eqn:ws-D}), there is a set of colors $C'\subseteq C$ such that $a = \DP_2[C',i] + \DP_1[C\setminus C',i+1]$.
	By the induction hypothesis we conclude then that there is
	an $[i]$-compatible set $S_1$ and
	a $t_{i+1}$-compatible set $S_2$
	such that $\DP_2[C',i] = \ell(S_1)$ with $c(S_1)\supseteq C'$ and $\DP_1[C\setminus C',i+1] = \ell(S_2)$ with $c(S_2)\supseteq C\setminus C'$.
	Assume first that $S_1$ and $S_2$ are disjoint.
	We define $S := S_1 \cup S_2$ and conclude with the disjointness that $\ell(S) = \ell(S_1) \cup \ell(S_2) = a$.
	Further, we conclude that $S$ is $[i+1]$-compatible and $c(S) = c(S_1) \cup c(S) = C' \cup (C\setminus C') = C$
	such that $S$ is the desired set.
	It remains to argue that non-disjoint $S_1$ and $S_2$ contradict the optimality of $C'$.
	
	Now let $A$ be the intersection of $S_1$ and $S_2$ and further let $\hat C := c(S_1)$ and $\hat S := S_2 \setminus A$.
	Observe that $\hat S$ is a $t_{i+1}$-compatible set with $c(\hat S) \supseteq (C\setminus C') \setminus c(A) = C \setminus \hat C$.
	Therefore $\DP_2[\hat C,i] + \DP_1[C\setminus \hat C,i+1] \le \ell(S_1) + \ell(\hat S) = \ell(S_1) + \ell(S_2) - \ell(A)  = \DP_2[C',i] + \DP_1[C\setminus C',i+1] - \ell(A)$,
	which unless $A=\emptyset$ contradicts the optimallity of $C'$.

	\proofpara{Running time}
	By \Cref{lem:cs-D}, all entries of $\DP_0$ can be computed in $\Oh(2^D\cdot D\cdot \var_{\ex}^2 \cdot n \cdot \log(H_{\max_{\ex}}))$ time.
	Table $\DP_1$ contains $2^D \cdot |T|$ entries which can be computed in $\Oh(\var_{\ex} \cdot \log(H_{\max_{\ex}}))$ time each.
	
	To compute the values of table $\DP_2$, we use convolutions.
	Readers unfamiliar to this topic, we refer to \cite[Sec. 10.3]{cygan}.
	We define functions $f_i, g_i: 2^{[D]} \to \{0,1\}$ where
	$f_i(C) := \DP_2[C,i]$ and $g_i(C) := \DP_1[C,i]$.
	Then we can express Recurrence~(\ref{eqn:ws-D}) as $\DP_2[C,i+1] = f_{i+1}(C) = (f_i * g_{i+1})(C)$.
	Therefore, for each $i\in [|T|]$ we can compute all values of $\DP_2[\cdot,i] = f_i(\cdot)$ in $\Oh(2^D \cdot D^3)$ time~\cite[Thm. 10.15]{cygan}.
	
	Altogether, in $\Oh(2^D\cdot D^3 \cdot |T| \cdot \var_{\ex}^2 \cdot n \cdot \log(H_{\max_{\ex}}))$ time, we can compute a solution of an instance of \ctPDws.
\end{proof}

Our following procedure
in a nutshell
is to use standard color coding techniques to show that in $\Oh^*(2^{\Oh(D)})$~time
one can reduce an instance $\Instance$ of \tPDs (respectively, \tPDws) to $z \in \Oh^*(2^{\Oh(D)})$ instances $\Instance_1,\dots,\Instance_z$ of \ctPDs (\ctPDws),
which can be solved using \Cref{lem:cs-D} and \Cref{lem:cws-D}.
Then \Instance is a \yes-instance, if and only if any $\Instance_i$ is a \yes-instance.

\begin{theorem}
	\label{thm:D}
	\tPDs and \tPDws can be solved in 
	$\Oh^*(2^{2.443\cdot D + o(D)})$~time.
\end{theorem}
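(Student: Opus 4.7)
The plan is to apply the color coding technique, reducing an instance $\Instance$ of \tPDs (resp. \tPDws) to a family of colored instances, each solvable in $\Oh^*(2^D)$ time by \Cref{lem:cs-D} (resp. \Cref{lem:cws-D}). The family will have size $2^{(1+\log_2 e)D/1 \cdot 1 + o(D)}/2^D = 2^{1.443 D + o(D)}$, so the total running time is $\Oh^*(2^{(1+\log_2 e)D + o(D)}) = \Oh^*(2^{2.443 D + o(D)})$, as $1+\log_2 e < 2.443$.

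First I would preprocess the instance: without loss of generality $\w(e)\leq D$ for every edge $e$, since truncating any heavier edge weight to $D$ preserves the \yes/\no answer (we only care about $PD\geq D$). Consequently the total edge weight $N:=\sum_{e\in E}\w(e)$ is at most $|E|\cdot D$, polynomial in the input. Conceptually I split each edge $e$ of weight $\w(e)$ into $\w(e)$ unit \emph{subedges}; let $U$ denote this set of subedges, so $|U|=N$. Any function $f:U\to [D]$ induces a coloring $c_f$ of $E$ by $c_f(e):=\{f(u)\mid u \text{ is a subedge of } e\}\subseteq[D]$. Although the formal definition of \ctPDs demands $|c(e)|=\w(e)$, inspecting the proofs of \Cref{lem:cs-D} and \Cref{lem:cws-D} shows they only use $c(e)\subseteq[D]$; so these algorithms apply verbatim to any colored instance $(\Instance,c_f)$.

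The key equivalence is: $\Instance$ is a \yes-instance of \tPDs iff $(\Instance,c_f)$ is a \yes-instance of \ctPDs for at least one $f$ drawn from an appropriate family $\mathcal{F}$. The direction from colored to uncolored follows because $|c_f(S)|\leq\sum_{e\in E_\Tree(S)}\w(e)=PD(S)$ for any $S$, so $c_f(S)=[D]$ forces $PD(S)\geq D$. For the other direction, given a solution $S^*$ with $PD(S^*)\geq D$, pick any subset $A\subseteq U$ of exactly $D$ subedges lying under edges of $E_\Tree(S^*)$; whenever $f\in\mathcal{F}$ is injective on $A$, $c_f(S^*)\supseteq f(A)=[D]$. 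To guarantee such an $f$ exists, I would use an $(N,D)$-perfect hash family $\mathcal{F}$, which by the construction of Naor, Schulman, and Srinivasan can be taken of size $e^D\cdot D^{O(\log D)}\cdot\log N=2^{D\log_2 e+o(D)}\cdot \log N$ and be constructed in time proportional to this size.

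Assembling the algorithm: enumerate all $f\in\mathcal{F}$, build the colored instance $(\Instance,c_f)$, and run the algorithm of \Cref{lem:cs-D} or \Cref{lem:cws-D}; accept iff at least one colored call accepts. The total running time is $|\mathcal{F}|\cdot\Oh^*(2^D)=\Oh^*(2^{D\log_2 e+o(D)}\cdot 2^D)=\Oh^*(2^{(1+\log_2 e)D+o(D)})=\Oh^*(2^{2.443D+o(D)})$. The main obstacle, aside from invoking the right off-the-shelf hash family, is the careful verification that the relaxed coloring condition $c_f(e)\subseteq[D]$ (rather than $|c_f(e)|=\w(e)$) does not compromise \Cref{lem:cs-D} or \Cref{lem:cws-D}; this follows by re-reading those proofs, which nowhere invoke $|c(e)|=\w(e)$.
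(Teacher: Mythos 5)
Your proposal is correct and takes essentially the same route as the paper: both expand each edge into $\w(e)$ unit ``subedges,'' apply an $(N,D)$-perfect hash family of Naor--Schulman--Srinivasan to induce edge colorings, and solve each resulting colored instance with the dynamic programs of \Cref{lem:cs-D} and \Cref{lem:cws-D}, yielding the same $(2e)^D\cdot 2^{o(D)}$ bound. Your explicit check that the colored algorithms only need $c_f(e)\subseteq[D]$ (and never $|c_f(e)|=\w(e)$) is a welcome extra care point, since the paper's own construction can likewise produce non-injective colorings on a single edge without commenting on it.
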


For an overview of color coding, we refer the reader to~\cite[Sec.~5.2~and~5.6]{cygan}.

The key idea is that we construct a family $\mathcal{C}$ of colorings on the edges of $\Tree$, where each edge $e\in E$ is assigned a subset of $[D]$ of size $\w(e)$.
Using these, we generate~$|\mathcal{C}|$ instances of the colored version of the problem, which we then solve in $\Oh^*(2^D\cdot D)$ time.
Central for this will be the concept of a perfect hash family.

\begin{definition}
	\label{def:hash-fam}
	For integers $W$ and $D$,
	a~\emph{$(W,D)$-perfect hash family $\mathcal{F}$} is a family of functions $f: [W] \to [D]$ such that for every subset $Z$ of $[W]$ of size~$[D]$, some $f \in \mathcal{F}$ exists such that $|f^{-1}(i)\cap Z| = 1$ for each $i \in [D]$.
\end{definition}

\begin{proposition}[\cite{Naor,cygan}]
	For any integers $W, D \geq 1$ one can construct a $(W, D)$-perfect hash family of size $e^D D^{\Oh(\log D)} \cdot \log W$ in time $e^D D^{\Oh(\log D)} \cdot W \log W$.
\end{proposition}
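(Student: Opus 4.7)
The plan is to build the $(W,D)$-perfect hash family in two stages --- first reduce the universe from $[W]$ to a small set of size polynomial in $D$, then construct a perfect hash family on that small universe --- and compose them.

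\emph{Stage 1 (universe reduction).} I would construct a family $\mathcal{G}_1$ of functions $[W]\to [R]$ with $R = \Theta(D^2)$, of size $\Oh(D^2 \log W)$, such that for every $D$-subset $Z \subseteq [W]$ some $g \in \mathcal{G}_1$ is injective on $Z$. This is achieved using a standard pairwise-independent hash family combined with amplification: a random function from a pairwise-independent family hashes $D$ elements into $\Theta(D^2)$ buckets injectively with constant probability by a birthday-style collision argument (expected number of colliding pairs is $\binom{D}{2}/R < 1/2$), and taking $\Oh(\log W)$ such functions, chosen by the method of conditional expectations to cover all $D$-subsets, suffices. The explicit construction runs in time $\Oh(W \log W \cdot \poly(D))$.

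\emph{Stage 2 (small-universe perfect hashing).} Next I would construct a family $\mathcal{G}_2$ of functions $[R] \to [D]$ of size $e^D \cdot D^{\Oh(\log D)}$ that is an $(R, D)$-perfect hash family. The key probabilistic fact is that a uniformly random function $[R] \to [D]$ is a bijection when restricted to any fixed $D$-subset with probability $D!/D^D = \Theta(e^{-D}\sqrt{D})$ by Stirling. Hence the probabilistic method with a union bound over the $\binom{R}{D} \le D^{\Oh(D)}$ subsets already yields existence of a family of size $e^D \cdot \poly(D)$. To derandomize without destroying the tight $e^D$ factor, I would iterate splitters of progressively finer granularity: at each of $\Oh(\log D)$ levels, build an $(R, D, l)$-splitter that splits every $D$-subset evenly into $l$ blocks, doubling $l$ each time from $2$ up to $D$. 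Concatenating these splitters and solving the perfect-hashing problem on each constant-size block by brute force gives the $e^D \cdot D^{\Oh(\log D)}$ bound.

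\emph{Stage 3 (composition).} Setting $\mathcal{F} := \{g_2 \circ g_1 : g_1 \in \mathcal{G}_1,\; g_2 \in \mathcal{G}_2\}$ yields a $(W, D)$-perfect hash family: for any $D$-subset $Z \subseteq [W]$, first pick $g_1 \in \mathcal{G}_1$ injective on $Z$, so $g_1(Z)$ is a $D$-subset of $[R]$; then pick $g_2 \in \mathcal{G}_2$ bijective on $g_1(Z)$; the composition $g_2 \circ g_1$ is the required bijection on $Z$. The total size is $|\mathcal{G}_1| \cdot |\mathcal{G}_2| = e^D \cdot D^{\Oh(\log D)} \cdot \log W$, and the total construction time is $e^D \cdot D^{\Oh(\log D)} \cdot W \log W$, since enumerating the family and evaluating each composed function takes time polynomial in $D$ and $\log W$ per element.

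The main obstacle is Stage 2: matching the probabilistic $e^{-D}$ success rate with a constructive family of only $e^D \cdot D^{\Oh(\log D)}$ functions. A naive derandomization by conditional expectations or $D$-wise independence would introduce an extra $\poly(R) = \poly(D)$ blowup at every level and thus destroy the clean $e^D$ base. The splitter-based recursion over $\Oh(\log D)$ scales is precisely what converts the small per-level overhead into a quasi-polynomial $D^{\Oh(\log D)}$ factor while keeping the exponential base at exactly $e^D$.
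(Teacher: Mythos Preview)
The paper does not prove this proposition at all; it is stated as a black-box citation to Naor--Schulman--Srinivasan and to the Cygan et al.\ textbook, and is used only as a tool in the proof of \Cref{thm:D}. There is therefore no ``paper's own proof'' to compare against.

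Your sketch does follow the architecture of the actual Naor--Schulman--Srinivasan construction (universe reduction to size $\poly(D)$, then a splitter-based recursion of depth $\Oh(\log D)$ on the small universe, then composition), and your explanation of where the $e^D$ and $D^{\Oh(\log D)}$ factors arise is correct. One technical wobble in Stage~1: the line ``taking $\Oh(\log W)$ such functions \dots\ suffices'' does not close---there are $\binom{W}{D}\approx W^D$ subsets to cover, so a greedy/conditional-expectations argument with constant per-function success probability needs $\Oh(D\log W)$ rounds, not $\Oh(\log W)$. This is harmless for the final bound (it is absorbed into $D^{\Oh(\log D)}\cdot \log W$), but the justification as written is off. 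Since the proposition is a cited result anyway, the cleanest fix is simply to point to the references rather than re-derive the construction.
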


\begin{proofm}{of \Cref{thm:D}}
	\proofpara{Reduction}
	We focus on \tPDs and omit the analogous proof for \tPDws.
	Let $\Instance = (\Tree, r, \ell, T, D)$ be an instance of \tPDs.
	We assume for each taxon $x\in X$ there is a valid $T$-schedule saving $\{x\}$, as we can delete $x$ from \Tree, otherwise.
	Therefore, if there is an edge $e$ with $\w(e) \ge D$, then $\{x\}$ is a solution for each $x\in \off(e)$ and we have a trivial \yes-instance.
	Hence, we assume that $\max_\w < D$.
	
	Now, order the edges $e_1, \dots, e_m$ of $\Tree$ arbitrarily and
	define $W_0 := 0$ and $W_j := \sum_{i=1}^{j} \w(e_{i})$ for each $j\in [m]$.
	We set $W := W_m$.
	Let $\mathcal{F}$ be a $(W, D)$-perfect hash family.
	Now we define the family of colorings $\mathcal{C}$ as follows.
	For every $f \in \mathcal{F}$, let $c_f$ be the coloring that assigns edge $e_j$ the colors $\{f(W_{j-1}+1), \dots, f(W_j)\}$.
	
	For each $c_f \in \mathcal{C}$,
	let $\Instance_{c_f} = (\Tree, r, \ell, T, D, c_f)$ be the corresponding instance of \ctPDs.
	Now solve each $\Instance_{c_f}$ using the algorithm of \Cref{lem:cws-D}, and return \yes if and only if $\Instance_{c_f}$ is a \yes-instance for some $c_f \in \mathcal{C}$.
	
	\proofpara{Correctness}
	For any subset of edges $E'$ with $\w(E') \geq D$, there is a corresponding subset of $[W]$ of size at least $D$.
	Since $\mathcal{F}$ is a $(W, D)$-perfect hash family, it follows that $c_f(E') = [D]$, for some $c_f \in \mathcal{C}$.
	Thus in particular, if $S\subseteq X$ is a solution for $\Instance$, then $c_f(S) = [D]$, for some $c_f \in \mathcal{C}$, where $c_f(S)$ is the colors assigned by $c_f$ to the edges between the root and a taxon of $S$.
	It follows that one of the constructed instances of \ctPDs is a \yes-instance.
	
	Conversely, it is easy to see that any solution $S$ for one of the constructed instances of \ctPDs is also a solution for $\Instance$.

	\proofpara{Running Time}
	The construction of $\mathcal{C}$ takes $e^D D^{\Oh(\log D)} \cdot W \log W$ time, and for each $c \in \mathcal{C}$ the construction of each instance of \ctPDs takes time polynomial in $|\Instance|$. Solving each instance of \ctPDs takes $\Oh^*(2^D\cdot D)$~time, and the number of instances is $|\mathcal{C}| = e^D D^{\Oh(\log D)} \cdot \log W$.
	
	Thus, the total running time is 
	$\Oh^*(e^D D^{\Oh(\log D)} \log W \cdot (W + (2^D\cdot D)))$.
	Because $W = \PD(X) < 2n\cdot D$ this simplifies to $\Oh^*((2e)^D \cdot 2^{\Oh(\log^2(D))})$.
\end{proofm}

\section{The acceptable loss of diversity \Dbar}
\label{sec:Dbar}
In this section we show that \tPDs is \FPT with respect to the acceptable loss of phylogenetic diversity \Dbar, which is defined as $\PD(X)-D$.
For a set of taxa~$A \subseteq X$, define $E_d(A) := \{ e\in E \mid \off(e)\subseteq A \} = E \setminus E_\Tree(X\setminus A)$.
That is, $E_d(A)$ is the set of edges that \emph{do not} have offspring in $X\setminus A$. 
Then one may think of \tPDs as the problem of finding a set of taxa $S$ such that there is a valid $T$-schedule saving $S$ and $\w(E_d(X\setminus S))$ is at most $\Dbar$.

In order to show the desired result, we again use the techniques of color-coding and dynamic programming.
For an easier understanding of the following definitions consider \Cref{fig:example-Dbar-definitions}.
\begin{figure}[t]
	\centering
	\resizebox{1.1\columnwidth}{!}{%
	\begin{tikzpicture}[scale=1,every node/.style={scale=0.9}]
			\tikzstyle{txt}=[circle,fill=white,draw=white,inner sep=0pt]
			\tikzstyle{nde}=[circle,fill=black,draw=black,inner sep=2.5pt]
			
			\node[nde] (v0) at (5,10) {};
			\node[nde] (v1) at (2,9) {};
			\node[nde] (v2) at (5,9) {};
			\node[nde] (v3) at (8,9) {};
			\node[nde] (v4) at (1.6,8) {};
			\node[nde] (v5) at (2.4,8) {};
			\node[nde] (v6) at (4.6,8) {};
			\node[nde] (v7) at (5.4,8) {};
			\node[nde] (v8) at (7.6,8) {};
			\node[nde] (v9) at (8.4,8) {};
			
			\node[txt,xshift=9mm,yshift=-10mm] (c1) [above=of v1] {$e_1$};
			\node[txt,xshift=4mm,yshift=-10mm] (c2) [above=of v2] {$e_2$};
			\node[txt,xshift=-9mm,yshift=-10mm] (c3) [above=of v3] {$e_3$};
			
			\node[txt,xshift=-2mm,yshift=-10mm] (c4) [above=of v4] {$e_4$};
			\node[txt,xshift=2mm,yshift=-10mm] (c5) [above=of v5] {$e_5$};
			\node[txt,xshift=-2mm,yshift=-10mm] (c6) [above=of v6] {$e_6$};
			\node[txt,xshift=2mm,yshift=-10mm] (c7) [above=of v7] {$e_7$};
			\node[txt,xshift=-2mm,yshift=-10mm] (c8) [above=of v8] {$e_8$};
			\node[txt,xshift=2mm,yshift=-10mm] (c9) [above=of v9] {$e_9$};
			
			\node[txt,yshift=9mm] (r4) [below=of v4] {$15$};
			\node[txt,yshift=9mm] (r5) [below=of v5] {$30$};
			\node[txt,yshift=9mm] (r6) [below=of v6] {$25$};
			\node[txt,yshift=9mm] (r7) [below=of v7] {$15$};
			\node[txt,yshift=9mm] (r8) [below=of v8] {$25$};
			\node[txt,yshift=9mm] (r9) [below=of v9] {$25$};
			
			\node[txt,yshift=9mm] (l4) [below=of r4] {$10$};
			\node[txt,yshift=9mm] (l5) [below=of r5] {$13$};
			\node[txt,yshift=9mm] (l6) [below=of r6] {$9$};
			\node[txt,yshift=9mm] (l7) [below=of r7] {$7$};
			\node[txt,yshift=9mm] (l8) [below=of r8] {$9$};
			\node[txt,yshift=9mm] (l9) [below=of r9] {$12$};
			
			\node[txt,xshift=3mm] (l'4) [left=of l4] {$\ell(x)$};
			\node[txt,xshift=3mm] (r'4) [left=of r4] {$\ex(x)$};
			
			\node[txt,xshift=-9mm] (t0) [right=of v0] {$v_0$};
			\node[txt,xshift=9mm] (t1) [left=of v1] {$v_1$};
			\node[txt,xshift=-9mm] (t2) [right=of v2] {$v_2$};
			\node[txt,xshift=-9mm] (t3) [right=of v3] {$v_3$};
			\node[txt,xshift=9mm] (t4) [left=of v4] {$x_1$};
			\node[txt,xshift=-9mm] (t5) [right=of v5] {$x_2$};
			\node[txt,xshift=9mm] (t6) [left=of v6] {$x_3$};
			\node[txt,xshift=-9mm] (t7) [right=of v7] {$x_4$};
			\node[txt,xshift=9mm] (t8) [left=of v8] {$x_5$};
			\node[txt,xshift=-9mm] (t9) [right=of v9] {$x_6$};
			
			\draw[thick,dashed,blue,arrows = {-Stealth[length=8pt]}] (v0) to (v1);
			\draw[thick,arrows = {-Stealth[length=8pt]}] (v0) to (v2);
			\draw[thick,arrows = {-Stealth[red,length=8pt]}] (v0) to (v3);
			\draw[thick,dashed,blue,arrows = {-Stealth[length=8pt]}] (v1) to (v4);
			\draw[thick,dashed,blue,arrows = {-Stealth[red,length=8pt]}] (v1) to (v5);
			\draw[thick,arrows = {-Stealth[length=8pt]}] (v2) to (v6);
			\draw[thick,arrows = {-Stealth[length=8pt]}] (v2) to (v7);
			\draw[thick,arrows = {-Stealth[red,length=8pt]}] (v3) to (v8);
			\draw[thick,dashed,blue,arrows = {-Stealth[length=8pt]}] (v3) to (v9);
			
			\node[txt] at (11,8.6) {
				\begin{tabular}{c|cc}
					$e$ & $\hat c(e)$ & $c^-(e)$ \\
					\hline
					$e_1$ & 2 & $\{1,7\}$ \\
					$e_2$ & 9 & $\{2,3\}$ \\
					$e_3$ & 12 & $\in E_b$ \\
					$e_4$ & 10 & $\{9\}$ \\
					$e_5$ & 6 & $\{4,11\}$ \\
					$e_6$ & 3 & $\{7\}$ \\
					$e_7$ & 8 & $\in E_b$ \\
					$e_8$ & 4 & $\emptyset$ \\
					$e_9$ & 5 & $\{3,8\}$ \\
				\end{tabular}
					};
		\end{tikzpicture}
	}
		
	\caption{A hypothetical extinction-6-colored $X$-tree.
			The anchored taxa set $\mathcal A := \{(x_1,v_1,e_5), (x_2,v_0,e_3), (x_6,v_3,e_8)\}$
			with $c(E^+(\mathcal A)) = c(\{ e_1, e_4, e_5, e_9 \}) = [11]$ 
			and $\hat c(E_s(\mathcal A)) = \hat c(\{ e_3, e_5, e_8 \}) = \{ 4, 6, 12 \}$
			is color-respectful.
			The edges in $E^+(\mathcal A)$ are blue and dashed. The edges in $E_s(\mathcal A)$ have a red arrow.
			The anchored taxa set $\mathcal A' := \{(x_1,v_0,e_3), (x_2,v_1,e_4), (x_6,v_3,e_8)\}$
			does not have a valid ordering.
			}
	\label{fig:example-Dbar-definitions}
\end{figure}
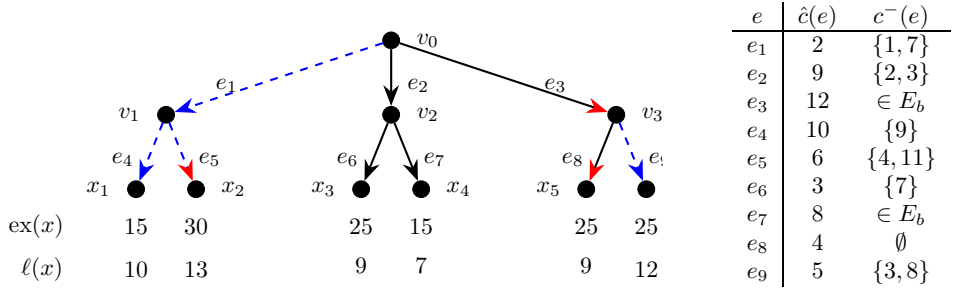%

We start by defining a colored version of the problem.
With $E_{\le \Dbar}$ (or $E_{> \Dbar}$) we denote the set of edges $e\in E$ with $\w(e)\le \Dbar$ (or $\w(e) > \Dbar$).
An \emph{extinction-$\Dbar$-colored} $X$-tree is a directed binary $X$-tree $\Tree = (V,E,\w,\hat c,c^-)$,
where $\hat c$ assigns each edge $e\in E$ a \emph{key-color} $\hat c(e) \in [2\Dbar]$ and
$c^-$ assigns edges $e\in E_{\le \Dbar}$ a set of colors $c^-(e) \subseteq [2\Dbar]\setminus \{\hat c(e)\}$ of size $\w(e)-1$.
With $c(e)$ we denote the union of the two sets $c^-(e) \cup \{\hat c(e)\}$ for each edge $e\in E_{\le \Dbar}$.

Observe that while in \Cref{sec:D} we wanted the set of edges with an offspring in $S$ to use every color at least once, here we want that the edges in $E_d(S)$ use \emph{at most} a certain number of colors.

Note that the set of colors $c(E_d(A))$ cannot be determined strictly from the sets $c(E_d(\{x\}))$ for each $x \in X$.
In particular, if two taxa $x_1$ and $x_2$ do not share a parent then $E_d(\{x_1,x_2\}) = E_d(\{x_1\}) \cup E_d(\{x_2\})$,
but if $x_1$ and $x_2$ are the only two children of $v$ then $E_d(\{x_1,x_2\})$ also contains the parent-edge of $v$, which is not in $E_d(\{x_1\})$ or $E_d(\{x_2\})$.
This presents a challenge when designing a dynamic programming algorithm that adds taxa to a partial solution one taxon at a time.
To get around this, we introduce the following concept.
An \emph{anchored taxa set} $\mathcal A$ is a set of tuples $(x,v,e)$, where $x\in X$ is a taxon, $v$ is a strict ancestor of $x$ and $e$ is an outgoing edge of $v$ with $x\not\in \off(e)$.
In the rest of the section whenever we refer to a tuple $(x,v,e)$, we always assume $x\in \off(v)$ and $e$ is an outgoing edge of $v$ with $x\not\in \off(e)$.
We denote with $X(\mathcal A) := \{ x \mid (x,v,e) \in \mathcal A \}$ the taxa of an anchored taxa set.
For an $X$-tree \Tree, a taxon $x\in X$ and a vertex $v \in \anc(x)$, we define $\Pvx$ to be the set of edges on the path between~$v$ and~$x$.

For an anchored taxa set $\mcA$,
we define edge sets $E^+(\mcA) := \bigcup_{(x,v,e) \in \mcA} \Pvx$ and
$E_s(\mcA) := \{ e \mid (x,v,e) \in \mcA \}$.
Informally, $E^+(\mcA)$ is the set of edges that connect the taxa of the anchored taxa set with the anchors and
the edges in $E_s(\mcA)$ are sibling-edges of the topmost edges of $\Pvx$ for each $(x,v,e) \in \mcA$. These sibling edges may or may not be in $E_d(X(\mcA))$, depending on whether $e$ is part of $P_{u,y}$ for some other $(y,u,e') \in \mcA$.

A set of edges $E'$ has \emph{unique colors} (or \emph{unique key-colors}), if the sets 
$c(e_1)$ and $c(e_2)$ (respectively, $\{\hat c(e_1)\}$ and $\{\hat c(e_2)\}$)
are disjoint for each tuple $e_1,e_2\in E'$, with $e_1 \ne e_2$.
We define $\mcA_j := \{ t_i \in \mcA \mid i\le j \}$.
An anchored taxa set $\mcA$ has a \emph{valid ordering $(x_1,v_1,e_1),\dots,(x_{|\mcA|},v_{|\mcA|},e_{|\mcA|})$} if $(x_i,v_i,e_i) \in \mcA$, $\ex(x_i) \le \ex(x_j)$ and $\hat c(e_j) \not\in c(E^+(\mcA_j))$ for each pair $i,j\in [|\mcA|]$, $i\le j$.
An anchored taxa set $\mcA$ is \emph{color-respectful} if 
\begin{enumerate}[CR a)]
	\itemsep-.35em
	\item\label{it:Ca} $E^+(\mcA)$ has unique colors,
	\item\label{it:Cb} $E_s(\mcA)$ has unique key-colors,
	\item\label{it:Cc} $E^+(\mcA)$ and $E_{> \Dbar}$ are disjoint,
	\item\label{it:Cd} $\mcA$ has a valid ordering, and
	\item\label{it:Ce} $\Pvx$ and $P_{u,y}$ are disjoint for any $(x,v,e),(y,u,e') \in \mcA$.
\end{enumerate}

The existence of a color-respectful anchored taxa set will be used to show that an instance of \tPDs is a \yes-instance. To formally show this, we first define a colored version of the problem.

\problemdef{\cBartPDslong\Dbar (\cBartPDs\Dbar)}{
	An extinction-$\Dbar$-colored $X$-tree $\Tree = (V,E,\w,\hat c,c^-)$,
	integers $\ell(x)$ and $\ex(x)$ for each $x\in X$,
	and teams $T$.
}{
	Is there an anchored taxa set $\mcA$ such that $\mcA$ is color-respectful, $|c(E^+(\mcA))| \le \Dbar$ and there is a valid $T$-schedule saving $X\setminus X(\mcA)$?
}

The following lemma shows how \cBartPDs\Dbar is relevant to \tPDs.

\begin{lemma}\label{lem:coloredYes}
	Let $\Instance' = (\Tree', r, \ell, T, D)$ with $\Tree' = (V,E,\w,\hat c,c^-)$, be an instance of \cBartPDs\Dbar
	and let $\Instance = (\Tree, r, \ell, T, D)$ with $\Tree = (V,E,\w)$ be the instance of \tPDs induced by $\Instance'$ when omitting the coloring.
	If $\Instance'$ is a \yes-instance of \cBartPDs\Dbar, then $\Instance$ is a \yes-instance of \tPDs. 
\end{lemma}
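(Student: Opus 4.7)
The plan is to verify that $S := X \setminus X(\mcA)$ is a valid solution for $\Instance$ seen as an instance of \tPDs. Since a valid $T$-schedule saving $S$ is given by hypothesis, I only need to show $\PD(S) \ge D$. Using the identity $\PD(X) - \PD(S) = \w(E_d(X(\mcA)))$, which follows directly from $E_d(A) = E \setminus E_\Tree(X \setminus A)$, this reduces to proving $\w(E_d(X(\mcA))) \le \Dbar$.

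The whole argument hinges on the structural containment
\begin{equation*}
E_d(X(\mcA)) \subseteq E^+(\mcA).
\end{equation*}
Once this is in place, the coloured assumptions close the bound quickly: by CR c) every $e \in E^+(\mcA)$ lies in $E_{\le \Dbar}$ and so $|c(e)| = \w(e)$; by CR a) the sets $c(e)$ for $e \in E^+(\mcA)$ are pairwise disjoint, so $|c(E^+(\mcA))| = \w(E^+(\mcA))$. Combined with the hypothesis $|c(E^+(\mcA))| \le \Dbar$, this gives $\w(E_d(X(\mcA))) \le \w(E^+(\mcA)) \le \Dbar$, as required.

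To establish the containment, I would argue by contradiction: suppose an edge $e = uw$ lies in $E_d(X(\mcA)) \setminus E^+(\mcA)$. Then $\off(w) \subseteq X(\mcA)$, so every $x \in \off(w)$ has a tuple $(x, v_x, e_x) \in \mcA$. If $v_x$ were an ancestor of $u$ or equal to $u$, the path $P_{v_x, x}$ would include $e$, forcing $e \in E^+(\mcA)$; hence $v_x$ lies in the subtree rooted at $w$ (possibly equal to $w$). Let $S_w \subseteq \mcA$ consist of the tuples whose taxon lies in $\off(w)$, let $(x^*, v^*, e^*)$ be the last member of $S_w$ in the valid ordering at position $j^*$, and let $w^*$ be the target of $e^*$. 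Every $y \in \off(w^*) \subseteq \off(w)$ has its tuple in $S_w$ at a position $\le j^*$, so $P_{v_y, y} \subseteq E^+(\mcA_{j^*})$.

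If some $v_y$ were a strict ancestor of $w^*$, then $P_{v_y, y}$ would contain $e^* = v^* w^*$, placing $\hat c(e^*)$ in $c(E^+(\mcA_{j^*}))$ and contradicting the key-colour condition in CR d). So every such $v_y$ lies inside the subtree rooted at $w^*$. If $w^*$ is a leaf, the unique taxon $w^*$ itself would require $v_{w^*}$ to be both a strict ancestor of $w^*$ and contained in $\{w^*\}$, a contradiction. Otherwise the situation at $w^*$ is identical to the one at $w$ and I would recurse; since $w^*$ is strictly deeper than $w$, the descent must terminate at a leaf, yielding the contradiction. The main obstacle is selecting the correct deepening step---the sibling target of the \emph{last} tuple in $S_w$---and recognising that it is CR d) (rather than CR a), b), c), or e)) which prevents an anchor pile-up strictly below a vertex whose subtree lies entirely in $X(\mcA)$.
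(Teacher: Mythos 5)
Your proposal is correct, and its outer skeleton coincides with the paper's: reduce the claim to $\w(E_d(X(\mcA)))\le\Dbar$ via the identity $\PD(X)-\PD(S)=\w(E_d(X\setminus S))$, establish the containment $E_d(X(\mcA))\subseteq E^+(\mcA)$, and then convert weight into colours using CR~c) and CR~a) together with $|c(E^+(\mcA))|\le\Dbar$. Where you genuinely diverge is in the proof of the containment itself. The paper runs a forward induction along the valid ordering, showing $E_d(X(\mcA_j))\subseteq E^+(\mcA_j)$ for each prefix: the key-colour condition forces $e_j\notin E^+(\mcA_j)$, hence $e_j$ has an offspring outside $X(\mcA_j)$, which keeps every edge strictly above $v_j$ alive and confines the newly dead edges to $P_{v_j,x_j}$. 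You instead assume a bad edge $e=uw$ exists and drive a descent: all anchors of $\off(w)$ must sit inside the subtree at $w$, the sibling-edge $e^*$ of the \emph{last}-ordered tuple anchored there pushes the anchors one level deeper (again via CR~d), and the descent bottoms out at a leaf whose anchor cannot be a strict ancestor of itself. Both arguments lean on exactly the same ingredient --- that $\hat c(e_j)\notin c(E^+(\mcA_j))$ excludes $e_j$ from $E^+(\mcA_j)$, which implicitly needs CR~c) so that $\hat c(e_j)\in c(e_j)$ is even defined --- but the paper's induction additionally yields the structural fact $E_d(X(\mcA_j))\subseteq E^+(\mcA_j)$ for every prefix, which is the form reused in the surrounding dynamic program, whereas your contradiction argument is more self-contained and avoids having to characterise the incremental difference $E_d(X(\mcA_j))\setminus E_d(X(\mcA_{j-1}))$. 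Either way the lemma is proved.
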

\begin{proof}
	Let $\mcA$ be a solution for $\Instance'$. As there exists a $T$-schedule saving the set of taxa $S := X\setminus X(\mcA)$, it is sufficient to show that $\PD(S) \ge D$.
	 
	Since every edge in $\Tree$ either has an offspring in $S$ or is in $E_d(X\setminus S)$, it follows that $\w(E_d(X\setminus S)) = \PD(X) - \PD(S)$. Thus in order to show $\PD(S) \ge D$ it  remains to show that $\w(E_d(X\setminus S)) \leq \PD(X) - D = \Dbar$.
	
	Let $(x_1,v_1,e_1),\dots,(x_{|\mcA|},v_{|\mcA|},e_{|\mcA|})$ be a valid ordering of $\mcA$ and let 
	$\mcA_j := \{ (x_i,v_i) \in \mcA \mid i\le j \}$ for each $j \in [|\mcA|]$.
	We prove by induction on $j$ that $E_d(X(\mcA_j))\subseteq E^+(\mcA_j)$ for each $j \in [|\mcA|]$.
	 
	For the base case $j= 1$, we have that $X(\mcA_1) = \{x_1\}$ and $E_d(\{x_1\})$ consists of the single incoming edge of $x$. As this edge is part of $P_{v_1,x_1}$, the claim holds.
	 
	For the induction step, assume that $j >1$ and  $E_d(X(\mcA_{j-1}))\subseteq E^+(\mcA_{j-1})$. 
	It is sufficient to show that $E_d(X(\mcA_j)) \setminus E_d(X(\mcA_{j-1})) \subseteq P_{v_j,x_j}$, as $E^+(\mcA_j) =  E^+(\mcA_{j-1}) \cup P_{v_j,x_j}$.
	To see this, consider the edge $e_j$.
	Because $\mcA$ has a valid ordering, $\hat c(e_j) \not\in c(E^+(\mcA_j))$, which implies $e_j \notin E^+(\mcA_j)$.
	By the inductive hypothesis, $E_d(X(\mcA_{j-1}))\subseteq E^+(\mcA_{j-1}) \subseteq E^+(\mcA_j)$ and so $e_j \notin E_d(X(\mcA_{j-1}))$.
	Thus $e_j$ has an offspring $z$ not in $X(\mcA_{j-1})$.
	By definition $x_j \notin \off(e_j)$, so we conclude that $z \notin X(\mcA_{j-1}) \cup \{x_j\} = X(\mcA_j)$. 
	Furthermore $z \in \off(e')$ for any edge $e'$ above $v_j$, and so $e' \notin E_d(X(\mcA_j))$ for any such $e'$.
	It follows that the only edges in $E_d(X(\mcA_j)) \setminus E_d(X(\mcA_{j-1}))$ must be between $x_j$ and $v_j$, that is, in $P_{v_j,x_j}$.
	We conclude $E_d(X(\mcA_j))\subseteq E_d(X(\mcA_{j-1}))\cup P_{v_j,x_j} \subseteq E^+(\mcA_{j-1}) \cup P_{v_j,x_j} = E^+(\mcA_j)$.
	
	Letting $j = |\mcA|$, we have $E_d(X(\mcA)) \subseteq E^+(\mcA)$.
	This implies $\w(E_d(X(\mcA))) \leq \w(E^+(\mcA)) = \sum_{e \in E^+(\mcA)}|c(e)| = |c(E^+(\mcA))|$, where the last equality holds because  $E^+(\mcA)$ has unique colors.
	We have $\w(E_d(X(\mcA))) \leq \Dbar$ since $|c(E^+(\mcA))|\leq \Dbar$ and so $\PD(S) \ge D$, as required.
\end{proof}

\medskip

In the following,
we continue to define a few more things.
Then, we present Algorithm~$(\Dbar)$ for solving \cBartPDs\Dbar.
Afterward, we analyze the running time of Algorithm~$(\Dbar)$.
Finally we show how to reduce instances of \tPDs to instances of \cBartPDs\Dbar to conclude the desired \FPT-result.
\medskip

Define $\Hbar{p} := \ell(Z_p) - H_p$ to be the number of person-hours one would need additionally to be able to save all taxa in $Z_p$ (not regarding the time constraints $\ex_1,\dots,\ex_{p-1}$).
We call a set of taxa $A\subseteq X$  \emph{ex-$q$-compatible} for $q\in[\var_{\ex}]$ if $\ell(A\cap Z_p) \ge \Hbar{p}$ for each $p\in [q-1]$. 
Note that in order for there to be a valid $T$-schedule saving $S\subseteq X$, it must hold that $\ell(S\cap Z_p) \leq H_p$, and so $\ell((X\setminus S)\cap Z_p) \geq \Hbar{p}$, for each $p \in[\var_{\ex}]$.
Thus, if $A$ is ex-$q$-compatible then is a valid $T$ schedule saving $(X\setminus A)\cap Z_{q-1}$ (but not necessarily $(X\setminus A)\cap Z_{q}$).

For our dynamic programming we will need to track the existence of color-respectful anchored taxa sets using particular sets of colors. To this end we define the notion of \emph{ex-$(C_1,C_2,q)$-feasibility}.
For sets of colors $C_1,C_2\subseteq [2\Dbar]$ and an integer $q\in[\var_{\ex}]$, an anchored taxa set $\mathcal A$ is \emph{ex-$(C_1,C_2,q)$-feasible} if 
\begin{enumerate}[F a)]
	\itemsep-.35em
	\item\label{it:Fa} $\mcA$ is color-respectful,
	\item\label{it:Fb} $c(E^+(\mcA))$ is a subset of $C_1$,
	\item\label{it:Fc} $\hat c(e)$ is in $C_2 \cup c(E^+(\mcA \setminus \{(x,v,e)\}))$ for each $(x,v,e)\in \mcA$,
	\item\label{it:Fd} $X(\mcA)$ is a subset of $Z_q$, and
	\item\label{it:Fe} $X(\mcA)$ is ex-$q$-compatible.
\end{enumerate}
As an example, observe that the anchored taxa set $\mathcal A$ in \Cref{fig:example-Dbar-definitions} is ex-$(C_1,C_2,q)$-feasible for $C_1 = [11]$, $C_2 = \{12\}$ and $q=3$ if and only if $\mathcal A$ is $q$-compatible (which we don't know as the teams are not given). This is the case if and only if $\Hbar{1} \ge 10$, $\Hbar{2} \ge 22$ and $\Hbar{3} \ge 35$.

\medskip

Our algorithm calculates for each combination $C_1,C_2$, and $q$ the maximum length of an ex-$(C_1,C_2,q)$-feasible anchored taxa set.
In order to calculate these values recursively, we declare which tuples $(x,v,e)$ are suitable for consideration at each stage in the algorithm.

A tuple $(x,v,e)$ is \emph{$(C_1,C_2)$-good} for sets of colors $C_1,C_2\subseteq [2\Dbar]$ if
\begin{enumerate}[G a)]
	\itemsep-.35em
	\item\label{it:Ga} \Pvx has unique colors,
	\item\label{it:Gb} $c(\Pvx)$ is a subset of $C_1$, 
	\item\label{it:Gc} $\hat{c}(e)$ is in $C_2$, and 
	\item\label{it:Gd} $\Pvx$ and $E_{> \Dbar}$ are disjoint.
\end{enumerate}

Let $\mathcal X_{(q)}$ be the set of tuples $(x,v,e)$ such that $x\in Z_q$, $\Pvx$ has unique colors, $\Pvx \subseteq E_{\le \Dbar}$ and $\hat c(e) \not\in c(\Pvx)$.
Disjoint sets of colors $C_1,C_2\subseteq [2\Dbar]$ are \emph{$q$-grounding},
if $c(\Pvx) \not\subseteq C_1$ or $\hat c(e) \not\in C_2$ for each tuple $(x,v,e)\in \mathcal X_{(q)}$.
That is, $C_1$ and $C_2$ are $q$-grounding if and only if there is no $(C_1,C_2)$-good tuple $(x,v,e)$ with $x \in Z_q$.
In \Cref{fig:example-Dbar-definitions} the set $\mathcal X_{(1)}$ would be $\{(x_1,v_1,e_5),(x_1,v_0,e_3)\}$---both paths $P_{v_0,x_4}$ and $P_{v_1,x_4}$ contain the edge $e_7\in E_{>\Dbar}$, and $\hat c(e_2) = 9 \in c(e_4)$ so $(x_1,v_0,e_2)$ is not contained in $\mathcal X_{(1)}$.
Therefore, the 1-grounding colors are any disjoint sets of colors $C_1$ and $C_2$ with
($c(e_4) \not\subseteq C_1$ or $\hat c(e_5) = 6 \not\in C_2$)
and
($c(\{e_1,e_4\}) \not\subseteq C_1$ or $\hat c(e_3) = 12 \not\in C_2$).
That is the case if $C_1$ or $C_2$ are empty.
One non-trivial example are the sets $C_1 = \{6,7,\dots,11\}$ and~$C_2 = [5] \cup \{12\}$.

\medskip

\medskip
\noindent\textbf{Algorithm~$\mathbf{(\Dbar)}$.}\\
Let \Instance be an instance of \cBartPDs\Dbar.
In the following dynamic programming algorithm, 
we compute a value $\DP[C_1,C_2,q]$ for all disjoint $C_1, C_2 \subseteq [2\Dbar]$ with $C_1\cap C_2 = \emptyset$ and $|C_1|\leq \Dbar$, and all $q \in [\var_{\ex}]$.
The entries of $\DP$ are computed 
in some order such that $\DP[C_1',C_2',q']$ is computed before $\DP[C_1'',C_2'',q'']$ if $C_1' \subsetneq C_1''$.
We want that $\DP[C_1,C_2,q]$ stores the \emph{maximum} rescue length $\ell(X(\mcA))$ of an ex-$(C_1,C_2,q)$-feasible anchored taxa set $\mcA$.
If there exists no ex-$(C_1,C_2,q)$-feasible anchored taxa set, we want $\DP[C_1,C_2,q]$ to store $-\infty$.
\smallskip

As a basic case,
given $q\in [\var_{\ex}]$ and $q$-grounding sets of colors $C_1$ and $C_2$.
If $\Hbar{p} \le 0$ for each $p \in [q-1]$, we store $\DP[C_1,C_2,q] = 0$.
Otherwise, if $C_1$ and $C_2$ are $q$-grounding but $\Hbar{p} > 0$ for any $p \in [q-1]$, we store $\DP[C_1,C_2,q] = -\infty$.

To compute further values, we use the recurrence
\begin{equation}
	\label{eqn:split-Dbar}
	\DP[C_1,C_2,q] =
	\max_{(x,v,e)} \DP[C_1',C_2',\ex^*(x)] + \ell(x).
\end{equation}
Here $C_1' := C_1\setminus c(\Pvx)$ and $C_2' := (C_2\cup c(\Pvx)) \setminus \{\hat c(e)\}$.
The maximum is taken over $(C_1,C_2)$-good tuples $(x,v,e)$ for which
$x\in Z_q$ and $\DP[C_1',C_2',\ex^*(x)] + \ell(x) \ge \max\{\Hbar{\ex^*(x)},\Hbar{\ex^*(x)+1},\dots,\Hbar{q-1}\}$ holds.
If no such tuple exists, we set $\DP[C_1,C_2,q] = -\infty$.
Recall that $\ex^*(x) = q$ for each $x\in Y_q$.
Observe that as~$C_1 \cap C_2 = \emptyset$, we also have $C_1' \cap C_2' = \emptyset$ by construction.

We return \yes if $\DP[C_1,C_2,\var_{\ex}] \ge \Hbar{\var_{\ex}}$ for some disjoint sets of colors~$C_1,C_2\subseteq [2\Dbar]$ with $|C_1| \leq \Dbar$.
Otherwise, if $\DP[C_1,C_2,\var_{\ex}] \ge \Hbar{\var_{\ex}}$ for all disjoint sets of colors~$C_1$ and $C_2$, we return \no.

\smallskip
We continue to analyze the algorithm by showing that $\DP[C_1,C_2,q]$ stores the largest rescue length $\ell(X(\mcA))$ of an ex-$(C_1,C_2,q)$-feasible anchored taxa set~$\mcA$.
We start with the basic case.
\begin{lemma}
	\label{lem:basic-case}
	$\DP[C_1,C_2,q]$ stores
	the largest rescue length $\ell(X(\mcA))$ of an ex-$(C_1,C_2,q)$-feasible anchored taxa set $\mcA$ if such an $\mcA$ exists, and $-\infty$ otherwise
	which is
	the desired value
	for each $q\in [\var_{\ex}]$ and all $q$-grounding sets of colors~$C_1$ and $C_2$.
\end{lemma}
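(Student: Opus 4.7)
The plan is to prove both directions by a direct case analysis on whether there exists a nonempty ex-$(C_1,C_2,q)$-feasible anchored taxa set, combined with a careful check of what the empty anchored taxa set contributes.

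First, I would verify that the empty anchored taxa set $\mcA = \emptyset$ is ex-$(C_1,C_2,q)$-feasible if and only if $\Hbar{p} \le 0$ for each $p\in [q-1]$. The properties CR~\ref{it:Ca}--\ref{it:Ce} and F~\ref{it:Fa}--\ref{it:Fd} are all vacuously satisfied, since $E^+(\emptyset) = E_s(\emptyset) = \emptyset$ and $X(\emptyset) = \emptyset \subseteq Z_q$. The only nontrivial condition is F~\ref{it:Fe}: ex-$q$-compatibility of $\emptyset$ requires $0 = \ell(\emptyset \cap Z_p) \ge \Hbar{p}$ for each $p\in [q-1]$, i.e.\ $\Hbar{p} \le 0$ for each such $p$. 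Since $\ell(X(\emptyset)) = 0$, this already matches the stored value in the case $\Hbar{p}\le 0$ for all $p\in[q-1]$.

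Second, I would show that no nonempty anchored taxa set $\mcA$ can be ex-$(C_1,C_2,q)$-feasible when $C_1,C_2$ are $q$-grounding. Suppose for contradiction that such an $\mcA$ exists. Take a valid ordering $(x_1,v_1,e_1),\dots,(x_{|\mcA|},v_{|\mcA|},e_{|\mcA|})$ guaranteed by CR~\ref{it:Cd} and consider its first element $(x_1,v_1,e_1)$. I would check that this tuple is $(C_1,C_2)$-good with $x_1 \in Z_q$: uniqueness of colours on $P_{v_1,x_1}$ follows from CR~\ref{it:Ca} since $P_{v_1,x_1} \subseteq E^+(\mcA)$; $c(P_{v_1,x_1}) \subseteq c(E^+(\mcA)) \subseteq C_1$ by F~\ref{it:Fb}; $\hat c(e_1) \in C_2$ follows from F~\ref{it:Fc} together with the fact that $\mcA \setminus \{(x_1,v_1,e_1)\} = \mcA_0$ contributes no colours through the valid ordering (more precisely $c(E^+(\mcA_0)) = \emptyset$, so $\hat c(e_1) \notin c(E^+(\mcA_0))$, forcing $\hat c(e_1) \in C_2$); and $P_{v_1,x_1} \cap E_{>\Dbar} = \emptyset$ by CR~\ref{it:Cc}. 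Also $x_1 \in X(\mcA) \subseteq Z_q$ by F~\ref{it:Fd}. This tuple therefore lies in $\mathcal{X}_{(q)}$ and witnesses that $C_1,C_2$ are not $q$-grounding, a contradiction.

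Combining the two steps: under the $q$-grounding hypothesis, the only potentially feasible set is $\mcA = \emptyset$, and this is feasible precisely when $\Hbar{p}\le 0$ for all $p\in [q-1]$. Hence the correct value is $0$ in that case and $-\infty$ otherwise, which is exactly what the algorithm stores. The main (mild) obstacle is being careful with F~\ref{it:Fc} in the grounding argument: one must use the valid ordering to ensure that the first tuple has $\hat c(e_1)$ lying in $C_2$ rather than merely in $C_2 \cup c(E^+(\mcA \setminus \{(x_1,v_1,e_1)\}))$, since this stronger membership is what is required for $(C_1,C_2)$-goodness.
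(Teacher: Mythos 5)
Your handling of the empty anchored taxa set is correct and matches the paper: the only non-vacuous condition is ex-$q$-compatibility, which holds iff $\Hbar{p}\le 0$ for all $p\in[q-1]$, giving stored value $0$ in that case and $-\infty$ otherwise. The gap is in the non-emptiness argument. You pick the \emph{first} tuple $(x_1,v_1,e_1)$ of a valid ordering and claim $\hat c(e_1)\in C_2$ by combining F~c) with the identity $\mcA\setminus\{(x_1,v_1,e_1)\}=\mcA_0$. That identity is false whenever $|\mcA|\ge 2$: by definition $\mcA_0=\emptyset$, whereas $\mcA\setminus\{(x_1,v_1,e_1)\}$ consists of all the \emph{later} tuples of the ordering. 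Condition F~c) therefore only yields $\hat c(e_1)\in C_2\cup c(E^+(\mcA\setminus\{(x_1,v_1,e_1)\}))$, and the valid-ordering condition $\hat c(e_j)\notin c(E^+(\mcA_j))$ controls only the colors on paths of tuples with index at most $j$; for $j=1$ it says nothing about the paths of later tuples. So $\hat c(e_1)$ may lie in, say, $c(P_{v_2,x_2})\subseteq C_1$, in which case (as $C_1\cap C_2=\emptyset$) it is \emph{not} in $C_2$; your chosen tuple then fails condition G~c) and does not witness that $C_1,C_2$ fail to be $q$-grounding.

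The repair is to take the \emph{last} tuple $(x,v,e)$ of the valid ordering, as the paper does: for $j=|\mcA|$ the ordering condition gives $\hat c(e)\notin c(E^+(\mcA_{|\mcA|}))=c(E^+(\mcA))\supseteq c(E^+(\mcA\setminus\{(x,v,e)\}))$, and only then does F~c) force $\hat c(e)\in C_2$. All your other checks (unique colors on the path via CR~a), $c(\Pvx)\subseteq C_1$ via F~b), disjointness from $E_{>\Dbar}$ via CR~c), $x\in Z_q$ via F~d), and $\hat c(e)\notin c(\Pvx)$ so that the tuple lies in $\mathcal X_{(q)}$) go through unchanged for the last tuple.
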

\begin{proof}
	Let $q\in [\var_{\ex}]$ and 
	let $q$-grounding sets of colors $C_1$ and $C_2$ be given,
	and suppose for a contradiction that some non-empty anchored taxa set $\mcA$ is ex-$(C_1,C_2,q)$-feasible.
	Let $(x,v,e)$ be the last tuple in a valid ordering of $\mcA$.
	Such an ordering exists because $\mcA$ is color-respectful.
	Observe that $\hat c(e) \not\in c(E^+(\mcA))$, and in particular  $\hat c(e) \not\in c(\Pvx)$.
	As $\mcA$ holds CR a), CR c) and F d), we also have that $c(\Pvx)$ is uniquely colored, $\Pvx \subseteq E_{\le \Dbar}$ and $x\in Z_q$. Thus $(x,v,e) \in \mathcal X_{(q)}$.
	As $\mcA$ holds F b) we have $c(\Pvx) \subseteq C_1$.
	As $\mcA$ holds F c) we have $\hat C_2 \cup c(e) \in c(E^+(\mcA))$, which together with  $\hat c(e) \not\in c(E^+(\mcA))$ implies $\hat c(e) \in C_2$.
	But then $(x,v,e)$ is a tuple in $\mathcal X_{(q)}$ with $c(\Pvx) \subseteq C_1$ and $\hat c(e) \in C_2$, a contradiction to the assumption that $C_1$ and $C_2$ are $q$-grounding.	 
	It follows that no non-empty anchored taxa set is ex-$(C_1,C_2,q)$-feasible.
	
	Observe that the empty set is ex-$(C_1,C_2,q)$-feasible if and only if
	the empty set is ex-$q$-compatible, as the empty set trivially holds all other conditions.
	Since $\ell(\emptyset\cap Z_p)=\ell(\emptyset)=0$ for each $p\in [\var_{\ex}]$,
	we observe that the empty set is ex-$q$-compatible if and only if $\Hbar{p} \le 0$ for each $p\in [q-1]$.
	Exactly in these cases $\DP[C_1,C_2,q] \ge \ell(\emptyset) = 0$.
\end{proof}

In the following we show an induction over $|C_1|$.
Observe that $C_1=\emptyset$ together with any set of colors $C_2\subseteq [2\Dbar]$ are $q$-grounding for each $q \in [\var_{\ex}]$.
Therefore \Cref{lem:basic-case} is the basic case of the induction.

As induction hypothesis, we assume that for a fixed set of colors $C_1\subseteq [2\Dbar]$ and a fixed $q\in [\var_{\ex}]$, for each $K_1\subsetneq C_1$, $K_2 \subseteq [2\Dbar]\setminus K_1$ and $p \in [q]$ the entry $\DP[K_1,K_2,p]$ stores the largest rescue length $\ell(X(\mcA))$ of an ex-$(K_1,K_2,p)$-feasible anchored taxa set $\mcA$.

In \Cref{lem:corr-recurr-A} and \Cref{lem:corr-recurr-B} we proceed to show that with this hypothesis we can conclude that $\DP[C_1,C_2,p]$ stores the desired value.

\begin{lemma}
	\label{lem:corr-recurr-A}
	If an anchored taxa set $\mcA$ is ex-$(C_1,C_2,q)$-feasible for disjoint sets of colors~$C_1$ and $C_2$ that are not $q$-grounding, then $\DP[C_1,C_2,q]\ge \ell(\mcA)$.
\end{lemma}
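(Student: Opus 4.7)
The plan is to induct on $|C_1|$. When $C_1 = \emptyset$ the claim is vacuously true, because every edge of $E_{\le \Dbar}$ carries at least one color, so no tuple in $\mathcal X_{(q)}$ satisfies $c(\Pvx) \subseteq \emptyset$, making $(\emptyset, C_2)$ automatically $q$-grounding. For the inductive step, I would fix an ex-$(C_1, C_2, q)$-feasible $\mcA$ with $(C_1, C_2)$ not $q$-grounding and exhibit a tuple that witnesses $\DP[C_1, C_2, q] \ge \ell(X(\mcA))$ in Recurrence~(\ref{eqn:split-Dbar}).

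In the main case $\mcA \neq \emptyset$ I would let $(x,v,e)$ be the last tuple in a valid ordering of $\mcA$ (which exists by CR~d)) and set $\mcA'' := \mcA \setminus \{(x,v,e)\}$, $C_1' := C_1 \setminus c(\Pvx)$, and $C_2' := (C_2 \cup c(\Pvx)) \setminus \{\hat c(e)\}$. The bulk of the work is then to verify two things: that $(x,v,e)$ is $(C_1,C_2)$-good with $x \in Z_q$, and that $\mcA''$ is ex-$(C_1', C_2', \ex^*(x))$-feasible. The first reduces to chasing definitions: G~a) from CR~a), G~b) from F~b) together with $\Pvx \subseteq E^+(\mcA)$ (this inclusion, combined with CR~c) guaranteeing $\Pvx \subseteq E_{\le \Dbar}$, also yields $c(\Pvx) \neq \emptyset$ and hence $C_1' \subsetneq C_1$), G~c) from combining the valid-ordering condition $\hat c(e) \notin c(E^+(\mcA))$ with F~c), G~d) from CR~c), and $x \in Z_q$ from F~d). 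For $\mcA''$, conditions F~a), F~d), F~e) transfer immediately from $\mcA$ (noting for F~e) that $x \in Y_{\ex^*(x)}$ does not sit in any $Z_p$ with $p < \ex^*(x)$), while F~b) needs $c(\Pvx)$ to be disjoint from $c(E^+(\mcA''))$, which follows from CR~a) plus CR~e) (forcing $\Pvx$ to be edge-disjoint from every other $P_{u,y}$ in $\mcA$). Once feasibility is established, the induction hypothesis supplies $\DP[C_1', C_2', \ex^*(x)] \ge \ell(X(\mcA'')) = \ell(X(\mcA)) - \ell(x)$, and F~e) on $\mcA$ delivers the admissibility threshold of the recurrence, since $\ell(X(\mcA)) \ge \ell(X(\mcA) \cap Z_p) \ge \Hbar p$ for every $p \le q-1$.

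The hard part will be F~c) for $\mcA''$: a remaining tuple $(y,u,e') \in \mcA''$ inherits $\hat c(e') \in C_2 \cup c(\Pvx) \cup c(E^+(\mcA'' \setminus \{(y,u,e')\}))$ from F~c) on $\mcA$, but the desired membership in $C_2' \cup c(E^+(\mcA'' \setminus \{(y,u,e')\}))$ fails precisely when $\hat c(e') = \hat c(e)$. I would dispose of this with the structural observation that two distinct tuples in a color-respectful set cannot share the same anchor edge: a shared edge has a unique tail, forcing the two anchor vertices to coincide, and then the sibling edge at that common anchor would lie on both $\Pvx$ and $P_{u,y}$, contradicting CR~e). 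Together with CR~b) this yields $\hat c(e') \neq \hat c(e)$, after which a short case split on whether $\hat c(e')$ sits in $C_2$, $c(\Pvx)$, or $c(E^+(\mcA'' \setminus \{(y,u,e')\}))$ completes F~c). Finally, the corner case $\mcA = \emptyset$ is handled by picking any tuple witnessing failure of $q$-grounding (automatically $(C_1,C_2)$-good), observing that the empty anchored taxa set is ex-$(C_1', C_2', \ex^*(x))$-feasible because ex-$q$-compatibility of $\emptyset$ forces $\Hbar p \le 0$ for all $p < q$, and concluding $\DP[C_1, C_2, q] \ge \ell(x) \ge 0 = \ell(X(\mcA))$.
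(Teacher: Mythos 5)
Your proposal is correct and follows essentially the same route as the paper's proof: peel off the last tuple $(x,v,e)$ of a valid ordering, verify that it is $(C_1,C_2)$-good and that the remainder is ex-$(C_1',C_2',\ex^*(x))$-feasible, and invoke the recurrence together with the induction hypothesis (the paper phrases the inner induction on $|\mcA|$ rather than folding everything into the induction on $|C_1|$, but the content is identical). Your explicit argument that two distinct tuples cannot share an anchor edge, needed to get $\hat c(e')\neq\hat c(e)$ from CR~b), is a point the paper passes over silently, so it is a welcome addition rather than a divergence.
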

\begin{proof}
	Let $C_1$ and $C_2$ be disjoint sets of colors that are not $q$-grounding.
	Furthermore, let $\mcA$ be an ex-$(C_1,C_2,q)$-feasible anchored taxa set.
	
	We prove the claim by induction on $|\mcA|$.
	First consider $\mcA$ being empty.
	Since $C_1$ and $C_2$ are not $q$-grounding, there is a tuple $(x,v,e)\in \mathcal X_{(q)}$ such that $\Pvx \subseteq C_1$ and $\hat c(e) \subseteq C_2$.
	As $(x,v,e) \in \mathcal X_{(q)}$, the anchored taxa set  $\mcA' := \{(x,v,e)\}$ is color-respectful, and by construction $\mcA$ holds F b), F c) and F d).
	It only remains to show that $\{x\}$ is ex-$q$-compatible to prove that $\mcA'$ is ex-$(C_1,C_2,q)$-feasible.
	Since $\mcA = \emptyset$ is ex-$(C_1,C_2,q)$-feasible, $\ell(x) > \ell(\emptyset) \ge \Hbar{p}$ for each $p\in [q-1]$.
	Thus, $\mcA'$ is ex-$(C_1,C_2,q)$-feasible.
	This is sufficient to show that $\DP[C_1,C_2,q] \ge \ell(x)$.
	So, we may assume that $\mcA$ is not empty.
	
	Now, let $(x,v,e)$ be the last tuple in valid the ordering of $\mcA$.
	Such an ordering exists because $\mcA$ is color-respectful.
	Define an anchored taxa set $\mcA'$ resulting from removing $(x,v,e)$ from \mcA.
	
	\medskip
	We first show that $\mcA'$ is ex-$(C_1',C_2',\ex^*(x))$-feasible.
	\proofpart{\hyperref[it:Fa]{F a) [$\mcA$ is color-respectful]}}
	Because $\mcA$ is color-respectful and $x$ is the taxon of largest order, we directly conclude that $\mcA'$ holds all properties of color-respectfulness.
	\proofpart{\hyperref[it:Fb]{F b) [$c(E^+(\mcA)) \subseteq C_1$]}}
	Because $E^+(\mcA') = E^+(\mcA) \setminus \Pvx$ and $E^+(\mcA)$ has unique colors, we conclude $c(E^+(\mcA')) = c(E^+(\mcA)) \setminus c(\Pvx) \subseteq C_1 \setminus c(\Pvx)= C_1'$.
	\proofpart{\hyperref[it:Fc]{F c) [$\hat c(e) \in C_2 \cup c(E^+(\mcA \setminus \{(x,v,e)\}))$ for each $(x,v,e)\in \mcA$]}}
	Fix a tuple $(y,u,e')\in \mcA'$.
	We want to show that $\hat c(e') \in$
	\begin{eqnarray*}
		&& C_2' \cup c(E^+(\mcA'\setminus \{(y,u,e')\}))\\
		&=& ((C_2 \cup c(\Pvx)) \setminus \{\hat c(e)\}) \cup (c(E^+(\mcA\setminus \{(y,u,e')\})) \setminus c(\Pvx))\\
		&=& (C_2 \cup c(E^+(\mcA\setminus \{(y,u,e')\})) \setminus \{\hat c(e)\}
	\end{eqnarray*}
	It holds $\hat c(e') \in C_2 \cup c(E^+(\mcA\setminus \{(y,u,e')\}))$, because $\mcA$ holds F c).
	Since $E_s(\mcA)$ has unique key-colors, $\hat c(e') \ne \hat c(e)$.
	Therefore $\hat c(e') \in C_2' \cup c(E^+(\mcA'\setminus \{(y,u,e')\}))$.
	\proofpart{\hyperref[it:Fd]{F d) [$X(\mcA) \subseteq Z_q$]}}
	$X(\mcA')\subseteq X(\mcA)\subseteq Z_{\ex^*(x)}$ by the definition of $x$.
	\proofpart{\hyperref[it:Fe]{F e) [$X(\mcA)$ is ex-$q$-compatible]}}
	Because $X(\mcA)$ is ex-$q$-compatible, $\ell(X(\mcA') \cap Z_p) = \ell(X(\mcA) \cap Z_p) \ge \Hbar{p}$ for each $p < \ex^*(x)$.
	Subsequently, $X(\mcA')$ is ex-$\ex^*(x)$-compatible.

	\medskip
	Next we show that $(x,v,e)$ is $(C_1,C_2)$-good.
	\proofpart{\hyperref[it:Ga]{G a) [\Pvx has unique colors]}}
	Because $E^+(\mcA)$ has unique colors, $\Pvx \subseteq E^+(\mcA)$ has unique colors, too.
	\proofpart{\hyperref[it:Gb]{G b) [$c(\Pvx) \subseteq C_1$]}}
	$c(\Pvx) \subseteq c(E^+(\mcA)) \subseteq C_1$ because $\mcA$ holds F b).
	\proofpart{\hyperref[it:Gc]{G c) [$\hat{c}(e) \in C_2$]}}
	Because $\mcA$ holds F c) and $X(\mcA)$ has a valid ordering, we conclude that
	$\hat c(e) \in C_2 \cup c(E^+(\mcA'))$ and
	$\hat c(e) \not\in c(E^+(\mcA))$.
	Therefore, especially $\hat c(e) \not\in c(E^+(\mcA')) \subseteq c(E^+(\mcA))$ and we conclude that
	$\hat c(e) \in C_2$.
	\proofpart{\hyperref[it:Gd]{G d) [$\Pvx$ and $E_{> \Dbar}$ are disjoint]}}
	Because $E^+(\mcA)$ has an empty intersection with $E_{> \Dbar}$, also $\Pvx \subseteq E^+(\mcA)$ has empty intersection with $E_{> \Dbar}$.
	
	\medskip
	Since $\mcA'$ is ex-$(C_1',C_2',\ex^*(x))$-feasible and $|\mcA'| < |\mcA|$, we have by the inductive hypothesis that $\DP[C_1',C_2',\ex^*(x)] \ge \ell(X(\mcA'))$.
	Furthermore $\DP[C_1',C_2',\ex^*(x)] + \ell(x) \ge \ell(X(\mcA')) + \ell(x) \ge \ell(X(\mcA))$. As $X(\mcA)$ is ex-$(C_1,C_2,q)$-compatible we have  $ \ell(X(\mcA) = \ell(X(\mcA)\cap Z_p)  \geq \Hbar{p}$ for each $p$ with $\ex^*(x) \le p < q$, and so $\DP[C_1',C_2',\ex^*(x)] + \ell(x)  \ge \max\{\Hbar{\ex^*(x)},\Hbar{\ex^*(x)+1},\dots,\Hbar{q-1}\}$.
	This together with the fact that $(x,v,e)$ is $(C_1,C_2)$-good implies that $(x,v)$ satisfies the conditions of Recurrence~(\ref{eqn:split-Dbar}), from which we can conclude that $\DP[C_1,C_2,q] \geq \DP[C_1',C_2',\ex^*(x)] + \ell(x) \geq \ell(X(\mcA))$.
\end{proof}

\begin{lemma}
	\label{lem:corr-recurr-B}
	If $\DP[C_1,C_2,q] = a \ge 0$ for some disjoint sets of colors $C_1$ and $C_2$, then there is an ex-$(C_1,C_2,q)$-feasible anchored taxa set $\mcA$ with $\ell(X(\mcA)) = a$.
\end{lemma}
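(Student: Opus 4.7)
The plan is to prove the converse of Lemma~\ref{lem:corr-recurr-A} by induction on $|C_1|$, mirroring the hypothesis already set up for Lemmas~\ref{lem:corr-recurr-A} and~\ref{lem:corr-recurr-B}. The base case $C_1 = \emptyset$ is covered by Lemma~\ref{lem:basic-case}: any $C_2$ together with $C_1 = \emptyset$ is vacuously $q$-grounding, so $\DP[\emptyset, C_2, q] = a \ge 0$ forces $a = 0$ and the empty anchored taxa set witnesses the claim.

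For the inductive step, fix disjoint $C_1, C_2 \subseteq [2\Dbar]$ and $q \in [\var_{\ex}]$, and suppose $\DP[C_1, C_2, q] = a \ge 0$. If this value was stored by the basic case (i.e.~$C_1, C_2$ are $q$-grounding), apply Lemma~\ref{lem:basic-case}. Otherwise the value was set through Recurrence~(\ref{eqn:split-Dbar}), so there is a $(C_1, C_2)$-good tuple $(x,v,e)$ with $x \in Z_q$ such that $a = \DP[C_1', C_2', \ex^*(x)] + \ell(x)$, where $C_1' = C_1 \setminus c(\Pvx)$ and $C_2' = (C_2 \cup c(\Pvx)) \setminus \{\hat c(e)\}$, and moreover $a \ge \max\{\Hbar{\ex^*(x)},\dots,\Hbar{q-1}\}$. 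Since $v$ is a strict ancestor of $x$ the path $\Pvx$ is non-empty, and because every edge has positive weight we have $c(\Pvx)\neq\emptyset$; combined with $c(\Pvx)\subseteq C_1$ from G~b), this yields $|C_1'| < |C_1|$, so the inductive hypothesis applies and gives an ex-$(C_1', C_2', \ex^*(x))$-feasible anchored taxa set $\mcA'$ with $\ell(X(\mcA')) = \DP[C_1', C_2', \ex^*(x)]$. Then set $\mcA := \mcA' \cup \{(x,v,e)\}$ and show it is ex-$(C_1, C_2, q)$-feasible.

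The bulk of the work is verifying the five feasibility conditions for $\mcA$. The color-respectfulness (F~a)) splits into the five CR-conditions. For CR~a), unique colors on $E^+(\mcA)$, the key observation is that $c(\Pvx) \subseteq C_1 \setminus C_1' = C_1\setminus C_1'$ is disjoint from $c(E^+(\mcA'))\subseteq C_1'$, while uniqueness inside each of $\Pvx$ and $E^+(\mcA')$ comes from G~a) and the induction. CR~c) follows analogously from G~d). For CR~b), unique key-colors on $E_s(\mcA)$, use that $\hat c(e) \in C_2$ (by G~c)) and that any $\hat c(e') \in \hat c(E_s(\mcA'))$ either lies in $C_2' = (C_2\cup c(\Pvx))\setminus\{\hat c(e)\}$ (so $\neq \hat c(e)$) or in $c(E^+(\mcA'))\subseteq C_1$ (so $\neq \hat c(e)$ since $C_1\cap C_2=\emptyset$). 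CR~e), edge-disjointness of the paths, follows from the previously established color disjointness and the positivity of edge weights. CR~d), the valid ordering, is obtained by appending $(x,v,e)$ to a valid ordering of $\mcA'$; the requirement $\ex(y)\le \ex(x)$ for earlier $y$ is immediate from $X(\mcA')\subseteq Z_{\ex^*(x)}$, and the new requirement $\hat c(e)\notin c(E^+(\mcA))$ follows from $\hat c(e)\in C_2$ together with $c(E^+(\mcA))\subseteq C_1$.

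The remaining feasibility conditions F~b)--F~e) are more routine. F~b) uses $c(E^+(\mcA)) = c(E^+(\mcA'))\cup c(\Pvx) \subseteq C_1' \cup c(\Pvx) = C_1$. F~c) for the new tuple is G~c); for inherited tuples, use the inclusion $C_2' \cup c(E^+(\mcA'\setminus\{(y,u,e')\})) \subseteq C_2 \cup c(E^+(\mcA\setminus\{(y,u,e')\}))$. F~d) follows from $X(\mcA')\subseteq Z_{\ex^*(x)}$ and $x \in Z_q$. For F~e), ex-$q$-compatibility, split at $p = \ex^*(x)$: for $p < \ex^*(x)$ it follows from ex-$\ex^*(x)$-compatibility of $X(\mcA')$, and for $\ex^*(x)\le p < q$ one has $\ell(X(\mcA)\cap Z_p) = a \ge \Hbar{p}$ by the threshold condition recorded in the recurrence. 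I expect the main obstacle to be the careful bookkeeping for CR~b) and F~c), since those are exactly where the role of $C_2$ as a ledger of \emph{used} key-colors must mesh correctly with the update $C_2 \mapsto C_2'$.
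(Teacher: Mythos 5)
Your proposal is correct and follows essentially the same route as the paper's proof: induction on $C_1$ with the base case delegated to \Cref{lem:basic-case}, extraction of a $(C_1,C_2)$-good tuple $(x,v,e)$ from Recurrence~(\ref{eqn:split-Dbar}), application of the induction hypothesis to obtain $\mcA'$, and verification of F~a)--F~e) (with CR~a)--e)) for $\mcA = \mcA' \cup \{(x,v,e)\}$ using exactly the same disjointness and color-bookkeeping arguments. Your explicit remark that $c(\Pvx) \neq \emptyset$ forces $C_1' \subsetneq C_1$ is a detail the paper leaves implicit, and is a welcome addition.
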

\begin{proof}
	Assume that $\DP[C_1,C_2,q] = 0$.
	Then $C_1$, and $C_2$ must be $q$-grounding, as otherwise the algorithm would apply Recurrence (\ref{eqn:split-Dbar}) and then~$\DP[C_1,C_2,q]$ would store either $-\infty$ or a value which is at least $\ell(x) > 0$ for some $x\in X$.
	\Cref{lem:basic-case} shows the correctness of this case.
	
	Now assume that $\DP[C_1,C_2,q] = a > 0$.
	By Recurrence (\ref{eqn:split-Dbar}) there is a $(C_1,C_2)$-good tuple $(x,v,e)$ such that $x\in Z_q$ and $\DP[C_1,C_2,q] = \DP' + \ell(x)$.
	Here, $\DP' := \DP[C_1',C_2',q']$ where (as in Recurrence (\ref{eqn:split-Dbar})) $C_1' := C_1\setminus c(\Pvx)$, $C_2' := (C_2\cup c(\Pvx)) \setminus \{\hat c(e)\}$, and $q' := \ex^*(x)$.
	Further we know that $\DP' + \ell(x) \ge \max\{\Hbar{q'},\Hbar{q'+1},\dots,\Hbar{q-1}\}$.
	We conclude by the induction hypothesis that there is an ex-$(C_1',C_2',q')$-feasible anchored taxa set~$\mcA'$ such that $\ell(X(\mcA')) = \DP'$.
	Define $\mcA := \mcA' \cup \{(x,v,e)\}$.
	Clearly, $a = \DP' + \ell(x) = \ell(X(\mcA))$.
	It remains to show that $\mcA$ is an ex-$(C_1,C_2,q)$-feasible set.
	
	\medskip
	\proofpart{\hyperref[it:Fa]{F a) [$\mcA$ is color-respectful]}}
	We show that $\mcA$ is color-respectful.
	
	\proofpart{\hyperref[it:Ca]{CR a) [$E^+(\mcA)$ has unique colors]}}
	As $\mcA'$ holds CR a) and F b), $E^+(\mcA')$ has unique colors and $c(E^+(\mcA')) \subseteq C_1' = C_1\setminus c(\Pvx)$.
	Subsequently, the colors of $E^+(\mcA')$ and $\Pvx$ are disjoint.
	We conclude with the $(C_1,C_2)$-goodness of $(x,v,e)$, that \Pvx has unique colors and so also $E^+(\mcA)$.
	\proofpart{\hyperref[it:Cb]{CR b) [$E_s(\mcA)$ has unique key-colors]}}
	As $\mcA'$ holds CR b), $E_s(\mcA')$ has unique key-colors.
	Observe $E_s(\mcA) = E_s(\mcA') \cup \{e\}$.
	Because $\mcA'$ holds F c), $\hat c(E_s(\mcA')) \subseteq C_2' \cup c(E^+(\mcA')) \subseteq C_1' \cup C_2'$.
	On the other hand, because $\hat c(e) \in C_2$ and $C_1\cap C_2 = \emptyset$ we have  $\hat c(e) \notin C_1$, which implies
	 $\hat c(e) \not\in C_1\cup (C_2 \setminus \{\hat c(e)\}) = C_1'\cup C_2'$. 
	Therefore, $E_s(\mcA)$ has unique key-colors.
	\proofpart{\hyperref[it:Cc]{CR c) [$E^+(\mcA)$ and $E_{> \Dbar}$ are disjoint]}}
	$E^+(\mcA')$ and $E_{>\Dbar}$ are disjoint because $\mcA'$ is color-respectful.
	Further because $(x,v,e)$ is $(C_1,C_2)$-good, $\Pvx$ and $E_{>\Dbar}$ are disjoint and so $E^+(\mcA') \subseteq E_{\le \Dbar}$.
	\proofpart{\hyperref[it:Cd]{CR d) [$\mcA$ has a valid ordering]}}
	$\mcA'$ has a valid ordering $(y_1,u_1,e_1),\dots,(y_{|\mcA'|},u_{|\mcA'|},e_{|\mcA'|})$.
	As $X(\mcA')\subseteq Z_{\ex^*(x)}$ we conclude $\ex(y) \le \ex(x)$ for each $y\in A'$.
	Further $\hat c(e) \in C_2$ and $c(\Pvx) \subseteq C_1$ because $(x,v,e)$ is $(C_1,C_2)$-good.
	Since $\mcA'$ holds F b), $c(E^+(A')) \subseteq C_1' \subseteq C_1$.
	Because $C_1$ and $C_2$ are disjoint, we conclude that $\hat c(e) \not \in c(E^+(\mcA))$.
	Hence, $(y_1,u_1,e_1),\dots,(y_{|\mcA'|},u_{|\mcA'|},e_{|\mcA'|}),(x,v,e)$ is a valid ordering of $\mcA$.
	\proofpart{\hyperref[it:Ce]{CR e) [$\Pvx$ and $P_{u,y}$ are disjoint for any $(x,v,e),(y,u,e') \in \mcA$]}}
	We know that $\mcA'$ holds CR e).
	Further, as already seen, $c(E^+(\mcA'))$ and $c(\Pvx)$ are disjoint and therefore also \Pvx and $P_{u,y}$ are disjoint for each $(y,u,e') \in \mcA'$.\linebreak 
	It follows that $\mcA$ is color-respectful.
	
	\medskip
	\proofpart{\hyperref[it:Fb]{F b) [$c(E^+(\mcA)) \subseteq C_1$]}}
	$c(E^+(\mcA)) = c(E^+(\mcA')) \cup c(\Pvx) \subseteq C_1' \cup c(\Pvx) = C_1$.
	\proofpart{\hyperref[it:Fc]{F c) [$\hat c(e) \in C_2 \cup c(E^+(\mcA \setminus \{(x,v,e)\}))$ for each $(x,v,e)\in \mcA$]}}
	Observe $\hat c(e) \in C_2$ because $(x,v,e)$ is $(C_1,C_2)$-good.
	Each $(y,u,e') \in \mcA'$ holds $\hat c(e') \subseteq C_2' \cup c(E^+(\mcA'\setminus \{(y,u,e')\}))$, as $\mcA'$ holds F c).
	Observe $x\ne y$ and $C_2' \subseteq C_2 \cup c(\Pvx)$ and so $\hat c(e') \subseteq C_2 \cup c(E^+(\mcA\setminus \{(y,u,e')\}))$.
	\proofpart{\hyperref[it:Fd]{F d) [$X(\mcA) \subseteq Z_q$]}}
	$X(\mcA)\subseteq Z_q$ because $X(\mcA')\subseteq Z_q$ and $x\in Z_q$.
	\proofpart{\hyperref[it:Fe]{F e) [$X(\mcA)$ is ex-$q$-compatible]}}
	Because $\mcA'$ is ex-$q'$-compatible, we conclude that $\ell(X(\mcA)\cap Z_p) = \ell(X(\mcA')\cap Z_p) \ge \Hbar{p}$ for each $p\in [q'-1]$.
	Then with $\DP' + \ell(x) \ge \max\{\Hbar{q'},\Hbar{q'+1},\dots,\Hbar{q-1}\}$ we conclude that $\ell(X(\mcA)\cap Z_p) = \ell(X(\mcA)) \ge \Hbar{p}$ for each $p\in \{q',\dots,q-1\}$.
	Therefore, $X(\mcA)$ is ex-$q$-compatible and ex-$(C_1,C_2,q)$-feasible.
\end{proof}

\begin{lemma}

	An instance  \Instance of \cBartPDs\Dbar is a \yes-instance if and only if there exist disjoint sets of colors $C_1,C_2 \subseteq [2\Dbar]$ with $|C_1| \leq \Dbar$ and an  ex-$(C_1,C_2,\var_{\ex})$-feasible anchored taxa set $\mcA$ with $\ell(X(\mcA)) \ge \Hbar{\var_{\ex}}$.
\end{lemma}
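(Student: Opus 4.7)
The plan is to prove both directions by unpacking the definitions and using \Cref{lem:scheduleCondition} to translate between the existence of a valid $T$-schedule saving $X\setminus X(\mcA)$ and the corresponding inequalities on $\ell(X(\mcA)\cap Z_p)$, exploiting the fact that $Z_{\var_{\ex}}=X$.

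For the $(\Rightarrow)$ direction, suppose \Instance is a yes-instance of \cBartPDs{\Dbar} witnessed by a color-respectful $\mcA$ with $|c(E^+(\mcA))|\le\Dbar$ and a valid $T$-schedule saving $X\setminus X(\mcA)$. I would set $C_1:=c(E^+(\mcA))$ and $C_2:=\hat c(E_s(\mcA))\setminus C_1$. Then $C_1\cap C_2=\emptyset$ and $|C_1|\le\Dbar$ are immediate, while conditions F a), F b), F d) of ex-$(C_1,C_2,\var_{\ex})$-feasibility follow directly from the assumptions and the choice of $C_1$ (using $X(\mcA)\subseteq X=Z_{\var_{\ex}}$). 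For F e) and the inequality $\ell(X(\mcA))\ge\Hbar{\var_{\ex}}$, \Cref{lem:scheduleCondition} applied to the given schedule gives $\ell((X\setminus X(\mcA))\cap Z_p)\le H_p$ for each $p\in[\var_{\ex}]$, which rearranges into $\ell(X(\mcA)\cap Z_p)\ge \Hbar{p}$; specializing to $p<\var_{\ex}$ yields ex-$\var_{\ex}$-compatibility, and to $p=\var_{\ex}$ yields the required bound on $\ell(X(\mcA))$.

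The one delicate point will be verifying F c): for each $(x,v,e)\in\mcA$, one needs $\hat c(e)\in C_2\cup c(E^+(\mcA\setminus\{(x,v,e)\}))$. For this I would invoke the valid ordering guaranteed by CR d): writing $(x,v,e)$ as the $j$th tuple of such an ordering, this yields $\hat c(e)\notin c(E^+(\mcA_j))$ and in particular $\hat c(e)\notin c(\Pvx)$. Combined with CR a) and CR e), which together force the disjoint decomposition $c(E^+(\mcA)) = c(\Pvx)\sqcup c(E^+(\mcA\setminus\{(x,v,e)\}))$, a dichotomy emerges: either $\hat c(e)\in c(E^+(\mcA\setminus\{(x,v,e)\}))$, in which case F c) holds directly, or else $\hat c(e)\notin c(E^+(\mcA))=C_1$, in which case $\hat c(e)\in \hat c(E_s(\mcA))\setminus C_1 = C_2$ by construction of $C_2$.

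For the $(\Leftarrow)$ direction, given disjoint $C_1,C_2\subseteq[2\Dbar]$ with $|C_1|\le\Dbar$ and an ex-$(C_1,C_2,\var_{\ex})$-feasible $\mcA$ with $\ell(X(\mcA))\ge\Hbar{\var_{\ex}}$, I would verify the three conditions defining a yes-instance of \cBartPDs{\Dbar}. Color-respectfulness is F a), and $|c(E^+(\mcA))|\le|C_1|\le\Dbar$ from F b). For the valid $T$-schedule saving $X\setminus X(\mcA)$, I would reverse the earlier computation: F e) gives $\ell(X(\mcA)\cap Z_p)\ge\Hbar{p}$ for $p<\var_{\ex}$, the additional hypothesis $\ell(X(\mcA))\ge\Hbar{\var_{\ex}}$ covers $p=\var_{\ex}$ (again using $Z_{\var_{\ex}}=X$), so $\ell((X\setminus X(\mcA))\cap Z_p)\le H_p$ for all $p\in[\var_{\ex}]$, and \Cref{lem:scheduleCondition} then produces the required schedule. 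This direction is essentially mechanical once the bookkeeping is set up.
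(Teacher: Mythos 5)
Your proposal is correct and follows essentially the same route as the paper's proof: the same choice of $C_1 = c(E^+(\mcA))$ and $C_2 = \hat c(E_s(\mcA))\setminus C_1$, the same dichotomy (via the valid ordering giving $\hat c(e)\notin c(\Pvx)$) to establish F~c), and the same use of \Cref{lem:scheduleCondition} to translate between the schedule and the inequalities $\ell(X(\mcA)\cap Z_p)\ge \Hbar{p}$. Your explicit remark that the case $p=\var_{\ex}$ is covered by the extra hypothesis $\ell(X(\mcA))\ge\Hbar{\var_{\ex}}$ rather than by ex-$\var_{\ex}$-compatibility is a slightly more careful bookkeeping than the paper's wording.
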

\begin{proof}
	Suppose first that \Instance  is a \yes-instance of \cBartPDs\Dbar. That is, there exists a color-respectful anchored taxa set $\mcA$ with $|c(E^+(\mcA))| \leq \Dbar$ such that there is a valid $T$-schedule saving $S := X\setminus X(\mcA)$.
	Let $C_1 := c(E^+(\mcA))$
	and let $C_2$ be the key-colors of the edges $E_s(\mcA)$ that are not in $C_1$.
	We show that $\mcA$ is ex-$(C_1,C_2,\var_{\ex})$-feasible.
	First, $\mcA$ holds F a), F b), and F d) by definition.
	\proofpart{\hyperref[it:Fc]{F c) [$\hat c(e) \in C_2 \cup c(E^+(\mcA \setminus \{(x,v,e)\}))$ for each $(x,v,e)\in \mcA$]}}
	By the definition, $\hat c( E(\mcA)) \subseteq C_2 \cup C_1 = C_2 \cup c(E^+(\mcA))$.
	Fix a tuple $(x,v,e) \in \mcA$.
	Because $\mcA$ has a valid ordering, we conclude that $\hat c(e)\not\in c(\Pvx)$.
	We conclude that if $\hat c(e)$ is in $c(E^+(\mcA)))$, then explicitly $\hat c(e) \in c(E^+(\mcA\setminus \{(x,v,e)\}))$.
	Therefore $\mcA$ holds F c).
	\proofpart{\hyperref[it:Fe]{F e) [$X(\mcA)$ is ex-$q$-compatible]}}
	Fix some $q\in [\var_{\ex}]$.
	Since $S$ has a valid schedule, we conclude $\ell(S\cap Z_q) \le H_q$ for each $q\in [\var_{\ex}]$.
	Subsequently $\ell(X(\mcA)\cap Z_q) = \ell(Z_q) - \ell(S\cap Z_q) \ge \Hbar{q}$.
	Likewise with $\ell(S) \le H_{\var_{\ex}}$ we conclude that $\ell(X(\mcA)) \ge \Hbar{\var_{\ex}}$.
	
	\medskip
	For the converse, suppose $\mcA$ is an ex-$(C_1,C_2,\var_{\ex})$-feasible anchored taxa set with $\ell(X(\mcA)) \ge \Hbar{\var_{\ex}}$ for disjoint sets of colors $C_1,C_2 \subseteq [2\Dbar]$ with $|C_1|\leq \Dbar$.
	Fix some $q\in [\var_{\ex}]$.
	Observe that the rescue time of all taxa in $S=X\setminus X(\mcA)$ that are in $Z_q$ is $\ell(S \cap Z_q) = \ell((X\setminus X(\mcA)) \cap Z_q) = \ell(Z_q) - \ell(X(\mcA)\cap Z_q)$.
	Because $\mcA$ is ex-$\var_{\ex}$-compatible, $\ell(X(\mcA)\cap Z_q) \ge \Hbar{q}$ and so
	$\ell(S \cap Z_q) \le \ell(Z_q) - \Hbar{q} = H_q$. 
	Then by \Cref{lem:scheduleCondition}, there is a valid $T$-schedule saving $S$.
	By definition, $\mcA$ is color-respectful
	and $|c(E^+(\mcA))| \leq |C_1| \leq \Dbar$.
	Therefore \Instance  is a \yes-instance of \cBartPDs\Dbar
\end{proof}

\begin{lemma}
	\label{lem:rt-DBar}
	Algorithm~$(\Dbar)$ computes a solution of \cBartPDs\Dbar in time $\Oh^*(9^\Dbar \cdot \Dbar)$.
\end{lemma}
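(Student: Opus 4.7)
The plan is to bound the number of DP entries and the cost of computing each one via Recurrence~(\ref{eqn:split-Dbar}), showing that the product is $\Oh^*(9^{\Dbar}\cdot\Dbar)$.

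First I would count the entries of $\DP$. An entry is indexed by a pair of disjoint sets $C_1,C_2\subseteq[2\Dbar]$ together with an integer $q\in[\var_{\ex}]$. Specifying a disjoint pair $(C_1,C_2)$ with $C_1,C_2\subseteq[2\Dbar]$ is equivalent to assigning each of the $2\Dbar$ colors to one of three categories (``in $C_1$'', ``in $C_2$'', or ``in neither''); the number of such assignments is $3^{2\Dbar}=9^{\Dbar}$. The extra size constraint $|C_1|\le\Dbar$ only restricts this further. Since $\var_{\ex}\le n$, the total number of entries is $\Oh(9^{\Dbar}\cdot n)=\Oh^*(9^{\Dbar})$.

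Next I would bound the cost of computing a single entry. In the basic case (\Cref{lem:basic-case}), checking whether $C_1$ and $C_2$ are $q$-grounding amounts to ruling out all $(C_1,C_2)$-good tuples $(x,v,e)$ with $x\in Z_q$; this can be folded into the Recurrence evaluation, as when Recurrence~(\ref{eqn:split-Dbar}) yields no valid tuple we are exactly in a $q$-grounding situation, and the base values $\Hbar{p}$ only need to be compared against $0$, which is polynomial. For the recursive step, the maximum in Recurrence~(\ref{eqn:split-Dbar}) is taken over tuples $(x,v,e)$ where $x\in X$, $v\in\anc(x)$, and $e$ is an outgoing edge of $v$. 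The number of such tuples is polynomial in $n$. For each candidate tuple we must: (i) verify that it is $(C_1,C_2)$-good, which requires examining $\Pvx$ and its colors; (ii) compute the updated color sets $C_1'=C_1\setminus c(\Pvx)$ and $C_2'=(C_2\cup c(\Pvx))\setminus\{\hat c(e)\}$; (iii) verify the threshold condition involving $\max\{\Hbar{\ex^*(x)},\ldots,\Hbar{q-1}\}$; and (iv) look up the precomputed entry $\DP[C_1',C_2',\ex^*(x)]$. All of (i), (ii), (iii) can be carried out in $\Oh(\Dbar)$ time given that $C_1,C_2\subseteq[2\Dbar]$ and $\Pvx$ has total weight at most $\Dbar$ (condition G~d)), while (iv) is an $\Oh(1)$ table access.

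Multiplying the number of entries by the per-entry cost yields $\Oh(9^{\Dbar}\cdot\Dbar)$ times polynomial factors in $n$, i.e.\ $\Oh^*(9^{\Dbar}\cdot\Dbar)$. Finally, after the table is filled, deciding the answer requires scanning $\DP[C_1,C_2,\var_{\ex}]$ over all disjoint $(C_1,C_2)$ and comparing to $\Hbar{\var_{\ex}}$, which adds only another $\Oh^*(9^{\Dbar})$. The only subtlety worth verifying carefully is that the $(C_1,C_2)$-goodness check and the update of $C_1,C_2$ genuinely run in time $\Oh(\Dbar)$ rather than $\Oh(\Dbar^2)$, which follows because each $\Pvx$ considered contains at most $\Dbar$ edges of total weight at most $\Dbar$, so the color sets $c(\Pvx)$ have size $\Oh(\Dbar)$ and can be manipulated via bitmasks on $[2\Dbar]$ in time $\Oh(\Dbar)$.
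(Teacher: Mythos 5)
Your proposal is correct and follows essentially the same route as the paper: count the $3^{2\Dbar}=9^{\Dbar}$ disjoint pairs $(C_1,C_2)$ times $\var_{\ex}$ table entries, and charge $\Oh(\Dbar)$ (times polynomial factors absorbed into $\Oh^*$) for testing each candidate tuple in Recurrence~(\ref{eqn:split-Dbar}), including the $q$-grounding/base-case check. The only cosmetic difference is that the paper separately accounts for a $\log\Hbar{*}$ factor for the arithmetic comparisons, which your argument correctly leaves inside the $\Oh^*$ notation.
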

\begin{proof}
	Since $\DP[C_1,C_2,q]$ is computed for $C_1$ and $C_2$ disjoint subsets of $[2\Dbar]$,
	the table $\DP$ contains $3^{2\Dbar} \cdot \var_{\ex}$ entries.
	
	For given $q\in [\var_{\ex}]$ and color sets $C_1$ and $C_2$, we can compute whether $C_1$ and $C_2$ are $q$-grounding by testing the conditions for each tuple $(x,v,e)$.
	Therefore the test can be done in $\Oh(\Dbar \cdot n^3)$ time.
	Here the $\Dbar$-factor in the running time comes from testing whether $\Pvx \subseteq C_1$.
	
	Analogously, for given $q\in [\var_{\ex}]$ and color sets $C_1$ and $C_2$, we can compute whether a tuple $(x,v,e)$ holds the conditions of Recurrence~(\ref{eqn:split-Dbar}) in 
	$\Oh(\Dbar \cdot n \cdot \log \Hbar{*})$ time,
	where $\Hbar{*}:= \max\{\Hbar{1}, \Hbar{2}, \dots, \Hbar{\var_{\ex}}\}$.
	The $\log \Hbar{*}$-factor comes from adding two numbers $\DP[C_1',C_2',\ex^*(x)]$ and $\ell(x)$ 
	and comparing the given sum with $\max\{\Hbar{\ex^*(x)},\Hbar{\ex^*(x)+1},\dots,\Hbar{q-1}\}$.
\end{proof}

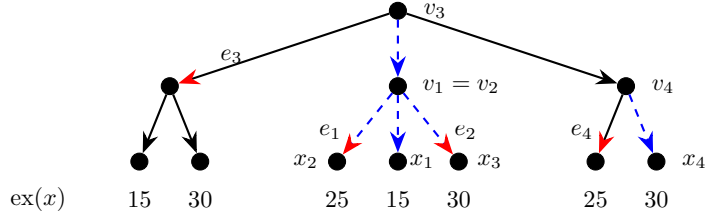
\begin{figure}[t]
	\centering
	\begin{tikzpicture}[scale=1,every node/.style={scale=0.9}]
		\tikzstyle{txt}=[circle,fill=white,draw=white,inner sep=0pt]
		\tikzstyle{nde}=[circle,fill=black,draw=black,inner sep=2.5pt]
		
		\node[nde] (v0) at (5,10) {};
		\node[nde] (v1) at (2,9) {};
		\node[nde] (v2) at (5,9) {};
		\node[nde] (v3) at (8,9) {};
		\node[nde] (v4) at (1.6,8) {};
		\node[nde] (v5) at (2.4,8) {};
		\node[nde] (v6) at (4.2,8) {};
		\node[nde] (v7) at (5,8) {};
		\node[nde] (v7') at (5.8,8) {};
		\node[nde] (v8) at (7.6,8) {};
		\node[nde] (v9) at (8.4,8) {};

		\node[txt,xshift=9mm,yshift=-10mm] (c1) [above=of v1] {$e_3$};

		\node[txt,xshift=-1mm,yshift=-10mm] (c6) [above=of v6] {$e_1$};
		\node[txt,xshift=1mm,yshift=-10mm] (c8) [above=of v7'] {$e_2$};
		\node[txt,xshift=-2mm,yshift=-10mm] (c8) [above=of v8] {$e_4$};
		
		\node[txt,yshift=9mm] (r4) [below=of v4] {$15$};
		\node[txt,yshift=9mm] (r5) [below=of v5] {$30$};
		\node[txt,yshift=9mm] (r6) [below=of v6] {$25$};
		\node[txt,yshift=9mm] (r7) [below=of v7] {$15$};
		\node[txt,yshift=9mm] (r7) [below=of v7'] {$30$};
		\node[txt,yshift=9mm] (r8) [below=of v8] {$25$};
		\node[txt,yshift=9mm] (r9) [below=of v9] {$30$};
		\node[txt,xshift=3mm] (r'4) [left=of r4] {$\ex(x)$};
 		
		\node[txt,xshift=-9mm] (t0) [right=of v0] {$v_3$};
		\node[txt,xshift=9mm] (t1) [left=of v1] {};
		\node[txt,xshift=-9mm] (t2) [right=of v2] {$v_1=v_2$};
		\node[txt,xshift=-9mm] (t3) [right=of v3] {$v_4$};
		
		\node[txt,xshift=10mm] (t6) [left=of v6] {$x_2$};
		\node[txt,xshift=-11mm] (t7) [right=of v7] {$x_1$};
		\node[txt,xshift=-10mm] (t7) [right=of v7'] {$x_3$};
		\node[txt,xshift=-9mm] (t9) [right=of v9] {$x_4$};
		\textbf{}
		\draw[thick,arrows = {-Stealth[red,length=8pt]}] (v0) to (v1);
		\draw[thick,dashed,blue,arrows = {-Stealth[length=8pt]}] (v0) to (v2);
		\draw[thick,arrows = {-Stealth[length=8pt]}] (v0) to (v3);
		\draw[thick,arrows = {-Stealth[length=8pt]}] (v1) to (v4);
		\draw[thick,arrows = {-Stealth[length=8pt]}] (v1) to (v5);
		\draw[thick,dashed,blue,arrows = {-Stealth[red,length=8pt]}] (v2) to (v6);
		\draw[thick,dashed,blue,arrows = {-Stealth[length=8pt]}] (v2) to (v7);
		\draw[thick,dashed,blue,arrows = {-Stealth[red,length=8pt]}] (v2) to (v7');
		\draw[thick,arrows = {-Stealth[red,length=8pt]}] (v3) to (v8);
		\draw[thick,dashed,blue,arrows = {-Stealth[length=8pt]}] (v3) to (v9);
	\end{tikzpicture}
	\caption{
		Example of the construction of an anchored taxa set $\mcA$ (\Cref{thm:DBar}).
		Here given $X\setminus S = \{x_1,x_2,x_3,x_4\}$, the constructed $\mathcal A$ is $\{ (x_i,v_i,e_i) \mid i \in [4] \}$.
		Blue dashed edges are in $E_d(X\setminus S) = E^+(\mcA)$ and edges with red arrow are in $E_s(\mcA)$.
		$\mcA_1$ could have been $\{ (x_1,v_1,e_2) \}$ provoking that $(x_1,v_1,e_2)$ must have been replaced in step $i=2$.
	}
	\label{fig:anchored-taxa-set}
\end{figure}%

\begin{theorem}
	\label{thm:DBar}
	\tPDs can be solved in
	$\Oh^*(2^{6.056 \cdot \Dbar + o(\Dbar)})$~time.
\end{theorem}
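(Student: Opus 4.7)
The plan is to reduce \tPDs to \cBartPDs\Dbar via color-coding, in analogy with the proof of \Cref{thm:D}, and then solve each colored instance with Algorithm~$(\Dbar)$ from \Cref{lem:rt-DBar}. The target running time $\Oh^*(2^{6.056 \Dbar + o(\Dbar)})$ suggests a perfect hash family of ``width'' $2\Dbar$, combined with the $\Oh^*(9^\Dbar)$ cost of a single colored instance, since $9 \cdot e^2 < 2^{6.056}$.

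First I would associate with each edge $e_i$ a collection of ``color slots'': namely $\w(e_i)$ slots if $e_i \in E_{\le \Dbar}$ (one of which will play the role of the key color), and a single key slot otherwise. (Edges of weight greater than $\Dbar$ cannot appear in $E^+(\mcA)$ by CR c, so no primary colors are needed for them.) Let $N$ be the total number of slots, which is $\Oh(n \Dbar)$. Construct a $(N, 2\Dbar)$-perfect hash family $\mathcal F$ mapping $[N]$ to $[2\Dbar]$, and for each $f \in \mathcal F$ build an extinction-$\Dbar$-colored $X$-tree by reading off the colors from $f$; within each edge, designate one slot as the key slot, and if $f$ happens to give coincident values to slots of the same edge, repair these arbitrarily, which does not affect the asymptotic argument. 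Then run Algorithm~$(\Dbar)$ on each colored instance and return \yes iff any returns \yes.

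For correctness, one direction is \Cref{lem:coloredYes}. For the other, suppose $\Instance$ is a \yes-instance with solution $S \subseteq X$, and construct the corresponding anchored taxa set $\mcA$ of $X \setminus S$ as in \Cref{fig:anchored-taxa-set}. The key observation is that the ``critical slots''---the primary slots of edges in $E^+(\mcA)$ together with the key slots of edges in $E_s(\mcA) \setminus E^+(\mcA)$---number at most $2\Dbar$: we have $\w(E^+(\mcA)) \le \w(E_d(X\setminus S)) \le \Dbar$, which bounds the first group by $\Dbar$, and $|E_s(\mcA)| \le |\mcA| \le |E^+(\mcA)| \le \Dbar$ since each tuple contributes at least one edge to $E^+(\mcA)$. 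By the defining property of $\mathcal F$, some $f \in \mathcal F$ assigns pairwise distinct values in $[2\Dbar]$ to all critical slots, and the resulting coloring makes $\mcA$ color-respectful and ex-$(c(E^+(\mcA)), \hat c(E_s(\mcA)) \setminus c(E^+(\mcA)), \var_{\ex})$-feasible (with ex-compatibility coming from the existence of a valid $T$-schedule saving $S$). Hence the colored instance is a \yes-instance of \cBartPDs\Dbar.

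The main obstacle is verifying color-respectfulness, especially the valid-ordering condition CR d: one orders the tuples of $\mcA$ by extinction time and uses distinctness of the critical slots together with the disjointness of paths $\Pvx$ (CR e, inherited from the structure of $\mcA$) to argue that $\hat c(e_j) \notin c(E^+(\mcA_j))$ for each $j$. The remaining color-respectfulness and feasibility conditions follow routinely from the distinct-color property of the critical slots and the fact that $E^+(\mcA) \cap E_{> \Dbar} = \emptyset$. For the running time, $|\mathcal F| = e^{2\Dbar} (2\Dbar)^{\Oh(\log \Dbar)} \log N$, the construction of $\mathcal F$ takes $\Oh^*(e^{2\Dbar} (2\Dbar)^{\Oh(\log \Dbar)})$ time, and each colored instance is solved in $\Oh^*(9^\Dbar \cdot \Dbar)$ time by \Cref{lem:rt-DBar}. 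The total is $\Oh^*((9 e^2)^{\Dbar} \cdot 2^{o(\Dbar)})$; since $\log_2(9 e^2) = \log_2 9 + 2/\ln 2 < 6.056$, this gives $\Oh^*(2^{6.056 \Dbar + o(\Dbar)})$ as claimed.
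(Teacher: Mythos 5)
Your proposal is correct and follows essentially the same route as the paper's proof: one color slot per unit of weight on edges in $E_{\le \Dbar}$ plus a key slot per edge, a $(N,2\Dbar)$-perfect hash family over these slots, the same extinction-time-ordered construction of the anchored taxa set from $X\setminus X(S)$, and Algorithm~$(\Dbar)$ on each of the resulting \cBartPDs\Dbar instances, with the identical $(9e^2)^{\Dbar}$ accounting. The only cosmetic difference is that you bound the sibling-edge colors via $|E_s(\mcA)|\le|\mcA|\le|E^+(\mcA)|$ directly rather than through the paper's auxiliary pruned set $E_2$, and you are in fact slightly more careful than the paper in repairing colorings where the hash function collides within a single edge.
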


Just like in the previous section,
the key idea is that we construct a family of colorings $\mathcal{C}$ on the edges of $\Tree$, where each edge $e\in E$ is assigned a key-color and additionally for each $e\in E_{\le \Dbar}$ a subset $c^-(e)$ of $[2\Dbar]$ of size $\w(e)-1$.
Using these, we generate~$|\mathcal{C}|$ instances of \cBartPDs\Dbar, which with Algorithm~$(\Dbar)$ we solve in $\Oh^*(9^\Dbar \cdot \Dbar)$ time.
The colorings are constructed in such a manner that $\Instance$ is a \yes-instance if and only if at least one of the constructed \cBartPDs\Dbar instances is a \yes-instance.
Central for this will be the concept of a perfect hash family, see \Cref{def:hash-fam}.

\begin{proofm}{of \Cref{thm:DBar}}
	\proofpara{Reduction}
	Let $\Instance = (\Tree, r, \ell, T, D)$ be an instance of \tPDs.
	Let $|E_{\le \Dbar}| = m_1$ and $|E_{> \Dbar}| = m - m_1$.
	Order the edges $e_1, \dots, e_{m_1}$ of $E_{\le \Dbar}$
	and the edges $e_{m_1+1}, \dots, e_{m}$ of $E_{> \Dbar}$ arbitrarily.
	We define $W_0 := m$ and $W_j := m + \sum_{i=1}^{j} (\w(e_{i}) - 1)$ for each $j\in [m_1]$.
	
	Let $\mathcal{F}$ be a $(W_m, 2\Dbar)$-perfect hash family.
	Now we define the family of colorings $\mathcal{C}$ as follows.
	For every $f \in \mathcal{F}$,
	let $\hat c_f$ and $c^-_f$ be colorings such that
	$\hat c_f$ assigns edge $e_j \in E$ the color $f(j)$ for each $j \in [m]$
	and $c^-_f$ assigns edge $e_j \in E_{\le \Dbar}$ the colors $\{f(W_{j-1}+1), \dots, f(W_j)\}$ for each $j\in [m_1]$.
	
	For each $c_f \in \mathcal{C}$,
	let $\Instance_{f} = (\Tree, r, \ell, T, D, \hat c_f, c^-_f)$ be the corresponding instance of \cBartPDs\Dbar.
	Now solve each $\Instance_{f}$ using Algorithm~$(\Dbar)$ and return \yes if and only if $\Instance_{f}$ is a \yes-instance for some $c_f \in \mathcal{C}$.
	
	\proofpara{Correctness}
	If one of the  constructed instances of \cBartPDs\Dbar is a \yes-instance, then $\Instance$ is a \yes-instance of \tPDs by \Cref{lem:coloredYes}.
	
	For the converse, 
	if $S\subseteq X$ is a solution for $\Instance$, then $\w(E') \ge D$ where $E'$ is the set of edges which have at least one offspring in $S$.
	Therefore, $\w(E\setminus E') \le \w(E) - D = \Dbar$.
	Define $E_1 := E\setminus E' = E_d(X\setminus S)$ and let $E_2'$ be the set of edges $e\in E'$ which have a sibling-edge in $E_1$.
	If $e,e' \in E_2'$ are sibling-edges then remove $e'$ from $E_2'$.
	Continuously repeat the previous step to receive $E_2$ in which any two edges $e,e' \in E_2$ are not sibling-edges.
	Observe $|E_2| \le |E_1| \le \w(E_1) \le \Dbar$.
	
	Let $Z \subseteq [W_m]$ be the subset of indices corresponding to the colors of $E_1$ and the key-colors of $E_2$. 
	More precisely $Z:= \{ j\in [m] \mid e_j \in E_1 \cup E_2\} \cup \{W_{j-1} +1, \dots, W_j \mid e_j \in E_1\}$.
	Since $\mathcal{F}$ is a $(W_m, 2\Dbar)$-perfect hash family and $|Z| = \w(E_1) + |E_2| \leq 2\Dbar$, there exists a function $f \in \mathcal{F}$ such that $f(z) \neq f(z')$ for distinct $z, z' \in Z$.
	It follows that $E_1$ has unique colors, $E_2$ has unique key-colors, and $c(E_1)\cap \hat c(E_2) = \emptyset$ in all constructed instance $\Instance_f$.

	Let $C_1  = c(E_1)$ and $C_2 = \hat c(E_2)$.
	We claim that $\Instance_f$ has a color-respectful anchored taxa set $\mcA$ with $E^+(\mcA) = E_1$, $E_s(\mcA) \subseteq E_1 \cup E_2$ and $X(\mcA) = X\setminus S$. Since $|c(E_1)| = |C_1| \leq \Dbar$ and there is a valid $T$-schedule saving $S$, this implies that  $\Instance_f$ is a \yes-instance of \cBartPDs\Dbar.
	
		We define $\mcA := (\{(x_i,v_i,e_i\}: 1 \leq i \leq |X \setminus S|)$, where $x_i,v_i,e_i$ are constructed iteratively as follows.
		Let $x_1,\dots, x_{|X\setminus S|}$ be an ordering of $X\setminus S$ such that $\ex(i) \leq \ex(j)$ if $i \leq j$.
		Define $E_d^{(i)} := E_d(\{x_1,\dots, x_i\})$ for each  $1 \leq i \leq |X \setminus S|$ and let $E_d^0 := \emptyset$.
		For each $1 \leq i \leq |X \setminus S|$,
		let $e' := v_iw_i$ be the highest edge in $E_d^{(i)} \setminus E_d^{(i-1)}$.
		This completes the construction of $x_i, v_i$. Construct $e_i$ as follows.
		If $v_j \neq v_i$ for each $j > i$, then let $e_i$ be an arbitrary outgoing edge of $v_i$ not in $E_d^{(i)}$.
		Such an edge exists as $v_iw_i$ is the topmost edge of $E_d^{(i)} \setminus E_d^{(i-1)}$ and therefore not all edges outgoing of $v_i$ are in $E_d^{(i)}$.
		Otherwise, let $j_i$ be the minimum integer such that $j_i>i$ and $v_{j_i}=v_i$. Then let $e_i :=v_iw_{j_i}$. Note that $e_i$ is not in $E_d^{(i)}$ since $e_i$ is the topmost edge of $E_d^{(j_i)} \setminus E_d^{(j_i-1)}$.
		See \Cref{fig:anchored-taxa-set} for an example.
	
	It remains to show that $\mcA$ is color-respectful.
	$\mcA$ holds CR a) as by construction $E^+(\mcA) = \bigcup_{i=1}^{|X \setminus S|} P_{v_i,x_i} = E_d(X\setminus S) = E_1$.
	Similarly $\mcA$ holds CR b) as $E_s(\mcA) \subseteq E_1 \cup E_2$.
	$\mcA$ holds CR c) by the fact that $\w(E_1) \leq \Dbar$.
	$\mcA$ holds CR d), as $(x_1,v_1,e_1),\dots (x_{|\mcA|},v_{|\mcA|},e_{|\mcA|})$ is a valid ordering of $\mcA$.
	
	To see that $\mcA$ holds CR e), consider any $(x_i,v_i,e_i), (x_j,v_j,e_j)$ for $i < j$. 
	As every $v_j$ is an ancestor of $x_j$, every edge in $P_{v_j,x_j}$ has an offspring not in $\{x_1,\dots, x_i\}$ and therefore $P_{v_j,x_j}$ is disjoint from $E_d(x_1,\dots, x_i)$, which contains all edges of $P_{v_i,x_i}$. It follows that  $P_{v_i,x_i}$ and $P_{v_j,x_j}$ are disjoint.
	
	\proofpara{Running Time}
	The construction of $\mathcal{C}$ takes $e^{2\Dbar} (2\Dbar)^{\Oh(\log {(2\Dbar)})} \cdot W_m \log W_m$ time, and for each $c \in \mathcal{C}$ the construction of each instance of \cBartPDs\Dbar takes time polynomial in $|\Instance|$. By \Cref{lem:rt-DBar}, solving an instance of \cBartPDs\Dbar takes $\Oh^*(9^{\Dbar}\cdot {2\Dbar})$~time, and $|\mathcal{C}| = e^{2\Dbar} (2\Dbar)^{\Oh(\log {(2\Dbar)})} \cdot \log W_m$ is the number of constructed instances of \cBartPDs\Dbar.
	
	Thus,
	$\Oh^*(e^{2\Dbar} (2\Dbar)^{\Oh(\log {(2\Dbar)})} \log W_m \cdot (W_m + (9^{\Dbar}\cdot {2\Dbar})))$
	is the total running time.
	Because $W_m = \w(E_{\le \Dbar}) + |E_{>\Dbar}| \le 
	|E|\cdot\Dbar \le 2n \cdot \Dbar$
	this further simplifies
	to~$\Oh^*((3e)^{2\Dbar} \cdot 2^{\Oh(\log^2(\Dbar))}) = \Oh^*(2^{6.056 \cdot \Dbar + o(\Dbar)})$.
\end{proofm}

\section{Additional parameterized complexity results}
\label{sec:other}

\subsection{Parametrization by the available person-hours}
In this subsection we consider parametrization by the available person-hours~$H_{\max_{\ex}}$.
Observe we may assume that $|T|,\max_{\ex} \le H_{\max_{\ex}} \le |T| \cdot \max_{\ex}$.

\begin{proposition}
	\label{prop:T+maxr}
	\tPDs and \tPDws are \FPT when parameterized by the available person-hours $H_{\max_{\ex}}$.
	More precisely,
	\begin{enumerate}[(a)]
		\itemsep-.35em
		\item\label{prop:s-T^maxr} \tPDs can be solved in $\Oh((|T| + 1)^{2\cdot\max_{\ex}} \cdot n \cdot \log(D))$ time,
		\item\label{prop:s-T*maxr} \tPDs can be solved in $\Oh((H_{\max_{\ex}})^{2\var_{\ex}} \cdot n \cdot \log(D))$ time, and
		\item\label{prop:ws-T+maxr} \tPDws can be solved in $\Oh(3^{|T| \cdot \max_{\ex}} \cdot |T| \cdot n \cdot \log(D))$ time.
	\end{enumerate}
\end{proposition}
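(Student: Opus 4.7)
The plan is to prove all three parts via a unified tree DP on $\Tree$: for each vertex $v$ and each feasible ``usage profile'' $P$, I store the maximum phylogenetic diversity $f(v,P)$ attainable by a valid (strict, for \tPDws) partial schedule saving taxa in $\off(v)$ whose local resource consumption is exactly $P$. At an internal vertex $v$ with children $u_1,u_2$ I would set
\[
f(v,P) \;=\; \max_{P_1 \oplus P_2 \,=\, P}\Bigl(f(u_1,P_1)+f(u_2,P_2)+\w(vu_1)[P_1\neq\mathbf{0}]+\w(vu_2)[P_2\neq\mathbf{0}]\Bigr),
\]
where $\oplus$ is a part-specific combining operation and the indicator $[\cdot]$ captures whether the edge $vu_i$ should be counted towards the diversity; at a leaf $x$, the admissible profiles are the empty one (if $x$ is not saved) and those corresponding to a scheduling of $\ell(x)$ person-hours for $x$ finishing by $\ex(x)$. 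We return \yes iff $\max_{P} f(\rho,P)\ge D$ over feasible root profiles, with feasibility decoded via \Cref{lem:scheduleCondition} or \Cref{lem:strictScheduleCondition}.

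The three parts differ only in the choice of profile and operation. For (a), I take $P\in\{0,\dots,|T|\}^{\max_{\ex}}$ with $P_t$ the person-hours used in timeslot $t$ in the subtree's schedule; $\oplus$ is coordinate-wise addition capped by the number of teams available at time $t$, and the admissible nonempty leaf profiles for a saved $x$ are the vectors with $\sum_{t\le\ex(x)} P_t=\ell(x)$ and $P_t=0$ for $t>\ex(x)$. At the root, $\sum_{t\le \ex_j} P_t \ge \ell(A\cap Z_j)$ while the per-timeslot cap enforces $\sum_{t\le \ex_j} P_t \le H_j$, so \Cref{lem:scheduleCondition} yields a valid schedule; there are $(|T|+1)^{\max_{\ex}}$ profiles, giving $(|T|+1)^{2\max_{\ex}}$ per internal-node combination. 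For (b), I take $P\in\{0,\dots,H_{\max_{\ex}}\}^{\var_{\ex}}$ with $P_j=\ell(A\cap Y_j\cap\off(v))$ and $\oplus$ coordinate-wise addition restricted by $\sum_{j'\le j}P_{j'}\le H_j$, which is exactly \Cref{lem:scheduleCondition}; each leaf contributes a single nonzero coordinate in position $\ex^*(x)$ of value $\ell(x)$, and internal combining costs $H_{\max_{\ex}}^{2\var_{\ex}}$. For (c), because the schedule must be strict, I take $P\subseteq\mathcal{H}_T\cap([|T|]\times[\max_{\ex}])$, $\oplus$ is the disjoint union, and a leaf profile for saved $x$ is a contiguous same-team block $\{(i,j{+}1),\dots,(i,j{+}\ell(x))\}$ inside $\mathcal{H}_T$ with $j{+}\ell(x)\le\ex(x)$ -- giving at most $|T|\cdot\max_{\ex}$ choices per leaf (this is the source of the stated $|T|$ factor). \Cref{lem:strictScheduleCondition} then ensures that a disjoint union of such per-leaf blocks is precisely a strict valid schedule; iterating over ordered disjoint pairs $(P_1,P_2)$ with $P_1 \sqcup P_2 = P$ across all $P$ amounts to labelling each of the at most $|T|\cdot\max_{\ex}$ cells as belonging to $P_1$, to $P_2$, or to neither, for a total of $3^{|T|\cdot\max_{\ex}}$ combinations per internal node.

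The main obstacle I expect is the correctness argument for part (a): unlike (b), where the DP directly tracks the quantities appearing in \Cref{lem:scheduleCondition}, the per-timeslot profile in (a) carries more information than strictly needed, and I must verify that the family of vectors realizable by valid non-strict schedules is closed under coordinate-wise addition subject only to the per-timeslot cap, so that the root profile invariably corresponds to a set $A\subseteq X$ with $\ell(A\cap Z_j)\le H_j$ for all $j$. The $n\log D$ factor in each running time accounts for the $\Oh(n)$ tree vertices and the $\Oh(\log D)$ cost of arithmetic on diversity values (bounded by $\PD(X)$, which we may assume is $\Oh(nD)$ after trivial preprocessing of edges with $\w(e)\ge D$).
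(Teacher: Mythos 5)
Your proposal is correct and takes essentially the same route as the paper: the paper runs the same tree DP with the same three profile types (teams per timeslot for (a), person-hours per extinction class for (b), subsets of team--timeslot cells for (c)), merely phrased dually in terms of the resources handed \emph{down} to a subtree rather than those consumed by it, and the closure property you flag as the main obstacle for (a) is exactly the content of \Cref{lem:scheduleCondition} and does hold. The only cosmetic difference is that the paper handles internal vertices with more than two children via an auxiliary table indexed by the first $i$ children, whereas your recurrence as written assumes a binary vertex; this is routine to repair and does not change the stated running times.
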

\begin{proofm}{of \Cref{prop:T+maxr}\ref{prop:s-T^maxr}}
	\proofpara{Table definition}
	Let \Instance be an instance of \tPDs.
	Let $\vec{a} = (a_1,\dots,a_{\max_{\ex}})$ be a $\max_{\ex}$-dimensional vector with entries $a_i$ in $[|T|]_0$.
	
	We define a dynamic programming algorithm with a table $\DP$ in which in entry $\DP[v,\vec{a},b]$ with $b\in \{0,1\}$ we store 0 if $b=0$;
		if $b = 1$, then we store the maximum value 
		of $\PDgiveTree{\Tree_v}(\{S\})$ for a non-empty subset of taxa $S$ below $v$
 that can be saved when having $a_j$ available teams in timeslot $j\in [\max_{\ex}]$ (that is, such that $\sum_{x \in S \cap Y_j} \leq a_j$ for all $j \in [\max_{\ex}]$).
	If there is no such non-empty $S$, we store $-\infty$.
	We define an auxiliary table $\DP_1$ in which in entry $\DP[v,\vec{a},b,i]$ we only consider the first $i$ children of~$v$.
	
	\proofpara{Algorithm}
	For each taxon $x$ we store $\DP[x,\vec{a},b] = 0$ if $\sum_{i=1}^{\ex(x)} a_i \ge \ell(x)$ or if~$b=0$.
	Otherwise, we store $\DP[x,\vec{a},1] = -\infty$.
	
	Let $v$ be an internal vertex with children $u_1,\dots,u_z$.
	We define $\DP_1[v,\vec{a},b,1] := \DP[u_1,\vec{a},b] + \w(v u_1) \cdot b$
	and we compute further values with the recurrence
	\begin{equation}
		\label{eqn:s-T^maxr}
		\DP_1[v,\vec{a},b,i+1] =
		\max_{\vec{d}, b_1,b_2}
		\DP_1[v,\vec{d},b_1,i] + \DP[u_{i+1},\vec{a}-\vec{d},b_2] + 
		\w(v u_i) \cdot b_2.
	\end{equation}
	Here, we select $\vec{d}$ such that $d_j\in [a_j]_0$ for each $j\in [\max_{\ex}]$ and $b_1,b_2$ are selected to be in $[b]_0$.
	We finally set $\DP[v,\vec{a},b] := \DP_1[v,\vec{a},b,z]$.

	We return \yes, if $\DP[\rho,\vec{a}^*,1] \ge D$ where $\rho$ is the root of the given phylogenetic tree \Tree and $a_i^*$ is the number of teams $t_j = (s_j,e_j)$ with $s_j< i\le e_j$. Otherwise, we return \no.

	\proofpara{Correctness}
	In vector $\vec{a}^*$ the number of available teams per timeslot is stored, such that \Instance is a \yes-instance if and only if $\DP[\rho,\vec{a}^*,1] \ge D$. It remains to show that $\DP$ stores the intended value.
	
	If $x$ is a leaf then the subtree $\Tree_x$ rooted at $x$ does not contain edges and so $\PDgiveTree{\Tree_x}(\{x\}) = \PDgiveTree{\Tree_x}(\emptyset) = 0$.
	As $S=\{x\}$ is the only non-empty set of taxa in $\Tree_x$, in the case that $b=1$, we need to ensure that in $\vec{a}$ enough person-hours are available for $\{x\}$, which is $\sum_{i=1}^{\ex(x)} a_i \ge \ell(x)$. Thus, the basic case is correct.
	
	Let $v$ be a vertex with children $u_1,\dots,u_z$.
	To see that $\DP_1[v,\vec{a},b,i]$ for $i\in [q]$ stores the correct value, observe that the diversity of edge $vu_i$ can be added if and only if at least one taxon in $\Tree_{u_i}$ survives, which is happening if and only if $b_2=1$ (or $b=1$ in the case of $i=1$).
	Further, the available person-hours at $v$ can naturally be divided between the children $v$.
	Therefore, $\DP_1$ stores the correct value.
	The correctness of $\DP[v,\vec{a},b]$ directly follows from the correctness of $\DP_1[v,\vec{a},b,q]$.

	\proofpara{Running time}
	The table $\DP$ contains $n \cdot (|T| + 1)^{\max_{\ex}} \cdot 2$ entries.
	For a leaf $x$ the value of $\DP[x,\vec{a},1]$ can be computed by computing $\sum_{i=1}^{\ex(x)} a_i$ in $\Oh(\log(H_{\max_{\ex}}))$ time.
	For an internal vertex $v$ with $q$ children, we need to copy the value of $\DP_1[v,\vec{a},b,q]$ in $\Oh(\log(D))$ time.
	
	The table $\DP_1$ contains $n \cdot (|T| + 1)^{\max_{\ex}} \cdot 2$ entries as well, as each vertex is child of exactly one internal vertex.
	In Recurrence~(\ref{eqn:s-T^maxr}), there are $\Oh((|T| + 1)^{\max_{\ex}})$ options to choose $\vec{d}$ and at most $4$ options to choose $b_1$ and $b_2$,
	and so the values of the entries can be computed in $\Oh((|T| + 1)^{\max_{\ex}} \cdot n \cdot \log(D))$ time.
	
	Finally, we need to iterate over the options for $\vec{a^*}$.
	Altogether we can compute a solution for an instance of \tPDs in $\Oh((|T| + 1)^{2\cdot\max_{\ex}} \cdot n \cdot \log(D))$ time.
	
	Observe that we do not need to name the additional $\log(H_{\max_{\ex}}) \le \log(|T|\cdot \max_{\ex})$ factor of the basic case, as $(|T| + 1)^{\max_{\ex}}$ is bigger.
\end{proofm}
\begin{proofm}{of \Cref{prop:T+maxr}\ref{prop:s-T*maxr}}
	\proofpara{Table definition}
	Let $\vec{a} = (a_1,\dots,a_{\var_{\ex}})$ be a $\var_{\ex}$-dimensional vector with values $a_i$ in $[H_i]_0$.
	
	We define a dynamic programming algorithm with a table $\DP$ in which in entry $\DP[v,\vec{a},b]$ with $b\in \{0,1\}$ we store 0 if $b=0$;
 	if $b = 1$, then we store the maximum value of $\PDgiveTree{\Tree_v}(\{S\})$ for a subset of taxa $S$ below $v$ such that $\sum_{x \in S \cap Z_j} \leq a_j$ for all $j \in [\var_{\ex}]$.
	We define an auxiliary table $\DP_1$ in which in entry $\DP[v,\vec{a},b,i]$ we only consider the first $i$ children of $v$.
	
	\proofpara{Algorithm}
	For each leaf $x$ we store $\DP[x,\vec{a},b] = 0$ if $b=0$ or $a_i \ge \ell(x)$ for each $i \ge \ex(x)$.
	Otherwise, we store $\DP[x,\vec{a},1] = -\infty$.
	
	Let $v$ be an internal vertex with children $u_1,\dots,u_z$.
	We define $\DP_1[v,\vec{a},b,1] := \DP[u_1,\vec{a},b] + \w(v u_1) \cdot b$
	and we compute further values with the recurrence
	\begin{equation}
		\label{eqn:T*maxr}
		\DP_1[v,\vec{a},b,i+1] =
		\max_{\vec{d}, b_1,b_2} \DP_1[v,\vec{d},b_1,i] + \DP[u_{i+1},\vec{a}-\vec{d},b_2] + 
		\w(v u_i) \cdot b_2.
	\end{equation}
	Here, we select $\vec{d}$ such that $d_j\in [a_j]_0$ for each $j\in [\max_{\ex}]$ and $b_1,b_2$ are selected to be in $[b]_0$.
	We finally set $\DP[v,\vec{a},b] := \DP_1[v,\vec{a},b,z]$.

	We return \yes, if $\DP[\rho,\vec{a}^*,1] \ge D$ where $\rho$ is the root of the given phylogenetic tree \Tree and $a_i^* = H_i$. Otherwise, we return \no.

\proofpara{Correctness}
	In the basic case, it is enough to check whether $a_i \ge \ell(x)$ for $i\ge \ex(x)$.
	A visualization is given in \Cref{fig:example-4b}.
	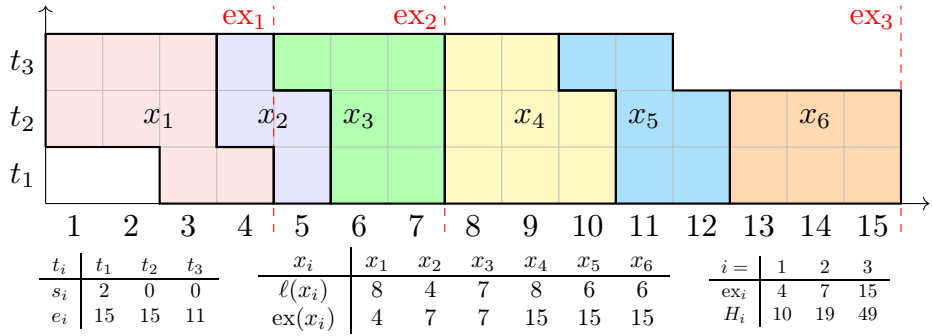
\begin{figure}[t]
		\centering
		\begin{tikzpicture}[scale=0.75,every node/.style={scale=1.2}]
			\tikzstyle{txt}=[circle,fill=none,draw=none,inner sep=0pt]
			
			\fill[red!10] (0,1) rectangle (3,3);
			\fill[red!10] (2,0) rectangle (4,1);
			\node[txt] at (2,1.5) {$x_1$};
			
			\fill[blue!10] (3,1) rectangle (4,3);
			\fill[blue!10] (4,0) rectangle (5,2);
			\node[txt] at (4,1.5) {$x_2$};
			
			\fill[green!30] (4,2) rectangle (5,3);
			\fill[green!30] (5,0) rectangle (7,3);
			\node[txt] at (5.5,1.5) {$x_3$};
			
			\fill[yellow!30] (7,0) rectangle (9,3);
			\fill[yellow!30] (9,0) rectangle (10,2);
			\node[txt] at (8.5,1.5) {$x_4$};
			
			\fill[cyan!30] (9,2) rectangle (11,3);
			\fill[cyan!30] (10,0) rectangle (12,2);
			\node[txt] at (10.5,1.5) {$x_5$};
			
			\fill[orange!30] (12,0) rectangle (15,2);
			\node[txt] at (13.5,1.5) {$x_6$};

			\draw[gray!50] (0,1) -- (15,1);
			\draw[gray!50] (0,2) -- (15,2);
			\draw[gray!50] (0,3) -- (11,3);
			
			\draw[gray!50] (1,1) -- (1,3);
			
			\foreach \i in {2,...,11}
				\draw[gray!50] (\i,0) -- (\i,3);
			
			\foreach \i in {12,...,15}
				\draw[gray!50] (\i,0) -- (\i,2);
				
			\foreach \i in {1,...,15}
				\node[txt] at (\i-.5,-.4) {\i};
				
			\foreach \i in {1,...,3}
				\node[txt] at (-.4,\i-.5) {$t_\i$};
			
			\draw[->] (0,0) -- (0,3.5);
			\draw[->] (0,0) -- (15.5,0);
			
			\foreach \i in {4,7,15}
				\draw[dashed,red] (\i,-.5) -- (\i,3.5);
			
			\node[txt,red] at (3.5,3.25) {$\ex_1$};
			\node[txt,red] at (6.5,3.25) {$\ex_2$};
			\node[txt,red] at (14.5,3.25) {$\ex_3$};
			
			\draw[thick] (0,1) -- (0,3) -- (3,3) -- (3,1) -- (4,1) -- (4,0) -- (2,0) -- (2,1) -- (0,1);
			\draw[thick] (3,3) -- (3,1) -- (4,1) -- (4,0) -- (5,0) -- (5,2) -- (4,2) -- (4,3) -- (3,3);
			\draw[thick] (5,0) -- (5,2) -- (4,2) -- (4,3) -- (7,3) -- (7,0) -- (5,0);
			\draw[thick] (7,3) -- (7,0) -- (10,0) -- (10,2) -- (9,2) -- (9,3) -- (7,3);
			\draw[thick] (10,0) -- (10,2) -- (9,2) -- (9,3) -- (11,3) -- (11,2) -- (12,2) -- (12,0) -- (10,0);
			\draw[thick] (12,2) -- (12,0) -- (15,0) -- (15,2) -- (12,2);
		\end{tikzpicture}
		\resizebox{.2\columnwidth}{!}{%
			\begin{tabular}{c|ccc}
				$t_i$ & $t_1$ & $t_2$ & $t_3$\\
				\hline
				$s_i$ & 2 & 0 & 0\\
				$e_i$ & 15 & 15 & 11
			\end{tabular}
		}
		\;\;
		\resizebox{.45\columnwidth}{!}{%
			\begin{tabular}{c|cccccc}
				$x_i$ & $x_1$ & $x_2$ & $x_3$ & $x_4$ & $x_5$ & $x_6$\\
				\hline
				$\ell(x_i)$ & 8 & 4 & 7 & 8 & 6 & 6\\
				$\ex(x_i)$ & 4 & 7 & 7 & 15 & 15 & 15
			\end{tabular}
		}
		\;\;
		\resizebox{.2\columnwidth}{!}{%
			\begin{tabular}{c|ccc}
				$i=$ & $1$ & $2$ & $3$\\
				\hline
				$\ex_i$ & 4 & 7 & 15\\
				$H_i$ & 10 & 19 & 49
			\end{tabular}
		}
		\caption{This is a hypothetical schedule of taxa $x_1,\dots,x_6$.
			This schedule is only possible, as $x_2$ is already started before $\ex_1$.
			Thus, it is not sufficient to assign $x_2,x_3\in Y_2$ the person-hours between $\ex_1$ and $\ex_2$, which are $H_2-H_1=9$.}
		\label{fig:example-4b}
	\end{figure}%

	This algorithm is similar to the algorithm in \Cref{prop:T+maxr}\ref{prop:s-T^maxr}.
	Instead of remembering the available teams each timeslot, in $\vec{a}$,
	we store the available person-hours per unique remaining time.
	Therefore the correctness proof is analogous to the correctness proof of \Cref{prop:T+maxr}\ref{prop:s-T^maxr}.

\proofpara{Running time}
	The table $\DP$ contains $\Oh(n \cdot (H_{\max_{\ex}})^{\var_{\ex}})$ entries. Each of them can be computed in $\Oh(\log(D))$ time.
	Also the table $\DP_1$ contains $\Oh(n \cdot (H_{\max_{\ex}})^{\var_{\ex}})$ entries.
	In Recurrence~(\ref{eqn:T*maxr}), there are $\Oh((H_{\max_{\ex}})^{\var_{\ex}})$ options to choose $\vec{d}$ and $4$ to choose $b_1$ and $b_2$,
	and so the value of each entry can be computed in $\Oh((H_{\max_{\ex}})^{\var_{\ex}} \cdot n \cdot \log(D))$ time.
	Altogether we can compute a solution for \tPDs in $\Oh((H_{\max_{\ex}})^{2\var_{\ex}} \cdot n \cdot \log(D))$~time.
\end{proofm}
\begin{proofm}{of \Cref{prop:T+maxr}\ref{prop:ws-T+maxr}}
	\proofpara{Table definition}
	Let $\vec{A} = (A_1,\dots,A_{\max_{\ex}})$ be a $\max_{\ex}$-tuple in which $A_i$ are subsets of $T$.
	
	We define a dynamic programming algorithm with a table $\DP$ in which in entry $\DP[v,\vec{A},b]$ with $b\in \{0,1\}$ we store  0 if $b=0$;
 	if $b = 1$, then we store the maximum value of $\PDgiveTree{\Tree_v}(\{S\})$ for a subset of taxa $S$ below $v$ that can be saved using only teams from $A_j$ 
	at each timeslot $j\in [\max_{\ex}]$.
	We define an auxiliary table $\DP_1$ in which in entry $\DP[v,\vec{a},b,i]$ we only consider the first $i$ children of $v$.
	
	\proofpara{Algorithm}
	For each leaf $x$ we store $\DP[x,\vec{A},b] = 0$ if $b=0$ or
	if there is a team $t_j\in T$ and an integer $i\in [\max_{\ex} - \ell(x)]_0$ such that $t_j \in A_{i+1} \cap A_{i+2} \cap \dots \cap A_{i + \ell(x)}$.
	Otherwise, we store $\DP[x,\vec{A},1] = -\infty$.
	
	Let $v$ be an internal vertex with children $u_1,\dots,u_z$.
	We define $\DP_1[v,\vec{A},b,1] := \DP[u_1,\vec{A},b] + \w(v u_1) \cdot b$
	and we compute further values with the recurrence
	\begin{equation}
		\label{eqn:T+maxr}
		\DP_1[v,\vec{A},b,i+1] =
		\max_{\vec{B}, b_1,b_2} \DP_1[v,\vec{B},b_1,i] + \DP[u_{i+1},\vec{A}-\vec{B},b_2] + 
		\w(v u_i) \cdot b_2.
	\end{equation}
	Here, we select $\vec{B}$ such that $B_j\subseteq A_j$ for each $j\in [\max_{\ex}]$ and $b_1,b_2$ are selected to be in $[b]_0$.
	We finally set $\DP[v,\vec{A},b] := \DP_1[v,\vec{A},b,z]$.

	We return \yes, if $\DP[\rho,\vec{A^*},1] \ge D$ where $\rho$ is the root of the given phylogenetic tree \Tree and $A_i^*$ for $i\in[\max_{\ex}]$ contains team $t_j =(s_j,e_j) \in T$ if and only if~$s_j<i\le e_j$. Otherwise, we return \no.

	\proofpara{Correctness}
	If $\DP$ stores the intended value, $\Instance$ is a \yes-instance if and only if $\DP[\rho,\vec{A^*},1] \ge D$, where $\rho$ is the root of the phylogenetic $X$-tree \Tree.
	It remains to prove that $\DP$ stores the intended value.
	
	For a leaf $x$, if $\DP[x,\vec{A},b]=0$ then either $b=0$ or there are $\ell(x)$ consecutive $A_i$s in which a team $t_j$ occurs. Thus, $x$ can be saved and $\PDgiveTree{\Tree_x}(\{x\}) = 0$.
	Likewise we show it the other way round and the basic case is correct.
	
	The correctness of the recurrence can be shown analogously to the correctness of \Cref{prop:T+maxr}\ref{prop:s-T^maxr}.

	\proofpara{Running time}
	Observe that for each $i\in [\max_{\ex}]$ there are $2^{|T|}$ options to select~$A_i$. Thus, there are $(2^{|T|})^{\max_{\ex}} = 2^{|T| \cdot \max_{\ex}}$ options for~$\vec{A}$.
	Therefore, the table $\DP$ contains $\Oh(n \cdot 2^{|T| \cdot \max_{\ex}})$ entries of which the internal vertices can be computed in $\Oh(\log(D))$ time.
	For a leaf $x$, we need to iterate over the teams and the timeslots to check whether there is a team contained in $\ell(x)$ consecutive $A_i$s. Thus, we can compute all entries of $\DP$ in $\Oh(2^{|T| \cdot \max_{\ex}} \cdot |T|^{1+\max_{\ex}} \cdot n)$.
	
	Also the table $\DP_1$ contains $\Oh(n \cdot 2^{|T| \cdot \max_{\ex}})$ entries.
	In Recurrence~(\ref{eqn:T+maxr}) observe that $A_j'$ is a subset of $A_j$, so there are $\Oh(3^{|T| \cdot \max_{\ex}})$ viable options for $\vec{A}$ and $\vec{B}$. Thus, all entries of $\DP_1$ are computed in $\Oh(3^{|T| \cdot \max_{\ex}} \cdot n \cdot \log(D))$ time.
	
	Subsequently, because $|T|^{1+\max_{\ex}} \le 1.5^{|T|\cdot\max_{\ex}} \cdot |T|$ we can compute a solution for an instance of \tPDws in $\Oh(3^{|T| \cdot \max_{\ex}} \cdot |T| \cdot n \cdot \log(D))$ time.
\end{proofm}

\subsection{Few time lengths and remaining times}
The scheduling problem $1||\sum w_j(1-U_j)$ is \FPT with respect to $\var_\ell + \var_{\ex}$~\cite{hermelin}.
In this subsection, we describe a dynamic programming algorithm (similar to the approach in \cite[Thm.~3]{GNAP}) to prove that \tPDs is at least \XP when parameterized by $\var_\ell + \var_{\ex}$.
Here, $\var_\ell$ and $\var_{\ex}$ are the number of unique needed time lengths and unique remaining times, respectively.

Let \Instance be an instance of \tPDs,
let $\ell(X) = \{\ell_1,\dots,\ell_{\var_\ell}\}$ for each $\ell_i < \ell_{i+1}$ with $i\in [\var_\ell-1]$ be the unique time lengths, and let $\ex(X) = \{\ex_1,\dots,\ex_{\var_{\ex}}\}$ with $\ex_i < \ex_{i+1}$ for each $i\in [\var_{\ex}-1]$ be the unique remaining times.

\begin{proposition}
	\label{prop:varl+varr}
	\tPDs can be solved in $\Oh(n^{2\var_\ell \cdot \var_{\ex}} \cdot (n + \var_\ell\cdot \var_{\ex}^2))$ time.
\end{proposition}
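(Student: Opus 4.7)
The plan is to design a bottom-up dynamic programming algorithm on $\Tree$ in which a partial solution is characterized not by the specific taxa chosen, but by how many taxa of each \emph{type} are chosen, where two taxa have the same type if they share both their rescue length and their extinction time. Since there are at most $\var_\ell \cdot \var_{\ex}$ such types, and the count of each type in any candidate solution lies in $[n]_0$, there are at most $(n+1)^{\var_\ell \cdot \var_{\ex}}$ distinct \emph{demand vectors} $\vec{n} = (n_{i,j})_{(i,j)\in [\var_\ell]\times [\var_{\ex}]}$. For each vertex $v$ of $\Tree$ and each demand vector $\vec{n}$, the table entry $\DP[v,\vec{n}]$ will store the maximum value of $\PDgiveTree{\Tree_v}(S)$ over subsets $S$ of taxa in $\off(v)$ with $|S\cap X_{i,j}| = n_{i,j}$ for every $(i,j)$, where $X_{i,j} := \{x \in X : \ell(x)=\ell_i, \ex(x)=\ex_j\}$. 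Entries that correspond to no admissible $S$ are set to $-\infty$.

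For the base case, each leaf $x \in X_{i,j}$ contributes $\DP[x,\vec{0}] = 0$ and $\DP[x, \vec{e}_{i,j}] = 0$, where $\vec{e}_{i,j}$ is the unit vector in coordinate $(i,j)$; all other entries at leaves are $-\infty$. For an internal vertex $v$ with children $u_1,\dots,u_z$, I process the children one at a time using an auxiliary table $\DP_1[v,\vec{n},k]$ that considers only the first $k$ children of $v$. The recurrence is
\begin{equation*}
\DP_1[v,\vec{n},k+1] = \max_{\vec{n}'\le \vec{n}} \Bigl( \DP_1[v,\vec{n}-\vec{n}',k] + \DP[u_{k+1},\vec{n}',0] + \w(vu_{k+1})\cdot [\vec{n}'\ne \vec{0}] \Bigr),
\end{equation*}
where $[\cdot]$ is the Iverson bracket, capturing that the contribution of edge $vu_{k+1}$ to the phylogenetic diversity is included precisely when at least one taxon is selected in the subtree $\Tree_{u_{k+1}}$. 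We then set $\DP[v,\vec{n}] := \DP_1[v,\vec{n},z]$.

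At the root $\rho$, I iterate over all demand vectors $\vec{n}$ and return \yes if there exists a vector $\vec{n}$ for which (i) $\DP[\rho,\vec{n}] \ge D$ and (ii) the candidate set it represents is schedulable, i.e., $\sum_{j' \le j} \sum_{i} n_{i,j'} \cdot \ell_i \le H_j$ for each $j \in [\var_{\ex}]$ (by \Cref{lem:scheduleCondition}, this condition depends only on the demand vector). Correctness follows because $\DP[v,\vec{n}]$ depends only on the multiset of selected types below $v$, not on the specific taxa chosen; any two subsets with identical type counts contribute identically to the phylogenetic diversity of the subtree and identically to the scheduling constraints.

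For the running time, both $\DP$ and $\DP_1$ contain $\Oh(n \cdot n^{\var_\ell \cdot \var_{\ex}})$ entries. The expensive step is the convolution in the recurrence above: for each entry of $\DP_1$ we iterate over the $\Oh(n^{\var_\ell \cdot \var_{\ex}})$ choices of $\vec{n}'\le \vec{n}$, giving $\Oh(n^{2\var_\ell \cdot \var_{\ex}}\cdot n)$ total work for the DP. Checking schedulability at the root takes $\Oh(\var_\ell\cdot \var_{\ex}^2)$ per demand vector, summing to $\Oh(n^{\var_\ell \cdot \var_{\ex}} \cdot \var_\ell \cdot \var_{\ex}^2)$. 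The main subtlety—and the step I would be most careful about—is correctly handling the edge contribution $\w(vu_{k+1}) \cdot [\vec{n}' \ne \vec{0}]$ when combining children, since the same edge must not be double-counted and must not be omitted whenever \emph{any} descendant taxon is chosen; the indicator bracket in the recurrence, together with the fact that disjoint subtrees have disjoint edge sets, resolves this.
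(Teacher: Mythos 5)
Your proposal is correct and follows essentially the same route as the paper: a bottom-up DP indexed by a $\var_\ell\times\var_{\ex}$ matrix of per-type counts, with an auxiliary child-by-child combination table, the edge weight $\w(vu)$ added exactly when the child's subtree is nonempty, and the scheduling feasibility check deferred to the root via \Cref{lem:scheduleCondition}. The only (cosmetic) differences are that you use exact type counts with an Iverson bracket where the paper uses ``at most'' budgets plus an explicit boolean flag $b$ to track nonemptiness; both yield the stated running time.
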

\begin{proof}
	\proofpara{Table definition}
	By~$\vec{A}$ we denote an integer-matrix of size $\var_\ell \times \var_{\ex}$.
	Further, we denote $A_{i,j}$ to be the entry in row $i\le \var_\ell$ and column $j\le \var_{\ex}$ of~$\vec{A}$.
	With~$\vec{A}_{(i,j) +z}$ we denote the matrix resulting from~$\vec{A}$ in which in row $i$ and column $j$, the value~$z$ is added.
	
	We define a dynamic programming algorithm with table $\DP$ in which entry $\DP[v,\vec A,b]$ stores $0$ if $b=0$ or the maximum diversity that can be achieved in the subtree rooted at $v$ in which at most $A_{i,j}$ taxa $x$ are chosen with $\ell(x) = \ell_i$ and $\ex(x) = \ex_j$.
	We define an auxiliary table $\DP_1$ in which in entry $\DP[v,\vec{A},b,i]$ we only consider the first $i$ children of $v$.
	
	\proofpara{Algorithm}
	For each leaf $x$ with $\ell(x) = \ell_i$ and $\ex(x) = \ex_j$ we store $\DP[x,\vec A,b] = 0$ if $b=0$ or $A_{i,j} > 0$.
	Otherwise, we store $\DP[x,\vec A,b] = -\infty$.
	
	Let $v$ be an internal vertex with children $u_1,\dots,u_z$.
	We define $\DP_1[v,\vec{A},b,1] := \DP[u_1,\vec{A},b] + \w(v u_1) \cdot b$
	and we compute further values with the recurrence
	\begin{equation}
		\label{eqn:varl+valr}
		\DP_1[v,\vec A,b,i+1] = \max_{\vec{B},b_1,b_2} \DP_1[v,\vec{B},b_1,i] + \DP[u_{i+1},\vec{A}-\vec{B},b_2] + \w(v u_{i+1})\cdot b_2.
	\end{equation}
	Here, we select $\vec{B}$ such that $B_{i,j}\le A_{i,j}$ for each $i\le [\var_\ell]$, $j\le [\var_{\ex}]$
	and $b_1,b_2$ are selected to be in $[b]_0$.
	We finally set $\DP[v,\vec{A},b] := \DP_1[v,\vec{A},b,z]$.
	
	We return \yes if $\DP[\rho,\vec{A}] \ge D$ for the root $\rho$ of \Tree and a matrix $\vec{A}$ such that $(\ell_1,\dots,\ell_{\var_\ell}) \cdot \vec A \cdot \vec 1_i \le H_i$ for each $i\in [\var_{\ex}]$.
	Here, $\vec 1_i$ is a $\var_{\ex}$-dimensional vector in which the first $i$ positions are 1 and the remaining are 0 for each $i\in [\var_i]$.
	
	\proofpara{Correctness}
	By the definition of $\DP$, we see that an instance is a \yes-instance of \tPDs if and only if $\DP[\rho,\vec{A}] \ge D$ for the root $\rho$ of \Tree and a matrix $\vec{A}$ such that $(\ell_1,\dots,\ell_{\var_\ell}) \cdot \vec A \cdot \vec 1_i \le H_i$ for each $i\in [\var_{\ex}]$.
	It only remains to show that $\DP$ stores the correct value.
	
	Observe that a leaf $x$ can be saved if and only if we are allowed to save a taxon with $\ell(x) = \ell_i$ and $\ex(x) = \ex_j$.
	Thus $\DP[x,\vec{A},b]$ should store 0 if $b=0$ or if $A_{i,j}\ge 1$.
	The other direction follows as well.
	
	The correctness of the recurrence can be shown analogously to the correctness of \Cref{prop:T+maxr}\ref{prop:s-T^maxr}.
	
	\proofpara{Running time}
	The matrix $\vec{A}$ contains $\var_\ell\cdot \var_{\ex}$ entries with integers in $[n]_0$.
	We can assume that if $A_{i,j} = n$ for some $i\le [\var_\ell]$, $j\le [\var_{\ex}]$,
	then $A_{p,q} = 0$ for each $p\ne i$ and $q\ne j$.
	Such that there are $n^{\var_{\ex}\cdot \var_{\ex}} + n$ options for a matrix $\vec{A}$.
	Therefore, the tables $\DP$ and $\DP_1$ contain $\Oh(n\cdot n^{\var_{\ex}\cdot \var_{\ex}})$ entries.
	
	Each entry in $\DP$ can be computed in linear time.
	To compute entries of~$\DP_1$, in Recurrence~(\ref{eqn:varl+valr}) we iterate over possible matrices $\vec{B}$ and $b_1,b_2$.
	Therefore we can compute each entry in $\DP_1$ in $\Oh(n^{2\var_{\ex}\cdot \var_{\ex}+1})$ time.
	
	To initialize we again iterate over possible matrices $\vec{A}$ and compute whether $(\ell_1,\dots,\ell_{\var_\ell}) \cdot \vec A \cdot \vec 1_i \le H_i$ in $\Oh(n^{\var_{\ex}\cdot \var_{\ex}} \cdot \var_\ell\cdot \var_{\ex}^2)$ time.
	That proves the overall running time.
\end{proof}

\subsection{Pseudo-polynomial running time on stars}
In this final subsection, we show that \tPDs can be solved in pseudo-polynomial time if the input tree is a star.
In the light of \Cref{prop:ws-Dbar}, such a result is unlikely to hold for \tPDws. 

\begin{proposition}
	\label{prop:stars}
	If the given input tree is a star, \tPDs can be solved
	\begin{itemize}
		\itemsep-.35em
		\item in $\Oh((\max_{\ex})^2 \cdot n \cdot \log(D))$ time,
		\item in $\Oh(D^2 \cdot n \cdot \log(H_{\max_{\ex}}))$ time,
		\item in $\Oh(\Dbar^2 \cdot n \cdot \log(H_{\max_{\ex}}))$ time, or
		\item in $\Oh((\max_\w)^2 \cdot n^3 \cdot \log(H_{\max_{\ex}}))$ time. $\max_\w$ is the maximum edge weight.
	\end{itemize}
\end{proposition}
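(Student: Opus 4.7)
Since $\Tree$ is a star with root $\rho$, we have $\PD(A)=\sum_{x\in A}\w(\rho x)$ for every $A\subseteq X$. By Lemma~\ref{lem:scheduleCondition}, the problem reduces to selecting $A\subseteq X$ maximising $\PD(A)$ subject to $\sum_{x\in A\cap Z_j}\ell(x)\le H_j$ for every $j\in[\var_{\ex}]$. Without loss of generality we may discard any taxon with $\ell(x)>H_{\ex(x)}$, since such a taxon cannot appear in any feasible solution. All four bounds are obtained by dynamic programming on the taxa sorted by non-decreasing extinction time, processing the classes $Y_1,\dots,Y_{\var_{\ex}}$ one at a time; the four bounds correspond to four different choices of ``accounting axis'' for the inner DP dimension.

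For the bound $\Oh(D^2\cdot n\cdot\log H_{\max_{\ex}})$, let $\DP[j][d]$ be the minimum total rescue length of a subset $A\subseteq Z_j$ with $\PD(A)=d$ satisfying $\sum_{x\in A\cap Z_{j'}}\ell(x)\le H_{j'}$ for all $j'\le j$. For each class $Y_j$ one precomputes the single-class table $\phi_j(d')$, the minimum length of a subset of $Y_j$ with diversity $d'$, by a standard $\Oh(|Y_j|\cdot D)$ subset-sum DP. The outer tables are then combined through the convolutional recurrence
\begin{equation*}
\DP[j][d]\;=\;\min_{0\le d'\le d}\bigl(\DP[j-1][d-d']+\phi_j(d')\bigr),
\end{equation*}
discarding entries exceeding $H_j$. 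The answer is \yes iff $\DP[\var_{\ex}][d]$ is finite for some $d\ge D$. There are $\Oh(\var_{\ex}\cdot D)\le\Oh(nD)$ states and each transition runs in $\Oh(D\cdot\log H_{\max_{\ex}})$ time, which matches the stated bound. The $\Oh(\Dbar^2\cdot n\cdot\log H_{\max_{\ex}})$ bound is obtained by the dual DP: instead of accounting for the chosen set $A$, one accounts for the discarded set $B=X\setminus A$, observing that the constraints become $\PD(B)\le\Dbar$ and $\sum_{x\in B\cap Z_j}\ell(x)\ge\Hbar{j}$; running the analogous two-level recurrence with accounting axis $\bar d\in[0,\Dbar]$ finishes the argument. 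The $\Oh(\max_\w^2\cdot n^3\cdot\log H_{\max_{\ex}})$ bound is an immediate corollary of the $D$-bound, via the trivial estimate $D\le n\cdot\max_\w$.

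For the bound $\Oh((\max_{\ex})^2\cdot n\cdot\log D)$, we take the accounting axis to be the cumulative rescue length $h$, letting $\DP[j][h]$ record the maximum diversity achievable by a subset of $Z_j$ of total length exactly $h$, with the analogous two-level recurrence and per-class tables $\phi_j$ mapping length to maximum per-class diversity. The main obstacle here is showing that the effective range of $h$ may be reduced to $\Oh((\max_{\ex})^2)$, rather than the a priori bound $H_{\max_{\ex}}\le |T|\cdot\max_{\ex}$. The key idea is to pre-process the team set so that only $\Oh(\max_{\ex})$ effective teams need to be considered: since each team's availability interval $(s_i,e_i)$ is one of at most $\binom{\max_{\ex}+1}{2}=\Oh(\max_{\ex}^2)$ distinct choices, teams sharing such an interval may be merged on the star, and Lemma~\ref{lem:scheduleCondition} is used to argue that any ``surplus'' capacity beyond what the (at most $n$ retained) taxa can absorb does not enlarge the feasible set of subsets $A$. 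Once this reduction is in place, the two-level recurrence yields the claimed running time, with the $\log D$ factor coming from arithmetic on diversity values.
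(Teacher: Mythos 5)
Your overall architecture --- reduce via Lemma~\ref{lem:scheduleCondition} to a prefix-constrained knapsack on the star, build a per-class table for each $Y_j$, and combine the classes $Y_1,\dots,Y_{\var_{\ex}}$ by an outer convolution, switching the accounting axis between cumulative length, diversity $d\le D$, and discarded diversity $\bar d\le\Dbar$ --- is exactly the paper's proof: there the per-class tables are phrased as \KP{} instances with weights $\ell(x)$ and profits $\w(\rho x)$, and your outer recurrence is the paper's table $\DP_1$. For the $D$, $\Dbar$ and $\max_\w$ bullets your argument is sound; the min-length (resp.\ max-length) dominance invariants you state are the right ones and correspond to the capacity-, profit- and co-profit-indexed \KP{} variants the paper invokes, and $D\le n\cdot\max_\w$ gives the last bullet in both treatments.

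The $(\max_{\ex})^2$ bullet is where your proposal has a genuine gap. First, the preprocessing you sketch does not deliver the claimed range reduction: merging teams with identical availability intervals leaves every $H_j$ unchanged (the merged super-team just carries the summed capacity), and capping $H_j$ at $\sum_{x\in Z_j}\ell(x)$ only bounds the length axis by $\ell(X)$, which can be of order $n\cdot\max_\ell$; moreover in \tPDs a single taxon may have $\ell(x)>\max_{\ex}$, since several teams may collaborate on it, so not even $\max_\ell\le\max_{\ex}$ is available. Second, even granting a length axis of range $R=\Oh((\max_{\ex})^2)$, your own two-level recurrence costs $\Oh(R)$ per entry over $\Oh(\var_{\ex}\cdot R)$ entries, i.e.\ $\Oh(\var_{\ex}\cdot(\max_{\ex})^4)$ arithmetic operations --- a factor $(\max_{\ex})^2$ above the claimed bound; to match it you would need the length axis itself to have range $\Oh(\max_{\ex})$. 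For comparison, the paper's proof of this bullet simply restricts the capacity index to $C\in[\ex_i]$ and tests $\DP_1[\var_{\ex},\max_{\ex}]\ge D$, i.e.\ it takes the relevant capacity to be $\max_{\ex}$ rather than $H_{\max_{\ex}}$. The obstacle you identified --- that with $|T|>1$ a feasible saved set can have total rescue length well above $\max_{\ex}$ --- is real and worth raising, but your repair neither establishes the reduction nor, if it did, would it yield the stated running time.
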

\begin{proof}
	Let $\Instance$ be an instance of \tPDs in which the given phylogenetic tree is a star.
	With the help of \KP, we solve the problem separately on the different classes $Y_i$ and use a dynamic programming algorithm to combine the solutions.
	In \KP, we are given a set of items $N$, each with a weight $\w_i$ and a profit $p_i$, a limited capacity $C$, and a desired profit $P$. The question is whether we can select a subset of the items $S$ such that the sum of profits $p_i$ in $S$ is at least $P$ while the sum of weights in not exceeding $C$.
	
	A solution for an instance of \KP can be found in~$\Oh(C \cdot |N| \cdot \log(P))$ time~\cite{weingartner,rehs} via a dynamic programming algorithm, indexing solutions by capacity $C' \leq C$ and number of items $i \leq |N|$, and storing the maximum profit for a subset of the first $i$ items that has total weight $\leq w$.
	We note that such an algorithm also computes the maximum profit for every capacity $C'\leq C$, and so we may assume that in $\Oh(C \cdot |N| \cdot \log(P))$ time it is possible to construct a table $DP_a$ such that $DP_a[i,C']$ stores the maximum $P$ such that $(Y_i,\w_j',p_j,C,P)$ is a \yes-instance of \KP. 
	
	Along similar lines, there exists a dynamic programming algorithm with running time~$\Oh(P \cdot |N|^2 \cdot \log(C))$ time~\cite{weingartner,rehs}, where solutions are indexed by $P'\leq P$ and  $i \leq |N|$, and we store the minimum capacity $C'$ for which there is a subset of the first $i$ items with total profit $\geq P'$.
	Adapting that algorithm, we can also get a running time of~$\Oh(\Pbar \cdot |N|^2 \cdot \log(\Cbar))$, where $\overline P = \sum_{a_i\in N} p_i - P$ and $\Cbar = \sum_{a_i\in N} \w_i - C$. In such an algorithm we index solutions by $\Pbar' \leq \Pbar$ and $i \leq N$, and store the \emph{maximum} total weight of a subset of the first $i$ items whose total profit is \emph{at most} $\Pbar'$. If such a set exists with weight at least $\Cbar$ for $\Pbar' = \Pbar$, then the complement set is a solution for the \KP instance.

	\proofpara{Algorithms}
	We describe the algorithm with running time $\Oh((\max_{\ex})^2 \cdot n \cdot \log(D))$ and omit the similar cases for the other variants.
	
	In the following for each class $Y_i$ in \Instance, we consider an instance of \KP with items $Y_i$ in which $x_j\in Y_i$ with incoming edge $e_j$ has weight $\w_j':=\ell(x_j)$ and profit $p_j:=\w(e_j)$.
	Define a table $\DP_a$ in which for any $i \in [var_r]$ and $C \in [\ex_i]$,  $\DP_a[i,C]$ stores the maximum desired profit $P$ such that $(Y_i,\w_j',p_j,C,P)$ is a \yes-instance of \KP. 
	
	We next define a table $\DP_1$  in which we combine the sub-solutions on $Z_i$.
	As a base case, we store $\DP_1[1,C] := \DP_a[1,C]$ for each $C\in [\ex_1]$, $P\in [D]$, and $\Pbar\in [\Dbar]$.
	To compute further values, we can use the recurrence $\DP_1[i+1,C] = 
			\max_{C' \in [C]_0} \DP_1[i,C'] + \DP_a[i+1,C-C']$.
	Finally, we return \yes if $\DP_1[\var_{\ex},\max_{\ex}] \ge D$.
			

	\proofpara{Correctness}
	For convenience we will write $\w(S)$ to denote $\sum_{x\in S} \w(\rho x)$ for a set of taxa. Note that $\w(S) = \PD(S)$ as $\Tree$ is a star.
	The correctness of the values in $\DP_a$ follows from the correctness of the \KP-algorithms.
	Assume that for some $i\in [\var_{\ex}]$ the correct value is stored in $\DP_1[i,C]$ for each $C\in D$.
	
	Let $\DP_1[i+1,C]$ store $a\ge 0$. Then
	by construction 
	there is a $C'$ such that $a = a_1 + a_2 = \DP_1[i,C'] + \DP_a[i+1,C-C']$.
	Subsequently, there are sets $S_1\subseteq Z_i$ and $S_2\subseteq Y_{i+1}$ such that $\ell(S_1) + \ell(S_2) = C' + (C-C') = C$ and $\w(S_1) + \w(S_2) = a_1 + a_2 = a$.
	Therefore, $S := S_1 \cup S_2$ is a set with $\ell(S) = C$ and $\DP_1[i+1,C] = \w(S)$.
	
	To finish the correctness, let $S\subseteq Z_{i+1}$ be a set with $\ell(S) = C$.
	Define $S_1 := S\cap Z_{i}$ and $S_2 := S\cap Y_{i+1}$ and let $C'$ be $\ell(S_1)$.
	We conclude that~$\w(S) = \w(S_1) + \w(S_2) \ge \DP_1[i,C'] + \DP_a[i+1,C-C']$.
	
	\proofpara{Running time}
	Observe that the computation in \KP is also with a dynamic programming algorithm, such that all the values of the table $\DP_a$ are computed in $\Oh((\max_{\ex}) \cdot n \cdot \log(D))$ time.
	
	The table $\DP_1$ has $\var_{\ex}\cdot \max_{\ex}$ entries.
	To compute a value with the recurrence, we need to check the $\Oh(\max_{\ex})$ options to select $C$ and add two numbers of size at most $D$, and so we can compute all values of $\DP_1$ in $\Oh((\max_{\ex})^2 \cdot \var_{\ex} \cdot \log(D))$ time.
	
	For the other running times:
	Observe that the computation for \KP can be done in $\Oh(D^2 \cdot n \cdot \log(H_{\max_{\ex}}))$, and in $\Oh(\Dbar^2 \cdot n \cdot \log(H_{\max_{\ex}}))$ time.
	The rest follows analogously.
	
	We can assume that $D \le n \cdot \max_\w$, as we are dealing with a trivial \no-instance otherwise.
	Therefore we can solve \tPDs in $\Oh((n \cdot \max_\w)^2 \cdot n \cdot \log(H_{\max_{\ex}})) = \Oh((\max_\w)^2 \cdot n^3 \cdot \log(H_{\max_{\ex}}))$ time.
\end{proof}

\section{Conclusion}
\label{sec:discussion}
With \tPDs and \tPDws, we introduced two \NP-hard problems in which one can optimize phylogenetic diversity under the restriction that taxa have distinct extinction times.
We have proven that both problems are \FPT with respect to the target diversity $D$ and that \tPDs is also \FPT when parameterized by the acceptable loss of phylogentic diversity $\Dbar$.

As \tPDs is proven to be weakly \NP-hard, it remains an open question whether \tPDs is solvable in pseudo-polynomial time.
Indeed, we do not know if \tPDs or \tPDws can be solved in polynomial time even when the maximum time length needed to save a taxon is 2.
We note that the scheduling problem $1||\sum w_j (1-U_j)$ is \Wh 1-hard when parameterized by the unique number of processing times~\cite{heeger2024}; this implies that even on stars \tPDws is \Wh 1-hard when parameterized by the unique number of time lengths.

We ask whether the  $\Oh(2^{n})$ running time for \tPDs (\Cref{prop:X}) can be improved to $\Oh(2^{o(n)})$, or whether this bound can be shown to be tight under \SETH or \ETH.
It also remains an open question whether  \tPDs or \tPDws are \FPT with respect to $\max_{\ex}$.
We have not regarded kernelization algorithms. An interesting open question therefore is whether \tPDs or \tPDs have a kernelization of polynomial size with respect to $D$ or $\Dbar$.

\end{document}